\newcommand\map[3]{#1 : #2 \longrightarrow #3}
\newcommand\natr[3]{#1 : #2 \Longrightarrow #3}
\newcommand\pmap[2]{#1 : #2 \rightharpoonup \textbf{pSet}}
\newcommand\nat{\mathbb{N}}
\newcommand\hda{\textbf{HDA}_\textbf{L}}
\newcommand\phda{\textbf{pHDA}_\textbf{L}}
\newcommand\precub{\textbf{pCub}}
\newcommand\spine{\textbf{Spine}_\textbf{L}}
\newcommand\ps{\textbf{PS}_\textbf{L}}
\newcommand\tr{\textbf{Tr}_\textbf{L}}
\newcommand\set{\textbf{Set}}
\newcommand\unf{\text{unf}}
\newcommand\col[1]{\text{col} \, #1}
\newcommand\eqcl[1]{\langle #1 \rangle}
\newcommand\eqclp[1]{[ #1 ]}
\def\P{\mathcal{P}}
\def\M{\mathcal{M}}
\def\C{\mathcal{C}}
\definecolor{blackorkgreen}{rgb}{0.13,0.55,0.13}
\begin{document}
\title{Trees in Partial Higher Dimensional Automata\thanks{The author was supported by ERATO HASUO Metamathematics for Systems Design 27 Project (No. JPMJER1603), JST.}}
%
%
\author{J\'er\'emy Dubut\inst{1,2}}
\authorrunning{J. Dubut}
%
\institute{National Institute of Informatics, Tokyo, Japan\and
Japanese-French Laboratory for Informatics, Tokyo, Japan\\
\email{dubut@nii.ac.jp} }
\maketitle              
\begin{abstract}
In this paper, we give a new definition of partial Higher Dimension Automata using lax functors. This definition is simpler and more natural from a categorical point of view, but also matches more clearly the intuition that pHDA are Higher Dimensional Automata with some missing faces. We then focus on trees. Originally, for example in transition systems, trees are defined as those systems that have a unique path property. To understand what kind of unique property is needed in pHDA, we start by looking at trees as colimits of paths. This definition tells us that trees are exactly the pHDA with the unique path property modulo a notion of homotopy, and without any shortcuts. This property allows us to prove two interesting characterisations of trees: trees are exactly those pHDA that are an unfolding of another pHDA; and trees are exactly the cofibrant objects, much as in the language of Quillen's model structure. In particular, this last characterisation gives the premisses of a new understanding of concurrency theory using homotopy theory.

\keywords{Higher Dimensional Automata  \and trees \and homotopy theories.}
\end{abstract}

\section{Introduction}

Higher Dimensional Automata (HDA, for short), introduced by Pratt in \cite{pratt91}, are a geometric model of true concurrency. Geometric, because they are defined very similarly to simplicial sets, and can be interpreted as glueings of geometric objects, here hypercubes of any dimension. Similarly to other models of concurrency much as event structures \cite{nielsen81}, asynchronous systems \cite{shields85,bednarczyk87}, or transition systems with independence \cite{nielsen94}, they model true concurrency, in the sense that they distinguish interleaving behaviours from simultaneous behaviours. In \cite{vanglabbeek05}, van Glabbeek proved that they form the most powerful models of a hierarchy of concurrent models. In \cite{fahrenberg13}, Fahrenberg described a notion of bisimilarity of HDA using the general framework of open maps from \cite{joyal96}. If this work is very natural, it is confronted with a design problem: paths (or executions) cannot be nicely encoded as HDA. Indeed, in a HDA, it is impossible to model the fact that two actions \emph{must} be executed at the same time, or that two actions are executed at the same time but one \emph{must} start before the other. From a geometric point of view, those impossibilities are expressed by the fact that we deal with closed cubes, that is, cubes that must contain all of their faces. Motivated by those examples, Fahrenberg, in \cite{fahrenberg15}, extended HDA to partial HDA, intuitively, HDA with cubes with some missing faces. If the intuition is clear, the formalisation is still complicated to achieve: the definition from \cite{fahrenberg15} misses the point that faces can be not uniquely defined. This comes from the fact that Fahrenberg wanted to stick to the `local' definition of precubical sets, that is, that cubes must satisfy some local conditions about faces. As we will show, those local equations are not enough in the partial case. Another missed point is the notion of morphisms of partial HDA: as defined in \cite{fahrenberg15}, the natural property that morphisms map executions to executions is not satisfied. In Section \ref{sec:phda}, we address those issues by giving a new definition of partial HDA in terms of lax functors. This definition, similar to the presheaf theoretic definition of HDA, avoid the issues discussed above by considering global inclusions, instead of local equations. This illustrates more clearly the intuition of partial HDA being HDA with missing faces: we coherently replace sets and total functions by sets and partial functions. From this similarity with the original definition of HDA, we can prove that it is possible to complete a partial HDA to turn it into a HDA, by adding the missing faces, and from this completion, it is possible to define a geometric realisation of pHDA (which was impossible with Fahrenberg's definition).\\
The geometry of Higher Dimensional Automata, and more generally, of true concurrency, has been studied since Goubault's PhD thesis \cite{goubault95}. Since then, numerous pieces of work relating algebraic topology and true concurrency have been achieved (for example, see the textbooks \cite{grandis09,fajstrup16}). In particular, some attempts of defining nice homotopy theories for true concurrency (or directed topology), through the language of model structures of Quillen \cite{quillen67}, have been made by Gaucher \cite{gaucher11}, and the author \cite{dubut17}. In the second part of this paper (Sections \ref{sec:path}, \ref{sec:unfo} and \ref{sec:cofi}), we consider another point of view of this relationship between HDA and model structures. The goal is not to understand the true concurrency of HDA, that is, understanding the homotopy theory of HDA as an abstract homotopy theory, but to understand the concurrency theory of HDA. By this we mean to understand how paths (or executions) and extensions of paths can be understood using (co)fibrations (in Quillen's sense). Also, the goal is not to construct a model structure, as Quillen's axioms would fail, but to give intuitions and some preliminary formal statements toward the understanding of concurrency using homotopy theory. Using this point of view, many constructions in concurrency can be understood using the language of model structures:\\
\indent -- Open maps from \cite{joyal96} can be understood as trivial fibrations, namely weak equivalences (here, bisimulations) that have the right lifting properties with respect to some morphisms.\\
\indent -- Those morphisms are precisely extensions of executions, which means that they can be seen as cofibration generators (in the language of cofibrantly generated model structures \cite{hirschhorn03}).\\
\indent -- Cofibrations are then morphisms that have the left lifting property with respect to open maps. In particular, this allows us to define cofibrant objects as those objects whose unique morphisms from the initial object is a cofibration. In a way, cofibrant objects are those objects that are constructed by just using extensions of paths, and should correspond to trees.\\
\indent -- The cofibrant replacement is then given by canonically constructing a cofibrant object, which is weakly equivalent (here, bisimilar) to a given object. That should correspond to the unfolding.\\
The main ingredient is to understand what trees are in this context. In the case of transition systems for semantics of CCS \cite{milner80}, synchronisation trees are those systems with exactly one path from the initial state to any state. Those trees are then much simpler to reason on, but they are still powerful enough to capture any bisimulation type: by unfolding, it is possible to canonically construct a tree from a system. The goal of Sections \ref{sec:path} and \ref{sec:unfo} will be to understand how to generalise this to pHDA. In this context, it is not clear what kind of unique path property should be considered as, in general, in truly concurrent systems, we have to deal with homotopies, namely, equivalences of paths modulo permutation of independent actions. Following \cite{dubut16}, we will first consider trees as colimits of paths. This will guide us to determine what kind of unique path property is needed: a tree is a pHDA with exactly one class of paths modulo a notion of homotopy, from the initial state to any state, and without any shortcuts. This will be proved by defining a suitable notion of unfolding of pHDA. Finally, in Section \ref{sec:cofi}, we prove that those trees coincide exactly with the cofibrant objects, illustrating the first steps of this new understanding of concurrency, using homotopy theory.

\section{Fixing the definition of pHDA}
\label{sec:phda}

In this Section, we review the definitions of HDA (Section \ref{sub:hda}), the first one using face maps, and the second one using presheaves. In Section \ref{sub:fahr}, we describe the definition of partial HDA from \cite{fahrenberg15} and explain why it does not give us what we are expecting. We tackle those issues by introducing a new definition in Section \ref{sub:phda}, extending the presheaf theoretic definition, using lax functors instead of strict functors. Finally, in Section \ref{sub:refl}, we prove that HDA form a reflective subcategory of partial HDA, by constructing a completion of a partial HDA.

	\subsection{Higher Dimensional Automata}
	\label{sub:hda}

Higher Dimensional Automata are an extension of transition systems: they are labeled graphs, except that, in addition to vertices and edges, the graph structure also has higher dimensional data, expressing the fact that several actions can be made at the same time. Those additional data are intuitively cubes filling up interleaving: if $a$ and $b$ can be made at the same time, instead of having an empty square as on the left figure, with $a.b$ and $b.a$ as only behaviours, we have a full square as on the right figure, with any possible behaviours in-between. This requires to extend the notion of graph to add those higher dimensional cubical data: that is the notion of \textit{precubical sets}.

\begin{center}
\begin{tikzpicture}[scale=0.75]
		
	\node (11) at (0,0) {$\bullet$};
	\node (12) at (0,3) {$\bullet$};
	\node (21) at (3,0) {$\bullet$};
	\node (22) at (3,3) {$\bullet$};
	
%
	
	\path[->,font=\scriptsize]
		(11) edge node[left]{$b$} (12)
		(21) edge node[right]{$b$} (22)
		(11) edge node[below]{$a$} (21)
		(12) edge node[above]{$a$} (22);
			
\end{tikzpicture}
\quad\quad\quad\quad\quad
\begin{tikzpicture}[scale=0.75]
		
	\node (11) at (0,0) {$\bullet$};
	\node (12) at (0,3) {$\bullet$};
	\node (21) at (3,0) {$\bullet$};
	\node (22) at (3,3) {$\bullet$};
	
	\draw[draw = white, fill = gray!50] (0.1,0.1) rectangle (2.9,2.9);
	
	\node (c) at (1.5,1.7) {\scriptsize{$a$ and $b$}};
	\node (c') at (1.5,1.3) {\scriptsize{at the same time}};
	
	\path[->,font=\scriptsize]
		(11) edge node[left]{$b$} (12)
		(21) edge node[right]{$b$} (22)
		(11) edge node[below]{$a$} (21)
		(12) edge node[above]{$a$} (22);
			
\end{tikzpicture}
\end{center}
\vspace{-2mm}

\noindent\textbf{Concrete definition of precubical sets.} A \textbf{precubical set $X$} is a collection of sets $(X_n)_{n\in\nat}$ together with a collection of functions $(\map{\partial_{i,n}^\alpha}{X_n}{X_{n-1}})_{n>0, 1 \leq i \leq n, \alpha \in \{0,1\}}$ satisfying the local equations $\partial_{i,n}^\alpha\circ\partial_{j,n+1}^\beta = \partial_{j,n}^\beta\circ\partial_{i+1,n+1}^\alpha$ for every $\alpha, \beta \in \{0,1\}$, $n > 0$ and $1 \leq j \leq i \leq n$. A \textbf{morphism of precubical sets} from $X$ to $Y$ is a collection of functions $(\map{f_n}{X_n}{Y_n})_{n\in\nat}$ 
satisfying the equations $f_n\circ\partial_{i,n}^\alpha = \partial_{i,n}^\alpha\circ f_{n+1}$ for every $n\in\nat$, $1 \leq i \leq n$ and $\alpha \in \{0,1\}$. The elements of $X_0$ are called \textbf{points}, $X_1$ \textbf{segments}, $X_2$ \textbf{squares}, $X_n$ \textbf{$n$-cubes}. In the following, we will call \textbf{past} (resp. \textbf{future}) \textbf{$i$-face maps} the $\partial_{i,n}^0$ (resp. $\partial_{i,n}^1$). We denote this category of precubical sets by $\precub$.\\

\noindent\textbf{Precubical sets as presheaves.} Equivalently, $\precub$ is the category of presheaves over the cubical category $\Box$. $\Box$ is the subcategory of \textbf{Set} whose objects are the sets $\{0,1\}^n$ for $n \in \nat$ and whose morphisms are generated by the so-called \textbf{coface maps}:
\begin{center}
$d_{i,n}^{\alpha}: \{0,1\}^{n-1}\longrightarrow \{0,1\}^n ~~~ (\beta_1, \ldots, \beta_{n-1}) \longmapsto (\beta_1, \ldots, \beta_{i-1}, \alpha, \beta_{i}, \ldots, \beta_{n-1})$
\end{center}
A precubical set is a functor $\map{X}{\Box^{op}}{\set}$, that is, a presheaf over $\Box$, and a morphism of precubical sets is a natural transformation.\\

\noindent\textbf{Higher dimensional Automata \cite{vanglabbeek91}.} From now on, fix a set $L$, called the \textbf{alphabet}. We can form a precubical set also noted $L$ such that $L_n = L^n$ and the $i$-face maps are given by $\delta_i^\alpha(a_1\ldots a_n) = a_1\ldots a_{i-1}.a_{i+1}\ldots a_n$. We can also form the following precubical set $\ast$ such that $\ast_0 = \{\ast\}$ and $\ast_n = \varnothing$ for $n > 0$.
A \textbf{HDA} $X$ on $L$ is a bialgebra $\ast \rightarrow X \rightarrow L$ in $\textbf{pCub}$. In other words, a HDA $X$ is a precubical set, also noted $X$, together with a specified point, the \textbf{initial state}, $i \in X_0$ and a \textbf{labelling function} $\map{\lambda}{X_1}{L}$ satisfying the equations $\lambda\circ\partial_{i,2}^0 = \lambda\circ\partial_{i,2}^1$ for $i \in \{1,2\}$ (see previous figure, right). A \textbf{morphism of HDA} from $X$ to $Y$ is a morphism $f$ of precubical sets from $X$ to $Y$ such that $f_0(i_X) = i_Y$ and $\lambda_X = \lambda_Y\circ f_1.$ HDA on $L$ and morphisms of HDA form a category that we denote by $\hda$. This category can also be defined as a the double slice category $\ast/\textbf{pCub}/L$. Remark that we are only concerned with labelling-preserving morphisms, not general morphisms as described in \cite{fahrenberg05}.

	\subsection{Original definition of partial Higher Dimensional Automata}
	\label{sub:fahr}

Originally \cite{fahrenberg15}, partial HDA are defined similarly to the concrete definition of HDA, except that the face maps can be partial functions and the local equations hold only when \emph{both} sides are well defined. There are two reasons why it fails to give the good intuition:

	\begin{wrapfigure}{r}{0.38\textwidth}
	\begin{tikzpicture}[thick,scale=3]

    \coordinate (A1) at (0, 0);
    \coordinate (A2) at (0, 1);
    \coordinate (A3) at (1, 1);
    \coordinate (A4) at (1, 0);
    \coordinate (B1) at (0.3, 0.3);
    \coordinate (B2) at (0.3, 1.3);
    \coordinate (B3) at (1.3, 1.3);
    \coordinate (B4) at (1.3, 0.3);

    \draw[dashed,color = gray, ->] (A1) -- (B1);
    \draw[dashed, ->] (B1) -- (B2);
    \draw[very thick, ->] (A2) -- (B2);
    \draw[very thick, ->] (B2) -- (B3);
    \draw[very thick, ->] (A3) -- (B3);
    \draw[very thick, ->] (A4) -- (B4);
    \draw[dashed, ->] (B1) -- (B4);
    \draw (1.28,1.23) -- (1.32,1.27);
    \draw (1.28,1.18) -- (1.32,1.22);
    \draw (1.28,1.13) -- (1.32,1.17);
    \draw (1.28,1.08) -- (1.32,1.12);
    \draw (1.28,1.03) -- (1.32,1.07);
    \draw (1.28,0.98) -- (1.32,1.02);
    \draw (1.28,0.93) -- (1.32,0.97);
    \draw (1.28,0.88) -- (1.32,0.92);
    \draw (1.28,0.83) -- (1.32,0.87);
    \draw (1.28,0.78) -- (1.32,0.82);
    \draw (1.28,0.73) -- (1.32,0.77);
    \draw (1.28,0.68) -- (1.32,0.72);
    \draw (1.28,0.63) -- (1.32,0.67);
    \draw (1.28,0.58) -- (1.32,0.62);
    \draw (1.28,0.53) -- (1.32,0.57);
    \draw (1.28,0.48) -- (1.32,0.52);
    \draw (1.28,0.43) -- (1.32,0.47);
    \draw (1.28,0.38) -- (1.32,0.42);
    \draw (1.28,0.33) -- (1.32,0.37);

    \draw[fill=gray,opacity = 0.4,pattern=north west lines] (A1) -- (B1) -- (B4) -- (A4);
    \draw[fill=gray,opacity = 0.4] (A1) -- (A2) -- (A3) -- (A4);
    \draw[fill=gray,opacity = 0.4] (A1) -- (A2) -- (B2) -- (B1);
    \draw[fill=gray,opacity = 0.4] (B1) -- (B2) -- (B3) -- (B4);
    \draw[fill=gray,opacity = 0.4] (A3) -- (B3) -- (B4) -- (A4);
    \draw[fill=gray,opacity = 0.4] (A2) -- (B2) -- (B3) -- (A3);
    
    \draw[very thick, ->] (A1) -- (A2);
    \draw[very thick, ->] (A2) -- (A3);
    \draw[very thick, ->] (A4) -- (A3);
    \draw[very thick, ->] (A1) -- (A4);
    
    \node (lf) at (0.65,0.15) {\scriptsize{not defined}};
    \node[rotate=90] (re) at (1.4,0.8) {\scriptsize{not defined}};
    
    \node at (B4) {$\bullet$};
    \node[rotate=45] at (1.35,0.25) {\scriptsize{defined in two}};
    \node[rotate=45] at (1.4,0.2) {\scriptsize{different ways}};
\end{tikzpicture}
    \vspace{-14mm}
\end{wrapfigure}
 -- first the `local' equations are not enough in the partial case. Imagine that we want to model a full cube  $c$ without its lower face, that is, $\partial_{3,3}^0$ is not defined on $c$, and such that $\partial_{1,2}^1$ is undefined on $\partial_{1,3}^1(c)$ and $\partial_{2,3}^1(c)$, that is, we remove an edge. We cannot prove using the local equations that $\partial_1^1\circ\partial_2^0\circ\partial_1^1(c) = \partial_1^1\circ\partial_2^0\circ\partial_2^1(c)$, that is, that the vertices of the cube are uniquely defined. Indeed, to prove this equality using the local equations, you can only permute two consecutive $\partial$. From $\partial_1^1\circ\partial_2^0\circ\partial_1^1(c)$, you can:\\
\indent\indent $\bullet$ either permute the first two and you obtain $\partial_1^1\circ\partial_1^1\circ\partial_3^0(c)$,\\
\indent\indent $\bullet$ or permute the last two and you obtain $\partial_1^0\circ\partial_1^1\circ\partial_1^1(c)$.\\
and both faces are not defined. On the other hand, those two should be equal because the comaps $d_1^1\circ d_2^0\circ d_1^1$ and $d_2^1\circ d_2^0\circ d_1^1$ are equal in $\Box$, and $\partial_1^1\circ\partial_2^0\circ\partial_1^1$ and $\partial_1^1\circ\partial_2^0\circ\partial_2^1$ are both defined on $c$.

		\begin{wrapfigure}{r}{0.28\textwidth}
	\vspace{-7mm}
	\begin{tikzpicture}[scale=1]
	
	\node (seg) at (1,0.5) {\textbf{segment}};
	\node (sseg) at (1,-1.5) {\textbf{split segment}}	;
	\node (11) at (0,0) {$\bullet$};
	\node (12) at (2,0) {$\bullet$};
	\node (21) at (-0.5,-1) {$\bullet$};
	\node (22) at (2.5,-1) {$\bullet$};
	\node (21') at (0,-1) {};
	\node (22') at (2,-1) {};
	
	\path[->,font=\scriptsize]
		(11) edge node[above]{$a$} (12)
		(21') edge node[below]{$a$} (22');
	\path[->, dotted]
		(11) edge (21)
		(12) edge (22)
		(1,-0.2) edge (1,-0.7);
			
	\end{tikzpicture}
	\vspace{-10mm}
	\end{wrapfigure}
-- secondly, the notion of morphism is not good (or at least, ambiguous). The equations $f_n\circ\partial_{i,n,X}^\alpha = \partial_{i,n,Y}^\alpha\circ f_{n+1}$ hold in \cite{fahrenberg15} only when \emph{both} face maps are defined, which authorises many morphisms. For example, consider the segment $I$, and the `split' segment $I'$ which is defined as $I$, except that no face maps are defined (geometrically, this corresponds to two points and an open segment). The identity map from $I$ to $I'$ is a morphism of partial precubical sets in the sense of \cite{fahrenberg15}, which is unexpected. A bad consequence of that is that the notion of paths in a partial HDA does not correspond to morphisms from some particular partial HDA, and paths are not preserved by morphisms, as we will see later.

	\subsection{Partial Higher Dimensional Automata as lax functors}
	\label{sub:phda}

The idea is to generalise the `presheaf' definition of precubical sets. The problem is to deal with partial functions and when two of them should coincide. Let \textbf{pSet} be the category of sets and partial functions. A partial function $\map{f}{X}{Y}$ can be either seen as a pair $(A,f)$ of a subset $A \subseteq X$ and a total function $\map{f}{A}{Y}$, or as a functional relation $f \subseteq X\times Y$, that is, a relation such that for every $x\in X$, there is at most one $y \in Y$ with $(x,y) \in f$. We will freely use both views in the following. For two partial maps $\map{f,g}{X}{Y}$, we denote by $f \equiv g$ if and only if for every $x \in X$ such that $f(x)$ and $g(x)$ are defined, then $f(x) = g(x)$. Note that this is not equality, but equality on the intersection of the domains. We also write $f \subseteq g$ if and only if $f$ is include in $g$ as a relation, that is, if and only if, for every $x \in X$ such that $f(x)$ is defined, then $g(x)$ is defined and $f(x) = g(x)$. By a \textbf{lax functor} $\pmap{F}{\C}$, we mean the following data \cite{niefield10}:\\
\indent -- for every object $c$ of $\C$, a set $Fc$,\\
\indent -- for every morphism $\map{i}{c}{c'}$, a partial function $\map{Fi}{Fc}{Fc'}$\\
satisfying that $F\text{id}_c = \text{id}_{Fc}$ and $Fj \circ Fi \subseteq F(j\circ i)$.\\
The point is that partial precubical sets as defined in \cite{fahrenberg15} do not satisfy the second condition, while they should. In addition, this definition will authorise a square to have vertices, that is, that some $\partial\partial$ are defined, while having no edge, that is, no $\partial$ defined. This may be useful to define paths as discrete traces in \cite{fajstrup05} (that we will call \emph{shortcuts} later), that is, paths that can go directly from a point to a square for example. Observe also that if $j \circ i = j'\circ i'$ then $Fj \circ Fi \equiv Fj' \circ Fi'$, which gives us the local equations from \cite{fahrenberg15}. A \textbf{partial precubical set} $X$ is then a lax functor $\pmap{F}{\Box^{op}}$. It becomes harder to describe explicitly what a partial precubical set is, since we cannot restrict to the $\partial_i^\alpha$ anymore. It is a collection of sets $(X_n)_{n\in\nat}$ together with a collection of \emph{partial} functions $(\map{\partial_{i_1 < \ldots < i_k}^{\alpha_1, \ldots, \alpha_k}}{X_{n+k}}{X_n})$ satisfying the inclusions 
$\partial_{j_1 < \ldots < j_m}^{\beta_1, \ldots, \beta_m}\circ\partial_{i_1 < \ldots < i_n}^{\alpha_1, \ldots, \alpha_n} \subseteq \partial_{k_1 < \ldots < k_{n+m}}^{\gamma_1, \ldots, \gamma_{n+m}}$
where the $k_s$ and $\gamma_s$ are defined as follows. $(k_1< \ldots < k_{n+m} ; \gamma_1, \ldots, \gamma_{n+m}) = (i_1<\ldots < i_n ; \alpha_1, \ldots, \alpha_n) \star (j_1< \ldots< j_m ; \beta_1, \ldots, \beta_m)$ where $\star$ is defined by induction on $n+m$:\\
\indent --  if $n=0$, $\epsilon \star (j_1< \ldots< j_m ; \beta_1, \ldots, \beta_m) = (j_1< \ldots< j_m ; \beta_1, \ldots, \beta_m)$,\\
\indent --  if $m = 0$, $(i_1<\ldots< i_n ; \alpha_1, \ldots, \alpha_n) \star \epsilon = (i_1<\ldots< i_n ; \alpha_1, \ldots, \alpha_n)$,\\
\indent --  if $i_1 \leq j_1$, $(i_1<\ldots< i_n ; \alpha_1, \ldots, \alpha_n) \star (j_1< \ldots< j_m ; \beta_1, \ldots, \beta_m) = (i_1;\alpha_1).((i_2<\ldots< i_n ; \alpha_2, \ldots, \alpha_n) \star (j_1+1< \ldots< j_m+1 ; \beta_1, \ldots, \beta_m))$,\\
\indent --  if $i_1 > j_1$, $(i_1<\ldots< i_n ; \alpha_1, \ldots, \alpha_n) \star (j_1< \ldots< j_m ; \beta_1, \ldots, \beta_m) = (j_1;\beta_1).((i_1<\ldots< i_n ; \alpha_1, \ldots, \alpha_n) \star (j_2< \ldots< j_m ; \beta_2, \ldots, \beta_m))$.\\
A \textbf{function-valued op-lax transformation} \cite{niefield10} from $\pmap{F}{\C}$ to $\pmap{G}{\C}$ is a collection $(f_c)_{c\in Ob(\C)}$ of \emph{total} functions such that for every $\map{i}{c}{c'}$, $f_{c'}\circ F(i) \subseteq G(i)\circ f_{c}$. A \textbf{morphism of partial precubical sets} from $X$ to $Y$ is then a function-valued op-lax transformation. In other words, this is a collection of \emph{total} functions $(\map{f_n}{X_n}{Y_n})_{n\in\nat}$ satisfying the equations $f_n\circ\partial_{i_1 < \ldots < i_k}^{\alpha_1, \dots, \alpha_k} \subseteq \partial_{i_1 < \ldots < i_k}^{\alpha_1, \ldots, \alpha_k}\circ f_{n+k}$. Partial precubical sets and morphisms of partial precubical sets form a category that we denote by $\textbf{ppCub}$. $\textbf{pCub}$ is a full subcategory of $\textbf{ppCub}$. In particular, the precubical sets $\ast$ and $L$ are partial precubical sets. A \textbf{partial HDA} $X$ on $L$ is a partial precubical set, also noted $X$, together with a specified point, the \textbf{initial state} $i \in X_0$ and a morphism of ppCub, the \textbf{labelling functions}, $(\map{\lambda_n}{X_n}{L^n})_{n\in\nat}$. A \textbf{morphism of pHDA} from $X$ to $Y$ is a morphism $f$ of partial precubical sets from $X$ to $Y$ such that $f_0(i_X) = i_Y$ and $\lambda_X = \lambda_Y\circ f$. Partial HDA on $L$ and morphisms of partial HDA form a category that we note $\phda$. In other words, this is the double slice category $\ast/\textbf{ppCub}/L$.

	\subsection{Completion of a pHDA}
	\label{sub:refl}

%

Let us describe how it is possible to construct a HDA from a pHDA $X$, by `completing' $X$, that is, by adding the faces that are missing, and by connecting the faces that are not. Let $$Y_n = \{((i_1 < \ldots < i_k ; \alpha_1, \ldots, \alpha_k),x) \mid x \in X_{n+k} \wedge i_k \leq n+k\}$$
$Y = (Y_n)_{n\in \nat}$ is intuitively the collection of all abstract faces of all cubes of $X$, that is, pairs of a cube and all possible ways to define a face from it. Of course, some of those are the same, since there are several ways to describe a cube as the face of some other cube. Define $\sim$ as the smallest equivalence relation such that:\\
\indent --  if $\partial_{i_1 < \ldots < i_k}^{\alpha_1, \ldots, \alpha_k}(x)$ is defined, then $((i_1 < \ldots < i_k ; \alpha_1, \ldots, \alpha_k), x) \sim (\epsilon, \partial_{i_1 < \ldots < i_k}^{\alpha_1, \ldots, \alpha_k}(x))$. This means that, if a face of a cube exists in $X$, this face is identified with both abstract faces $(\epsilon, \partial_{i_1 < \ldots < i_k}^{\alpha_1, \ldots, \alpha_k}(x))$ (i.e., the cube $\partial_{i_1 < \ldots < i_k}^{\alpha_1, \ldots, \alpha_k}(x)$ itself) and $((i_1 < \ldots < i_k ; \alpha_1, \ldots, \alpha_k), x)$ (i.e., the face of $x$, which consists of taking the $(i_k,\alpha_k)$ face, then the $(i_{k-1},\alpha_{k-1})$ face, and so on).\\
\indent --  if $((i_1 < \ldots < i_k ; \alpha_1, \ldots, \alpha_k), x) \sim ((j_1 < \ldots < j_l ; \beta_1, \ldots, \beta_l), y)$, then $((i_1 < \ldots < i_k ; \alpha_1, \ldots, \alpha_k)\star(i,\alpha), x) \sim ((j_1 < \ldots < j_l ; \beta_1, \ldots, \beta_l)\star(i,\alpha), y)$. This means that if two abstract faces coincide, then taking both their $(i,\alpha)$ face gives two abstract faces that also coincide.\\
Let $\chi(X)_n = Y_n/\sim$ and we denote by $\ll (i_1 < \ldots < i_k ; \alpha_1, \ldots, \alpha_k), x \gg$ the equivalence class of $((i_1 < \ldots < i_k ; \alpha_1, \ldots, \alpha_k), x)$ modulo $\sim$. We define the $i$-face map as $\partial_i^\alpha(\ll(i_1 < \ldots < i_k ; \alpha_1, \ldots, \alpha_k), x\gg) = ~ \ll(i_1 < \ldots < i_k ; \alpha_1, \ldots, \alpha_k)\star(i,\alpha), x\gg$, the initial state as $\ll\epsilon,i\gg$ and the labelling function as $\lambda(\ll (i_1 < \ldots < i_k ; \alpha_1, \ldots, \alpha_k), x \gg) = \delta_{i_1}^{\alpha_1}\circ\ldots\circ\delta_{i_k}^{\alpha_k}(\lambda(x))$. 

\begin{theorem}
\label{the:reflection}
$\chi$ is a well-defined functor and is the left adjoint of $\tau$, the injection of $\hda$ into $\phda$. Furthermore, $\hda$ is a reflective subcategory of $\phda$.
\end{theorem}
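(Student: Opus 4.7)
The plan is to verify in turn that $\chi$ is well-defined on objects, extends to morphisms, is left adjoint to $\tau$, and that $\tau$ is fully faithful. Most of the work reduces to routine combinatorial checks on the $\star$ operation once the following key lemma is isolated: $\star$ is associative (by induction on total length) and satisfies $(i;\alpha)\star(j;\beta) = (j-1;\beta)\star(i;\alpha)$ whenever $i<j$ (direct from the case split in its definition).

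To show that $\chi(X)$ is a HDA, three points must be checked. Well-definedness of the face maps on $\sim$-classes is exactly the content of the second clause of the definition of $\sim$. The strict cubical identities $\partial_i^\alpha\circ \partial_j^\beta = \partial_{j-1}^\beta\circ \partial_i^\alpha$ for $i<j$ are obtained by unfolding both sides to iterated $\star$-products applied to a representative and invoking associativity together with the elementary commutation above. Well-definedness of the labelling is secured by observing that $\lambda_X \colon X \to L$ is a morphism of ppCub, so whenever $\partial^{\ldots}(x)$ is defined in $X$, $\lambda(\partial^{\ldots}(x)) = \delta^{\ldots}(\lambda(x))$ (the inclusion becomes equality because $\delta^{\ldots}$ is total in $L$); hence the generator of $\sim$ identifies representatives with matching labels.

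For functoriality, set $\chi(f)(\ll(i_1<\ldots;\alpha_1,\ldots), x\gg) = \ll(i_1<\ldots;\alpha_1,\ldots), f(x)\gg$. Well-definedness on $\sim$-classes reduces to the generating relation, which is preserved because the op-lax inequality $f_n\circ\partial^{\ldots} \subseteq \partial^{\ldots}\circ f_{n+k}$ transports $((i_1<\ldots;\alpha_1,\ldots),x)\sim(\epsilon,\partial^{\ldots}(x))$ to the same relation in $\chi(Y)$; that $\chi(f)$ is a HDA morphism and functoriality of $\chi$ are then immediate. For the adjunction, the unit $\eta_X \colon X \to \tau\chi(X)$ is $x \mapsto \ll\epsilon, x\gg$; the first clause of $\sim$ gives exactly $\eta_X\circ\partial^{\ldots} \subseteq \partial^{\ldots}\circ\eta_X$, so $\eta_X$ is a morphism of $\phda$. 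Given $f \colon X \to \tau H$ with $H \in \hda$, any morphism $g \colon \chi(X) \to H$ satisfying $g\circ \eta_X = f$ is forced to send $\ll(i_1<\ldots;\alpha_1,\ldots), x\gg$ to the iterated face $\partial_{i_1}^{\alpha_1}\circ\ldots\circ\partial_{i_k}^{\alpha_k}(f(x))$, which is total in $H$; this formula does define a morphism $\tilde f$, with well-definedness on classes deduced by the same calculation as in the functoriality step, giving both existence and uniqueness of the transpose.

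Finally, $\tau$ is fully faithful: between two HDA regarded as pHDA, the op-lax inequality $f_n\circ\partial \subseteq \partial\circ f_{n+k}$ collapses to an equality because its right-hand side is total, and this equality is exactly a morphism in $\hda$. Combined with $\chi\dashv\tau$, this is the definition of a reflective subcategory. The main obstacle throughout is the combinatorial bookkeeping around $\star$ — the strict cubical identities and the various well-definedness claims on $\sim$-classes all rest on careful manipulation of face words — but once associativity and the base commutation of $\star$ are in hand, the remaining verifications are diagram chases.
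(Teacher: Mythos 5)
Your overall architecture is essentially the paper's: the same explicit construction of $\chi$, the same unit $x\mapsto\,\ll\epsilon,x\gg$, and the same reduction of every verification to the combinatorics of $\star$ and to the two generating clauses of $\sim$. Packaging the adjunction through the universal property of the unit (forced formula for the transpose, then well-definedness) rather than through the unit--counit triangle identities is an equivalent and equally clean route, and your remark that $\tau$ is fully faithful because the op-lax inclusion $f_n\circ\partial\subseteq\partial\circ f_{n+k}$ collapses to an equality against a total face map is exactly what makes the subcategory reflective.

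There is, however, one concrete error: the ``base commutation'' for $\star$ that you isolate as the key lemma is false as stated. For $i<j$ the case split gives $(i;\alpha)\star(j;\beta)=(i<j+1;\alpha,\beta)$, whereas $(j-1;\beta)\star(i;\alpha)$ equals $(i<j-1;\alpha,\beta)$ when $i<j-1$ and $(j-1<j;\beta,\alpha)$ when $i=j-1$; already for $i=1$, $j=2$ one gets $(1<3;\alpha,\beta)\neq(1<2;\beta,\alpha)$. The arguments of $\star$ are in the wrong order on both sides: since $\partial^{w'}\circ\partial^{w}\subseteq\partial^{\,w\star w'}$, the cubical identity $\partial_i^\alpha\circ\partial_j^\beta=\partial_j^\beta\circ\partial_{i+1}^\alpha$ for $j\le i$ corresponds to the word identity $(j;\beta)\star(i;\alpha)=(i+1;\alpha)\star(j;\beta)$, which is what the paper verifies and what does follow directly from the case split (both sides reduce to $(j<i+1;\beta,\alpha)$). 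With the lemma corrected your derivation goes through unchanged. Two smaller omissions: for $\chi(X)$ to be a HDA you must also check the labelling equation $\lambda\circ\partial_i^0=\lambda\circ\partial_i^1$ (it follows from $\delta_i^0=\delta_i^1$ on $L$), and well-definedness of $\lambda$ on $\sim$-classes requires an induction over \emph{both} generating clauses of $\sim$, not only the first; you only address the generator identifying an abstract face with an actual face.
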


Now, we can define the \textbf{geometric realisation} of a pHDA $X$ as the subspace of the realisation of $\chi(X)$ consisting of points whose carrier is of the form $\ll \epsilon, x \gg$ for some $x \in X$. This really corresponds to the drawings we have been using to depict pHDA until now.

\section{Paths in partial Higher Dimensional Automata}
\label{sec:path}

Executions of HDA are defined using the notion of paths. Those paths describe the succession of starting and finishing of actions in a HDA. For example, a HDA can start an action then start another at the same time, and finish the two actions. This sequence is then not just a sequence of 1-dimensional transitions, since some actions can be made at the same time, but a sequence of hypercubes corresponding to the evolution of the state of the system. We will formalise this idea in Section \ref{sub:paths}, and we will see in particular that those paths can be encoded in the category $\phda$ (while it is not possible in the category $\hda$) as morphisms from particular pHDA, called path shapes. In Section \ref{sub:open}, let us first start by recalling the general framework of \cite{joyal96}.

	\subsection{Path category, open maps, coverings}
	\label{sub:open}
	
	In the general framework of \cite{joyal96}, we start with a category $\M$ of systems, together with a subcategory $\P$ of execution shapes. For example, keep in mind the case where $\M$ is the category of transition systems and $\P$ is the full subcategory of finite linear systems. One interesting remark about this case is that executions of a given systems are in bijective correspondance with morphisms from a finite linear system to this given system. This means that to reason about behaviours of such systems, it is enough to reason about morphisms and execution shapes.
	\begin{wrapfigure}{r}{0.25\textwidth}
	\vspace{-9mm}
	\begin{tikzpicture}[scale=0.8]
		
	\node (B') at (0,0) {$Y'$};
	\node (B) at (0,3) {$X'$};
	\node (S) at (3,0) {$Y$};
	\node (T) at (3,3) {$X$};

	\path[->,font=\scriptsize]
		(B) edge node[left]{$g$} (B')
		(B') edge node[below]{$y$} (S)
		(T) edge node[right]{$f$} (S)
		(B) edge node[above]{$x$} (T);
		
	\path[->, font = \scriptsize, dotted]
		(B') edge node[above]{$\theta$} (T);
			
\end{tikzpicture}
	\vspace{-14mm}
\end{wrapfigure}
	This idea was formalised by describing precisely which morphisms are witnesses for the existence of a bisimulation between systems. This description uses right lifting properties: we say that a morphism $\map{f}{X}{Y}$ has the \textbf{right lifting property with respect to $\map{g}{X'}{Y'}$} if for every $\map{x}{X'}{X}$ and $\map{y}{Y'}{Y}$ such that $f\circ x = y \circ g$, there exists $\map{\theta}{Y'}{X}$ such that $x = \theta\circ g$ and $f\circ\theta = y$. For example, let us assume that $f$ is a morphism of transition systems and that $X'$ and $Y'$ are finite linear systems. Then $x$ (resp. $y$) is the same as an execution in $X$ (resp. $Y$), and $f\circ x = y \circ g$ means that the execution $y$ is a extension of the image of the execution $x$ by $f$. The right lifting property means that the longer execution $y$ of $Y$ can be lifted to a longer execution $\theta$ of $X$, that is, $\theta$ is an extension of $x$ and the image of $\theta$ by $f$ is $y$. This property of lifting longer executions is precisely the property needed on a morphism to make its graph relation a bisimulation. They are also very similar to morphisms of coalgebras \cite{jacobs16}. We call \textbf{$\P$-open} (or simply open when $\P$ is clear), a morphism that has the right lifting property with respect to every morphism in $\P$. From open maps, it is possible to describe similarity and bismilarity as the existence of a span of morphisms/open maps, and many kinds of bisimilarities can be captured in this way \cite{joyal96}. An open map is said to be a \textbf{$\P$-covering} (or simply covering) if furthermore the lifts in the right lifting properties are unique. Being a covering is a very strong requirement, as they correspond to partial unfolding of a system.

	\subsection{Encoding paths in pHDA}
	\label{sub:paths}

In this section, we describe the classical notion of execution of HDA from \cite{vanglabbeek05}, extended to partial HDA in \cite{fahrenberg15}, defined using the notion of path. We then show 
	\begin{wrapfigure}{r}{0.28\textwidth}
		\vspace{-10mm}
\begin{tikzpicture}[scale=0.8]
	\node (11p) at (5,0) {};
	\node (12p) at (5,3) {};
	\node (21p) at (8,0) {};
	\node (22p) at (8,3) {};
	\node[rotate=45] (ul2) at (4.6,3.4) {\scriptsize{not}};
	\node[rotate=45] (ul) at (4.8,3.2) {\scriptsize{defined}};
	\node (ue) at (6.5,3.3) {\scriptsize{not defined}};
	\draw[draw = white, fill = gray!50] (5,0) rectangle (8,3);
	\path[->,thick,font=\scriptsize]
		(11p) edge (12p)
		(21p) edge (22p)
		(11p) edge (21p);
	\path[very thick, red]
		(5,0) edge (6.5,0)
		(6.5,0) edge (6.5,1.5)
		(6.5,1.5) edge (8,1.5);
	\draw (5.1,2.9) -- (5.3,3.1);
	\draw (5.3,2.9) -- (5.5,3.1);
	\draw (5.5,2.9) -- (5.7,3.1);
	\draw (5.7,2.9) -- (5.9,3.1);
	\draw (5.9,2.9) -- (6.1,3.1);
	\draw (6.1,2.9) -- (6.3,3.1);
	\draw (6.3,2.9) -- (6.5,3.1);
	\draw (6.5,2.9) -- (6.7,3.1);
	\draw (6.7,2.9) -- (6.9,3.1);
	\draw (6.9,2.9) -- (7.1,3.1);
	\draw (7.1,2.9) -- (7.3,3.1);
	\draw (7.3,2.9) -- (7.5,3.1);
	\draw (7.5,2.9) -- (7.7,3.1);
	\draw (7.7,2.9) -- (7.9,3.1);
	\node at (5,0) {$\bullet$};
	\node at (5,3) {$\times$};
	\node at (8,0) {$\bullet$};
	\node at (8,3) {$\bullet$};
	\node (0) at (4.8,-0.2) {\scriptsize{$0$}};
	\node (1) at (8.2,-0.2) {\scriptsize{$1$}};
	\node (2) at (8.2,3.2) {\scriptsize{$2$}};
	\node (al) at (4.8,1.5) {\scriptsize{$\alpha$}};
	\node (be) at (6.5,-0.2) {\scriptsize{$\beta$}};
	\node (ga) at (8.2,1.5) {\scriptsize{$\gamma$}};
	\node (c) at (6.5,1.7) {\scriptsize{$c$}};
	\node (lab) at (6.5,-0.55) {\scriptsize{In red: path}};
	\node (lab2) at (6.5, -1) {\scriptsize{$\pi = 0 \xrightarrow{1,0} \beta \xrightarrow{2,0} c \xrightarrow{1,1} \gamma$}};
	\node (lab3) at (6.5,-1.45) {\scriptsize{in the pHDA $X$}};
\end{tikzpicture}
\vspace{-14mm}
\end{wrapfigure}
that those executions can be encoded as an execution shapes subcategory, as in the general framework of \cite{joyal96}, proving in particular that paths are in bijective correspondance with a class of morphisms. A \textbf{path} $\pi$ of a HDA $X$ is a sequence $i = x_0 \xrightarrow{j_1,\alpha_1} x_1 \xrightarrow{j_2,\alpha_2} \ldots \xrightarrow{j_n,\alpha_n} x_n$ where $x_k \in X$, $j_k > 0$ and $\alpha_k \in \{0,1\}$ are such that for every $k$:\\
\indent --  if $\alpha_k =0$, then $x_{k-1} = \partial_{j_k}^{0}(x_k)$,\\
\indent --  if $\alpha_k =1$, then $x_{k} = \partial_{j_k}^{1}(x_{k-1})$.\\
This definition can easily be extended to pHDA, by requiring that the $j_k$-face maps are defined on $x_k$ or $x_{k-1}$. A natural property of executions and morphisms is that morphisms map executions to executions. This is the case here (while it is not for \cite{fahrenberg15}, e.g., the split segment):

\begin{proposition}
\label{prop:morph}
If $\map{f}{X}{Y}$ is a map of pHDA and if $\pi =  x_0 \xrightarrow{j_1,\alpha_1} x_1 \xrightarrow{j_2,\alpha_2} \ldots \xrightarrow{j_n,\alpha_n} x_n$ is a path in $X$, then $\pi' =  f(x_0) \xrightarrow{j_1,\alpha_1} f(x_1) \xrightarrow{j_2,\alpha_2} \ldots \xrightarrow{j_n,\alpha_n} f(x_n)$ is a path in $Y$.
\end{proposition}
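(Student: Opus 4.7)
The plan is a straightforward step-by-step verification: walk along the path and, at each transition, use the defining op-lax naturality inclusion of morphisms of partial precubical sets to transport the face-map relation from $X$ to $Y$. The crucial ingredient, absent in Fahrenberg's definition but built into ours, is that the condition $f_n\circ\partial_{i_1 < \ldots < i_k}^{\alpha_1, \ldots, \alpha_k} \subseteq \partial_{i_1 < \ldots < i_k}^{\alpha_1, \ldots, \alpha_k}\circ f_{n+k}$ is a genuine inclusion of partial functions: whenever the left-hand side is defined on an element, the right-hand side is defined there too and they agree. Specialising to a single face map $\partial_j^\alpha$, this says that if $\partial_j^\alpha(x)$ is defined in $X$, then $\partial_j^\alpha(f(x))$ is defined in $Y$ and equals $f(\partial_j^\alpha(x))$.

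First, I would check the base state: since $f$ is a morphism of pHDA, $f(x_0) = f(i_X) = i_Y$, so the candidate sequence in $Y$ starts at the initial state, as required. Next, for each $1 \le k \le n$, I would case-split on $\alpha_k$. If $\alpha_k = 0$, then by definition $\partial_{j_k}^0$ is defined on $x_k$ with value $x_{k-1}$; applying the inclusion above gives that $\partial_{j_k}^0$ is defined on $f(x_k)$ and takes the value $f(\partial_{j_k}^0(x_k)) = f(x_{k-1})$, which is exactly the condition for the $k$-th step of $\pi'$. The case $\alpha_k = 1$ is symmetric: $\partial_{j_k}^1$ is defined on $x_{k-1}$ with value $x_k$, and the inclusion gives $\partial_{j_k}^1(f(x_{k-1})) = f(x_k)$.

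Concatenating these observations, the sequence $f(x_0) \xrightarrow{j_1,\alpha_1} \ldots \xrightarrow{j_n,\alpha_n} f(x_n)$ satisfies the defining conditions of a path in $Y$, which is exactly the claim.

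There is no real obstacle here; the only point worth flagging is \emph{why} the argument works now and did not work before. In Fahrenberg's formulation, the morphism condition was merely the weak equality $\equiv$ (equality on the common domain of definition), which is compatible with $f$ mapping a well-defined face to an undefined one, as illustrated by the split-segment counterexample in Section~\ref{sub:fahr}. Our replacement of $\equiv$ by $\subseteq$ precisely forbids that pathology and is exactly the property the proposition exploits.
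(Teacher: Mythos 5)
Your proof is correct and follows essentially the same route as the paper's: a case split on $\alpha_k$, using the inclusion $f\circ\partial_{j_k}^{\alpha_k} \subseteq \partial_{j_k}^{\alpha_k}\circ f$ to transfer definedness and the value of the face map from $X$ to $Y$, together with the same remark about why this fails under Fahrenberg's $\equiv$-based definition. The only (harmless) addition is your explicit check that $f(x_0)=i_Y$, which the paper leaves implicit.
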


One advantage of considering pHDA instead of HDA is that paths can be encoded in pHDA, which is not really possible in HDA. It is done as follows. A \textbf{spine} $\sigma$ is a sequence $(0,\epsilon) = (d_0,w_0) \xrightarrow{j_1,\alpha_1} (d_1,w_1) \xrightarrow{j_2,\alpha_2} \ldots \xrightarrow{j_n,\alpha_n} (d_n,w_n)$ where $j_k > 0$, $d_k \in \nat$, $w_k \in L^{d_k}$ and $\alpha_k \in \{0,1\}$ are such that:\\
\indent --  if $\alpha_k =0$, then $d_{k-1} = d_k - 1$, $\delta_{j_k}(w_k) = w_{k-1}$ and $j_k \leq d_k$,\\
\indent --  if $\alpha_k =1$, then $d_{k} = d_{k-1} - 1$, $\delta_{j_k}(w_{k-1}) = w_k$ and $j_k \leq d_{k-1}$.\\
\vspace{-5mm}
\begin{wrapfigure}{r}{0.3\textwidth}
		\vspace{-8mm}
\begin{tikzpicture}[scale=0.8]
	\node (11p) at (5,0) {};
	\node (12p) at (5,3) {};
	\node (21p) at (8,0) {};
	\node (22p) at (8,3) {};
	\node[rotate=45] (ul2) at (4.6,3.4) {\scriptsize{not}};
	\node[rotate=45] (ul) at (4.8,3.2) {\scriptsize{defined}};
	\node[rotate=-45] (ul2) at (8.4,3.4) {\scriptsize{not}};
	\node[rotate=-45] (ul) at (8.2,3.2) {\scriptsize{defined}};
	\node[rotate=45] (ul2) at (8.2,-0.2) {\scriptsize{not}};
	\node[rotate=45] (ul) at (8.4,-0.4) {\scriptsize{defined}};
	\node (ue) at (6.5,3.3) {\scriptsize{not defined}};
	\node[rotate=90] (ue) at (4.7,1.5) {\scriptsize{not defined}};
	\draw[draw = white, fill = gray!50] (5,0) rectangle (8,3);
	\path[->,thick,font=\scriptsize]
		(21p) edge (22p)
		(11p) edge (21p);
	\draw (5.1,2.9) -- (5.3,3.1);
	\draw (5.3,2.9) -- (5.5,3.1);
	\draw (5.5,2.9) -- (5.7,3.1);
	\draw (5.7,2.9) -- (5.9,3.1);
	\draw (5.9,2.9) -- (6.1,3.1);
	\draw (6.1,2.9) -- (6.3,3.1);
	\draw (6.3,2.9) -- (6.5,3.1);
	\draw (6.5,2.9) -- (6.7,3.1);
	\draw (6.7,2.9) -- (6.9,3.1);
	\draw (6.9,2.9) -- (7.1,3.1);
	\draw (7.1,2.9) -- (7.3,3.1);
	\draw (7.3,2.9) -- (7.5,3.1);
	\draw (7.5,2.9) -- (7.7,3.1);
	\draw (7.7,2.9) -- (7.9,3.1);
	\draw (4.9,0.1) -- (5.1,0.3);
	\draw (4.9,0.3) -- (5.1,0.5);
	\draw (4.9,0.5) -- (5.1,0.7);
	\draw (4.9,0.7) -- (5.1,0.9);
	\draw (4.9,0.9) -- (5.1,1.1);
	\draw (4.9,1.1) -- (5.1,1.3);
	\draw (4.9,1.3) -- (5.1,1.5);
	\draw (4.9,1.5) -- (5.1,1.7);
	\draw (4.9,1.7) -- (5.1,1.9);
	\draw (4.9,1.9) -- (5.1,2.1);
	\draw (4.9,2.1) -- (5.1,2.3);
	\draw (4.9,2.3) -- (5.1,2.5);
	\draw (4.9,2.5) -- (5.1,2.7);
	\draw (4.9,2.7) -- (5.1,2.9);
	\node at (5,0) {$\bullet$};
	\node at (5,3) {$\times$};
	\node at (8,0) {$\times$};
	\node at (8,3) {$\times$};
	\node at (4.8,-0.2) {\scriptsize{$\epsilon$}};
	\node at (6.5,-0.2) {\scriptsize{$a$}};
	\node at (6.5,1.5) {\scriptsize{$ab$}};
	\node at (8.2,1.5) {\scriptsize{$b$}};
	\node (lab2) at (6.5, -1) {\scriptsize{path shape of the spine}};
	\node (lab3) at (6.5,-1.45) {\scriptsize{$\sigma = (0,\epsilon) \xrightarrow{1,0} (1,a) \xrightarrow{2,0} (2,ab)$}};
	\node (lab3) at (6.5,-1.9) {\scriptsize{$~~~~~~~~~~~~~~~~~~~~~~~~~~\xrightarrow{1,1} (1,b)$}};
\end{tikzpicture}
\vspace{-20mm}
\end{wrapfigure}
A path $\pi$ has a underlying spine $\sigma_\pi$ by mapping $x_k$ to the pair of its dimension and its label. A spine $\sigma$ induces a pHDA $B\sigma$ as follows:
\vspace{-3mm}
\begin{itemize}
	\item $B\sigma_p = \{k \in \{0,\ldots,n\} \mid d_k = p\}$,
	\item the partial face maps $\partial_{i_1 < \ldots < i_n}^{\alpha_1, \ldots, \alpha_n}$ are the smallest (as relations ordered by inclusion) partial functions such that:
		\begin{itemize}
			\item if $\alpha_k = 0$, then $\partial_{j_k}^{0}(k) = k-1$,
			\item if $\alpha_k = 1$, then $\partial_{j_k}^{1}(k-1) = k$,
			\item $\partial_{j_1 < \ldots < j_m}^{\beta_1, \ldots, \beta_m}\circ\partial_{i_1 < \ldots < i_n}^{\alpha_1, \ldots, \alpha_n} \subseteq \partial_{k_1 < \ldots < k_{n+m}}^{\gamma_1, \ldots, \gamma_{n+m}}$, for\\ $(k_1, \ldots, k_{n+m} ; \gamma_1, \ldots, \gamma_{n+m}) =$\\ $(i_1,\ldots, i_n ; \alpha_1, \ldots, \alpha_n) \star (j_1, \ldots, j_m ; \beta_1, \ldots, \beta_m)$.
		\end{itemize}
	\item the initial state is $0$,
	\item the labelling functions $\lambda_n$ map $k$ to $w_k$.
\end{itemize}
\vspace{-3mm}
By a \textbf{path shape}, we mean such a pHDA $B\sigma$. The set $\spine$ of spines can be partially ordered by prefix. $B$ can then be extended to an embedding from $\spine$ to $\phda$. We note $\ps$ the image of this embedding, i.e., the full sub-category of path shapes.

\begin{proposition}
\label{prop:bij}
There is a bijection between paths in a pHDA $X$ and morphisms of pHDA from a path shape to $X$.
\end{proposition}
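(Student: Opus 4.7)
The plan is to exhibit mutually inverse constructions directly. For the forward direction, given a path $\pi = (i = x_0 \xrightarrow{j_1,\alpha_1} \ldots \xrightarrow{j_n,\alpha_n} x_n)$ in $X$, define $f_\pi : B\sigma_\pi \to X$ by sending the index $k$ to the cube $x_k$. This is well-typed because the $p$-cubes of $B\sigma_\pi$ are exactly the indices $k$ with $d_k = p$, and the spine $\sigma_\pi$ is built to record precisely the dimensions and labels of the $x_k$. Hence $f_\pi$ automatically satisfies the initial-state axiom $f_\pi(0) = x_0 = i_X$ and the labelling axiom $\lambda_X(f_\pi(k)) = \lambda_X(x_k) = w_k$. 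For the backward direction, given a morphism $f : B\sigma \to X$ with $\sigma = (0,\epsilon) \xrightarrow{j_1,\alpha_1} \ldots \xrightarrow{j_n,\alpha_n} (d_n,w_n)$, read off the sequence $\pi_f = (f(0) \xrightarrow{j_1,\alpha_1} \ldots \xrightarrow{j_n,\alpha_n} f(n))$; the initial-state axiom provides $f(0) = i_X$, and the basic face equations of $B\sigma$, transported through the op-lax inclusion defining $f$, give exactly the two clauses defining a path.

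The key technical verification is that $f_\pi$ satisfies the op-lax inequality $f_\pi \circ \partial \subseteq \partial \circ f_\pi$ for \emph{every} indexed face map $\partial$ of $B\sigma_\pi$, not only the basic ones. I would argue this by induction on the generation of the partial face maps of $B\sigma_\pi$. Since they are defined as the smallest family containing the basic equations $\partial_{j_k}^0(k) = k-1$ (when $\alpha_k=0$) and $\partial_{j_k}^1(k-1) = k$ (when $\alpha_k=1$) and closed under the $\star$-inclusion, every defined face map $\partial_{i_1<\ldots<i_m}^{\alpha_1,\ldots,\alpha_m}(k)$ arises as a finite chain of basic steps whose intermediate values are all defined. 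Each basic step maps, by the path definition, to a corresponding defined basic face map of $X$; the closure inclusion of the lax functor $X$ then guarantees that the combined face map is defined on $x_k$ and takes the expected value. This is precisely where it matters that morphisms are op-lax transformations (inclusions) rather than strict natural transformations.

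Mutual inversion is then short: $\pi_{f_\pi} = \pi$ is tautological since $f_\pi(k) = x_k$. Conversely, for $f : B\sigma \to X$ the underlying spine of $\pi_f$ coincides with $\sigma$ because $f$ is graded, so $\dim f(k) = d_k$, and labelling-preserving, so $\lambda_X(f(k)) = w_k$; hence $f_{\pi_f}$ has the same source $B\sigma$ as $f$ and agrees with it on every index. The main obstacle really is the induction of the second paragraph: one must be explicit that the minimality of the face operations in $B\sigma_\pi$ is exactly what forces preservation, and one must invoke the inclusion $\partial \circ \partial \subseteq \partial_\star$ in $X$, since in the partial setting the composition on the left can fail to be defined even when the combined face map on the right is, so only the direction used here is available.
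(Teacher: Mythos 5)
Your proposal is correct and follows exactly the paper's own (very terse) proof: the forward map sends the index $k$ to $x_k$, and the backward map reads off $f(0) \xrightarrow{j_1,\alpha_1} \ldots \xrightarrow{j_n,\alpha_n} f(n)$. The only difference is that you spell out the verification the paper omits, namely that the op-lax condition holds for all generated face maps of $B\sigma_\pi$ via their minimality and the inclusion $\partial\circ\partial \subseteq \partial_\star$ in $X$, which is a correct and worthwhile elaboration.
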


Again, this is not true with the definition of morphisms from \cite{fahrenberg15} (e.g., the split segment). As an example, the red path $\pi$ above corresponds to a morphism from the path shape $B\sigma$ to $X$.

\section{Trees and unfolding in pHDA}
\label{sec:unfo}

In this section, we introduce our notion of trees. Following \cite{dubut16}, we consider trees as colimits (or glueings of paths). Section \ref{sub:colimits} is dedicated to proving that those colimits actually exist, by giving an explicit construction of those. From this explicit construction, we will describe the kind of unique path properties that are satisfied by those trees in Section \ref{sub:unique}. Starting by showing, that the strict unicity of path fails, we then describe a notion of homotopy, the confluent homotopy, which is weaker than the one from \cite{vanglabbeek05}, for which every tree has the property that there is exactly one homotopy class of paths form the initial state to any state. We will also see that, because the face maps of trees are defined in a local way, they do not have any shortcuts, that is, paths that `skip' dimensions, for example, going from a point to a square without going through a segment. Finally, in Section \ref{sub:unfolding}, we will prove that those two properties -- the unicity of paths modulo confluent homotopy, and the non-existence of shortcuts -- completely characterise trees. This proof will use a suitable notion of unfolding of pHDA, showing furthermore that trees form a coreflective subcategory of pHDA.


\subsection{Trees, as colimits of paths in pHDA}
	\label{sub:colimits}

	In this section, we give an explicit construction of colimits of diagrams with values in path shapes. Those will be our first definition of trees in pHDA, following \cite{dubut16}. Let $\map{D}{\C}{\ps}$ be a small diagram with values in $\ps$, that is, a functor from $\C$ to $\ps$. Let us use some notations: for every object $u$ of $\C$, 
$Du = B\sigma_u$ with $\sigma_u = (d_0^u,w_0^u) \xrightarrow{j_1^u,\alpha_1^u} (d_1^u,w_1^u) \xrightarrow{j_2^u,\alpha_2^u} \ldots \xrightarrow{j_{l_u}^u,\alpha_{l_u}^u} (d^u_{l_u},w^u_{l_u})$. The definition of the colimit $\col{D}$ will be in two steps. 
The first step consists in putting all the paths $Du$ side-by-side, and in glueing them together, along the morphisms $Df$, for every morphism $f$ of $\C$. This is done as follows. Define $(X_n)_{n\in\nat}$ to be:\\
\indent --  $X_0 = \{(u,k) \mid u \in \C, k \leq l_u \wedge d^u_k = 0\} \sqcup\{\epsilon\}$,\\
\indent --  $X_n = \{(u,k) \mid u \in \C, k \leq l_u \wedge d^u_k = n\}$.\\
We quotient $X_n$ by the smallest equivalence relation $\sim$ (for inclusion) such that:\\
\indent --  for every $u$, $(u,0) \sim \epsilon$,\\
\indent --  if $\map{i}{u}{v} \in \C$, and if $k \leq l_u, l_v$, then $(u,k) \sim (v,k)$.\\
We denote by $Y_n$ the quotient $X_n/\sim$, and by $\eqclp{u,k}$ the equivalence class of $(u,k)$ modulo $\sim$.\\
At this stage, we still do not have the colimit because it is not possible to define the face maps. Let us consider the following example.
\vspace{-8mm}
\begin{center}
\begin{tikzpicture}[scale=0.65]
	\draw[draw = white, fill = gray!50] (0,0) rectangle (2.5,2.5);
	\node (00) at (-3,1.7) {\scriptsize{$(0,\epsilon) \xrightarrow{1,0} (1,b) \xrightarrow{1,0} (2,ab)$}};
	\node (00p) at (-2.65,1.3) {\scriptsize{$\xrightarrow{1,1} (1,b) \xrightarrow{1,1} (0,\epsilon)$}};
	\node (00pp) at (-3,0.5) {B};
	\node (0000) at (0,0) {$\bullet$};
	\node (0001) at (0,2.5) {$\times$};
	\node (0010) at (2.5,0) {$\times$};
	\node (0011) at (2.5,2.5) {$\bullet$};
	\node (0011p) at (2.8,2.8) {$\beta_1$};
	\draw[very thick, ->] (0000) to (0001);
	\draw[very thick, ->] (0010) to (0011);
	\draw (0.15,2.4) -- (0.35,2.6);
	\draw (0.35,2.4) -- (0.55,2.6);
	\draw (0.55,2.4) -- (0.75,2.6);
	\draw (0.75,2.4) -- (0.95,2.6);
	\draw (0.95,2.4) -- (1.15,2.6);
	\draw (1.15,2.4) -- (1.35,2.6);
	\draw (1.35,2.4) -- (1.55,2.6);
	\draw (1.55,2.4) -- (1.75,2.6);
	\draw (1.75,2.4) -- (1.95,2.6);
	\draw (1.95,2.4) -- (2.15,2.6);
	\draw (2.15,2.4) -- (2.35,2.6);
	
	\draw (0.15,-0.1) -- (0.35,0.1);
	\draw (0.35,-0.1) -- (0.55,0.1);
	\draw (0.55,-0.1) -- (0.75,0.1);
	\draw (0.75,-0.1) -- (0.95,0.1);
	\draw (0.95,-0.1) -- (1.15,0.1);
	\draw (1.15,-0.1) -- (1.35,0.1);
	\draw (1.35,-0.1) -- (1.55,0.1);
	\draw (1.55,-0.1) -- (1.75,0.1);
	\draw (1.75,-0.1) -- (1.95,0.1);
	\draw (1.95,-0.1) -- (2.15,0.1);
	\draw (2.15,-0.1) -- (2.35,0.1);
	
	\draw[draw = white, fill = gray!50] (0,5) rectangle (2.5,7.5);
	\node (01) at (-3,6.5) {\scriptsize{$(0,\epsilon) \xrightarrow{1,0} (1,b) \xrightarrow{1,0} (2,ab)$}};
	\node (01p) at (-3,5.5) {A};
	\node (0100) at (0,5) {$\bullet$};
	\node (0101) at (0,7.5) {$\times$};
	\node (0110) at (2.5,5) {$\times$};
	\node (0111) at (2.5,7.5) {$\times$};
	\draw[very thick] (0,5) to (0,7.5);
	\draw (0.15,4.9) -- (0.35,5.1);
	\draw (0.35,4.9) -- (0.55,5.1);
	\draw (0.55,4.9) -- (0.75,5.1);
	\draw (0.75,4.9) -- (0.95,5.1);
	\draw (0.95,4.9) -- (1.15,5.1);
	\draw (1.15,4.9) -- (1.35,5.1);
	\draw (1.35,4.9) -- (1.55,5.1);
	\draw (1.55,4.9) -- (1.75,5.1);
	\draw (1.75,4.9) -- (1.95,5.1);
	\draw (1.95,4.9) -- (2.15,5.1);
	\draw (2.15,4.9) -- (2.35,5.1);
	
	\draw (0.15,7.4) -- (0.35,7.6);
	\draw (0.35,7.4) -- (0.55,7.6);
	\draw (0.55,7.4) -- (0.75,7.6);
	\draw (0.75,7.4) -- (0.95,7.6);
	\draw (0.95,7.4) -- (1.15,7.6);
	\draw (1.15,7.4) -- (1.35,7.6);
	\draw (1.35,7.4) -- (1.55,7.6);
	\draw (1.55,7.4) -- (1.75,7.6);
	\draw (1.75,7.4) -- (1.95,7.6);
	\draw (1.95,7.4) -- (2.15,7.6);
	\draw (2.15,7.4) -- (2.35,7.6);
	
	\draw (2.4,5.15) -- (2.6,5.35);
	\draw (2.4,5.35) -- (2.6,5.55);
	\draw (2.4,5.55) -- (2.6,5.75);
	\draw (2.4,5.75) -- (2.6,5.95);
	\draw (2.4,5.95) -- (2.6,6.15);
	\draw (2.4,6.15) -- (2.6,6.35);
	\draw (2.4,6.35) -- (2.6,6.55);
	\draw (2.4,6.55) -- (2.6,6.75);
	\draw (2.4,6.75) -- (2.6,6.95);
	\draw (2.4,6.95) -- (2.6,7.15);
	\draw (2.4,7.15) -- (2.6,7.35);
	
	\draw[draw = white, fill = gray!50] (5,5) rectangle (7.5,7.5);
	\node (11) at (10.5,6.7) {\scriptsize{$(0,\epsilon) \xrightarrow{1,0} (1,b) \xrightarrow{1,0} (2,ab)$}};
	\node (11p) at (10.85,6.3) {\scriptsize{$\xrightarrow{2,1} (1,a) \xrightarrow{1,1} (0,\epsilon)$}};
	\node (11pp) at (10.5,5.5) {C};
	\node (1100) at (5,5) {$\bullet$};
	\node (1101) at (5,7.5) {$\times$};
	\node (1110) at (7.5,5) {$\times$};
	\node (1111) at (7.5,7.5) {$\bullet$};
	\node (1111p) at (7.8,7.8) {$\beta_2$};
	\draw[very thick,->] (1100) to (1101);
	\draw[very thick,->] (1101) to (1111);
	\draw (5.15,4.9) -- (5.35,5.1);
	\draw (5.35,4.9) -- (5.55,5.1);
	\draw (5.55,4.9) -- (5.75,5.1);
	\draw (5.75,4.9) -- (5.95,5.1);
	\draw (5.95,4.9) -- (6.15,5.1);
	\draw (6.15,4.9) -- (6.35,5.1);
	\draw (6.35,4.9) -- (6.55,5.1);
	\draw (6.55,4.9) -- (6.75,5.1);
	\draw (6.75,4.9) -- (6.95,5.1);
	\draw (6.95,4.9) -- (7.15,5.1);
	\draw (7.15,4.9) -- (7.35,5.1);
	
	\draw (7.4,5.15) -- (7.6,5.35);
	\draw (7.4,5.35) -- (7.6,5.55);
	\draw (7.4,5.55) -- (7.6,5.75);
	\draw (7.4,5.75) -- (7.6,5.95);
	\draw (7.4,5.95) -- (7.6,6.15);
	\draw (7.4,6.15) -- (7.6,6.35);
	\draw (7.4,6.35) -- (7.6,6.55);
	\draw (7.4,6.55) -- (7.6,6.75);
	\draw (7.4,6.75) -- (7.6,6.95);
	\draw (7.4,6.95) -- (7.6,7.15);
	\draw (7.4,7.15) -- (7.6,7.35);
	
	\draw[draw = white, fill = gray!50] (5,0) rectangle (7.5,2.5);
	\node (1000) at (5,0) {$\bullet$};
	\node (1001) at (5,2.5) {$\times$};
	\node (1010) at (7.5,0) {$\times$};
	\node (1011) at (7.5,2.5) {$\bullet$};
	\node (1011p) at (10.5,0.5) {D};
	\node (al) at (4.8,-0.2) {$\alpha$};
	\node (be) at (7.75,2.75) {$\beta$};
	\draw[very thick,->] (1000) to (1001);
	\draw[very thick,->] (1001) to (1011);
	\draw[very thick, ->] (1010) to (1011);
	\draw (5.15,-0.1) -- (5.35,0.1);
	\draw (5.35,-0.1) -- (5.55,0.1);
	\draw (5.55,-0.1) -- (5.75,0.1);
	\draw (5.75,-0.1) -- (5.95,0.1);
	\draw (5.95,-0.1) -- (6.15,0.1);
	\draw (6.15,-0.1) -- (6.35,0.1);
	\draw (6.35,-0.1) -- (6.55,0.1);
	\draw (6.55,-0.1) -- (6.75,0.1);
	\draw (6.75,-0.1) -- (6.95,0.1);
	\draw (6.95,-0.1) -- (7.15,0.1);
	\draw (7.15,-0.1) -- (7.35,0.1);
	
	\draw[->, very thick] (3,1.25) to (4.5,1.25);
	\draw[->, very thick] (3,6.25) to (4.5,6.25);
	\draw[<-, very thick] (1.25,3) to (1.25,4.5);
	\draw[<-, very thick] (6.25,3) to (6.25,4.5);
	
	\draw[very thick, red] (4.98,0) -- (4.98,1.27);
	\draw[very thick, blue] (5.02,0) -- (5.02,1.23);
	\draw[very thick, red] (4.98,1.27) -- (6.25,1.27);
	\draw[very thick, blue] (5.02,1.23) -- (7.5,1.23);
	\draw[very thick, red] (6.25,1.27) -- (6.25,2.5);
	\draw[very thick, red] (6.25,2.5) -- (7.5,2.5);
	\draw[very thick, blue] (7.5,1.23) -- (7.5,2.5);
	
	\node[rotate=-45] (con) at (6.865,1.865) {$\leftrightsquigarrow_{ch}$};
\end{tikzpicture}
\end{center}
\vspace{-3mm}
$A$, $B$ and $C$ are path shapes, and we would like to compute their pushout. The expected outcome is $D$, since we must identify the three squares by the previous construction. The problem is that the previous construction does not identify $\beta_1$ and $\beta_2$. Those two must be identified because they are both the top right corner of the same square (after identification). We hence need to quotient a little more to be able to define the face maps, as follows. Define $Z_n$ to be the quotient of $Y_n$ by the smallest equivalence relation $\approx$ such that if there are two sequences $u_0, \ldots, u_l$ and $v_0, \ldots, v_l$ such that:\\
\indent --  $[u_0,k] \approx [v_0,k]$,\\
\indent --  for every $0 \leq s \leq l$, $\alpha_{k+1+s}^{u_s} = \alpha_{k+1+s}^{v_s} = 1$,\\
\indent --  for every $0 \leq s < l$, $[u_s,k+s+1] \approx [u_{s+1},k+s+1]$ and $[v_s,k+s+1] \approx [v_{s+1},k+s+1]$,\\
\indent --  $(j_{k+1}^{u_0};1)\star\ldots\star(j_{k+l+1}^{u_l};1) = (j_{k+1}^{v_0};1)\star\ldots\star(j_{k+l+1}^{v_l};1)$,\\
then, $[u_l,k+l+1] \approx [v_l,k+l+1]$. $\col{D}$ is the pHDA $Z_N$ with the face maps being the smallest relations for inclusion such that:\\
\indent --  if $\alpha_k^u = 0$, then $\partial_{j_k^u}^0(\eqcl{u,k})$ is defined and is equal to $\eqcl{u,k-1}$,\\
\indent --  if $\alpha_{k+1}^u = 1$ then $\partial_{j_k^u}^1(\eqcl{u,k})$ is defined and is equal to $\eqcl{u,k+1}$,\\
\indent --  $\partial_{j_1 < \ldots < j_m}^{\beta_1, \ldots, \beta_m}\circ\partial_{i_1 < \ldots < i_n}^{\alpha_1, \ldots, \alpha_n} \subseteq \partial_{k_1 < \ldots < k_{n+m}}^{\gamma_1, \ldots, \gamma_{n+m}}$, for $(k_1, \ldots, k_{n+m} ; \gamma_1, \ldots, \gamma_{n+m}) = (i_1,\ldots, i_n ; \alpha_1, \ldots, \alpha_n) \star (j_1, \ldots, j_m ; \beta_1, \ldots, \beta_m)$.\\
The initial state is $\eqcl{\epsilon}$ and the labelling $\map{\lambda}{\col{D}}{L}$ maps $\eqcl{u,k}$ to $w_k^u$.

\begin{proposition}
\label{prop:colimit}
$\col{D}$ is the colimit of $D$ in $\phda$
\end{proposition}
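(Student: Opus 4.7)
The plan is to verify directly that $\col{D}$ together with the canonical maps $\eta_u \colon Du \to \col{D}$ defined by $\eta_u(k) = \eqcl{u,k}$ satisfies the universal property of a colimit in $\phda$. The argument splits into three pieces: checking that $\col{D}$ is a well-defined pHDA, showing that the $\eta_u$ form a cocone over $D$, and constructing a unique mediating morphism for any competing cocone.

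First I would show that the face operators on $Z_n$ are well-defined partial functions and that the labelling $\lambda(\eqcl{u,k}) = w_k^u$ is well-defined on $\approx$-classes. The generating cases $\partial_{j_k^u}^0(\eqcl{u,k}) = \eqcl{u,k-1}$ (when $\alpha_k^u = 0$) and $\partial_{j_{k+1}^u}^1(\eqcl{u,k}) = \eqcl{u,k+1}$ (when $\alpha_{k+1}^u = 1$) must respect both $\sim$ and $\approx$: compatibility with $\sim$ is immediate because every morphism $Df \colon Du \to Dv$ in $\ps$ preserves the primitive face relations and the labels, while compatibility with $\approx$ for the future face case is precisely the content of the fourth clause of $\approx$ invoked with $l=0$. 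The lax-functor axioms on $\col{D}$ then hold by construction, since the face operators are defined as the smallest relations closed under $\star$-composition. The cocone identities $\eta_v \circ Df = \eta_u$ follow from the second clause of $\sim$, and the remaining pHDA-morphism conditions for each $\eta_u$ are immediate from the definition of the structure on $\col{D}$.

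Now, given an arbitrary cocone $(\mu_u \colon Du \to Y)_{u \in \C}$ in $\phda$, the only possible mediating morphism is forced to be $h(\eqcl{u,k}) = \mu_u(k)$, which already establishes uniqueness. The core remaining task is well-definedness. That $(u,k) \sim (v,k)$ forces $\mu_u(k) = \mu_v(k)$ is immediate from the cocone condition. The main obstacle is to show that $[u,k] \approx [v,k]$ likewise forces the same equality in $Y$; I would establish this by induction on the sequence length $l$ in the generating clause of $\approx$. Assuming $\mu_{u_0}(k) = \mu_{v_0}(k)$, the element $\mu_{u_l}(k+l+1)$ is obtained from $\mu_{u_0}(k)$ by iteratively applying the future face maps of $Y$, because each $\mu_{u_s}$ is a morphism of pHDA and $\alpha_{k+1+s}^{u_s} = 1$ by hypothesis; symmetrically for $\mu_{v_l}(k+l+1)$. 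The lax-functor inclusions in $Y$, together with the equality $(j_{k+1}^{u_0};1)\star\ldots\star(j_{k+l+1}^{u_l};1) = (j_{k+1}^{v_0};1)\star\ldots\star(j_{k+l+1}^{v_l};1)$, force both iterated applications to land on the same element of $Y$, yielding the required equality.

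Finally I would verify that this $h$ is a morphism of $\phda$. Preservation of the initial state and the labelling are immediate from the definitions. For the partial face inclusions $h \circ \partial \subseteq \partial \circ h$, it suffices by minimality to check them on the generating face relations of $\col{D}$; on these they follow from the fact that each $\mu_u$ is itself a morphism of pHDA, and the closure under $\star$-composition then propagates the inclusion to all iterated face operators. Together with uniqueness, this completes the verification that $\col{D}$ is the colimit of $D$.
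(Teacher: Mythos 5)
Your proposal follows essentially the same route as the paper's own proof: verify that the quotient structure on $\col{D}$ yields a well-defined pHDA (with the future-face case of $\approx$ handled via the $l=0$ instance of its generating clause and the composite face maps reduced to $\star$-normal forms), check the cocone identities via the second clause of $\sim$, and then define the forced mediating morphism $\eqcl{u,k}\mapsto\mu_u(k)$, whose well-definedness on $\approx$-classes is established exactly as in the paper by induction on the generating sequence, chaining the future face maps through each $\mu_{u_s}$ and invoking the lax-functor inclusions together with the $\star$-equality. The argument is correct and matches the paper's decomposition step for step, so no further comparison is needed.
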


By \textbf{tree} we mean any pHDA that is the colimit of a diagram with values in path shapes. We denote by $\tr$ the full subcategory of trees.

	\subsection{The unique path properties of trees}
	\label{sub:unique}
	
\noindent\textbf{Failure of the unicity of paths.} Let us consider the pushout square above again. In particular, the pHDA on the bottom-right corner is a tree, by definition. However, there are two paths from $\alpha$ to $\beta$ (in red and blue). This actually comes from the fact that we needed to identify $\beta_1$ and $\beta_2$ to be able to define the face maps. This means that trees do not have the unique path property.\\

\noindent\textbf{Confluent homotopy.} A careful reader may have observed that the only difference between the two previous paths is that some future faces are swapped. Actually, this is the only obstacle for the unicity of paths for trees: there is a unique path modulo equivalence of paths that permutes arrows of the form $\xrightarrow{\_,1}$. That is what we call \textbf{confluent homotopy}. This confluent homotopy will be defined by restricting the elementary homotopies of \cite{vanglabbeek05} to be of only one type out of the four possible, which means our notion of homotopy makes fewer paths equivalent than the one from \cite{vanglabbeek05}. 

\begin{wrapfigure}{r}{0.25\textwidth}
		\vspace{-10mm}
\begin{tikzpicture}[scale=0.8]
	\node (11p) at (5,0) {};
	\node (12p) at (5,3) {};
	\node (21p) at (8,0) {};
	\node (22p) at (8,3) {};
	\draw[draw = white, fill = gray!50] (5,0) rectangle (8,3);
	\path[->,very thick,font=\scriptsize]
		(11p) edge (12p)
		(21p) edge (22p)
		(11p) edge (21p)
		(12p) edge (22p);
	\path[very thick, red]
		(6.5,1.5) edge (6.5,3)
		(6.5,3) edge (8,3);
	\path[very thick, blue]
		(6.5,1.5) edge (8,1.5)
		(8,1.5) edge (8,3);
	\path[very thick, dotted]
		(6,1) edge (6.5,1.5)
		(8,3) edge (8.5,3.5);
	\node at (5,0) {$\bullet$};
	\node at (5,3) {$\bullet$};
	\node at (8,0) {$\bullet$};
	\node at (8,3) {$\bullet$};
	\node at (8.3,2.25) {\textcolor{blue}{$\pi'$}};	
	\node at (7.25,3.3) {\textcolor{red}{$\pi$}};	
	\node[rotate=-45] (con) at (7.25,2.25) {$\leftrightsquigarrow_{ch}$};
\end{tikzpicture}
\vspace{-20mm}
\end{wrapfigure}
We say that a path 
$\pi =  x_0 \xrightarrow{j_1,\alpha_1} x_1 \xrightarrow{j_2,\alpha_2} \ldots \xrightarrow{j_n,\alpha_n} x_n$
is \textbf{elementary confluently homotopic} to a path
$\pi' =  x'_0 \xrightarrow{j'_1,\alpha'_1} x'_1 \xrightarrow{j'_2,\alpha'_2} \ldots \xrightarrow{j'_n,\alpha'_n} x'_n$, and denote by $\pi \leftrightsquigarrow_{ch} \pi'$,
if and only if there are $0 < s < t \leq n$ such that:\\
\noindent -- for all $k < s$ or $k \geq t$, $x_k = x'_k$,\\
\noindent -- for all $k < s$ or $k > t$, $j_k = j'_k$ and $\alpha_k = \alpha'_k$,\\
\noindent -- for all $s \leq k \leq t$, $\alpha_k = \alpha'_k = 1$,\\
\noindent -- $(j_{s},\alpha_{s})\star\ldots\star(j_t,\alpha_t) = (j'_{s},\alpha'_{s})\star\ldots\star(j'_t,\alpha'_t)$.\\
We denote by $\sim_{ch}$, and call \textbf{confluent homotopy}, the reflexive transitive closure of $\leftrightsquigarrow_{ch}$.

\begin{lemma}
\label{lem:colunique}
If $X$ is a tree, then for every element (of any dimension) $x$ of $X$, there is exactly one path modulo confluent homotopy from the initial state to $x$.
\end{lemma}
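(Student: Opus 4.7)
The plan is to use the explicit colimit construction from Section \ref{sub:colimits}. Write $X = \col{D}$ for some diagram $\map{D}{\C}{\ps}$, and fix $x \in X$. Pick any representative $(u,k)$ with $x = \eqcl{u,k}$. For \emph{existence}, note that the spine prefix $(0,\epsilon) \xrightarrow{j_1^u,\alpha_1^u} \ldots \xrightarrow{j_k^u,\alpha_k^u} (k)$ is by construction a path in $B\sigma_u = Du$ from the initial state to $(u,k)$, and the colimit cocone component $\iota_u : Du \to X$ sends it, via Proposition \ref{prop:morph}, to a path $\pi_x$ in $X$ from $\iota_u(0) = \eqcl{\epsilon}$ to $\iota_u(k) = x$. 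This is my candidate canonical path.

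For \emph{uniqueness}, the key structural observation is that every single-step face map in $X$ is generated by a basic face $\partial_{j^u_k}^{\alpha^u_k}$ coming from some spine $\sigma_u$: the closure clause $\partial_\star \circ \partial_\star \subseteq \partial_{\star\star}$ only produces \emph{compound} face maps, never new single ones. Hence for any path $\pi = y_0 \xrightarrow{i_1,\beta_1} \ldots \xrightarrow{i_n,\beta_n} y_n$ in $X$ from $\eqcl{\epsilon}$ to $x$, each step can be lifted to a basic step in some path shape $Du_s$ of the diagram, yielding a decomposition of $\pi$ into pieces that individually sit inside images $\iota_{u_s}(Du_s)$. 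I would then argue by induction on the length $n$ that $\pi \sim_{ch} \pi_x$. If the whole path lies in a single $\iota_u(Du)$, then it coincides (literally, up to $\sim_{ch}$) with the spine prefix of $\sigma_u$ because in a path shape the face maps are minimally defined and forbid shortcuts, so the path has to follow the spine. Otherwise, at some step the path transitions between two pieces, and one analyses the equivalence that allows the transition: if the intermediate state is identified by $\sim$ (via a morphism $Df$), then since $Df$ preserves the spine structure the two sub-paths are directly identified, while if it is identified by $\approx$ the definition provides sequences $u_0,\ldots,u_l$ and $v_0,\ldots,v_l$ all with $\alpha = 1$ and with matching $\star$-products of the $(j,1)$'s, which is precisely the data of an elementary confluent homotopy $\leftrightsquigarrow_{ch}$.

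The main obstacle will be step (iii): matching the combinatorics of the $\approx$ relation to that of $\leftrightsquigarrow_{ch}$, and being rigorous that every path can be successively ``pushed'' into the image of a single $\iota_u$ by chains of confluent homotopies rather than arbitrary permutations. In particular, one has to verify that the $\star$-product condition in the definition of $\approx$ produces the exact equality $(j_s,1)\star\ldots\star(j_t,1) = (j'_s,1)\star\ldots\star(j'_t,1)$ required by $\leftrightsquigarrow_{ch}$, and that no past-face ($\alpha=0$) steps of the path ever need to be permuted to effect the reduction. Past-face steps cannot be permuted by confluent homotopy, so it is crucial that $\approx$ only ever identifies points reachable by \emph{future}-face sequences, which is indeed exactly how it is defined in Section \ref{sub:colimits}. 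Once this correspondence is set up carefully, the induction closes and gives both existence and uniqueness of the path modulo $\sim_{ch}$.
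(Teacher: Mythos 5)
Your proposal is correct and follows essentially the same route as the paper: the paper factors the argument into the canonical paths $\pi_{u,k}$ obtained from the cocone maps $\iota_u$ (Lemma \ref{lem:pathcol}), the fact that $\eqcl{u,k}=\eqcl{v,k}$ implies $\pi_{u,k}\sim_{ch}\pi_{v,k}$ by induction on the derivation of the identification with the $\approx$-case supplying exactly the elementary confluent homotopies (Lemma \ref{lem:eqch}), and accessibility modulo confluent homotopy, i.e.\ the length-induction pushing an arbitrary path onto some $\pi_{u,k}$ one step at a time (Lemma \ref{lem:acc}). The ``main obstacle'' you flag is precisely the content of Lemma \ref{lem:eqch}, whose proof requires an inner induction building intermediate paths through $u_0,\ldots,u_l$ before applying a single $\leftrightsquigarrow_{ch}$ step, but your outline identifies all the right ingredients.
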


\vspace{2mm}
	
\noindent\textbf{Shortcuts.} The face maps of path shapes and of the colimits we computed in Section \ref{sub:colimits} are of a very particular form: we start by defining the $\partial_j^\alpha$ and we extend this definition to general $\partial_{j_1 < \ldots < j_n}^{\alpha_1, \ldots, \alpha_n}$. In a way, they are locally defined, and then extended to higher face maps. This means in particular that, in addition to having unique paths modulo confluent homotopy, they also do not have any `shortcut'. A possible shortcut can be defined as a generalisation of paths, in which we allow to make transitions that go, for example, from a point to a square or to a cube, not only to segments, a shortcut being such a possible shortcut which is not confluently homotopic to a path. Those shortcuts may occur in a pHDA, even if it has the unique path property. Concretely, by \textbf{shortcut} we mean the following situation: the face $\partial_{i_1 < \ldots < i_n}^{\alpha_1, \ldots, \alpha_n}(x)$ is defined, but there is no sequence $(j_1;\beta_1)\star\ldots\star(j_n;\beta_n) = (i_1 < \ldots < i_n;\alpha_1, \ldots, \alpha_n)$ such that $\partial_{j_n}^{\alpha_n}\circ\ldots\circ\partial_{j_1}^{\alpha_1}(x)$ is defined. By local-definedness of the face maps:

\begin{lemma}
\label{lem:colloc}
Trees do not have any shortcuts.
\end{lemma}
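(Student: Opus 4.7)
The plan is to exploit the explicit \emph{minimality} in the definition of the face maps of $\col{D}$ (and therefore of an arbitrary tree). Recall that these face maps are the smallest partial functions satisfying three generative clauses: (i) the singleton past face $\partial_{j_k^u}^0(\eqcl{u,k}) = \eqcl{u,k-1}$ whenever $\alpha_k^u = 0$, (ii) the singleton future face $\partial_{j_{k+1}^u}^1(\eqcl{u,k}) = \eqcl{u,k+1}$ whenever $\alpha_{k+1}^u = 1$, and (iii) closure under the composition inclusion $\partial_{\vec j}^{\vec \beta} \circ \partial_{\vec i}^{\vec \alpha} \subseteq \partial_{\vec i \star \vec j}^{\vec \alpha \star \vec \beta}$. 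Constructing this least fixed point stage by stage gives a stratification $\bigcup_{k\in\nat} R_k$: put in $R_0$ the singleton triples produced by (i) and (ii), and obtain $R_{k+1}$ from $R_k$ by a single application of (iii).

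The key lemma I will prove, by induction on $k$, is: if $(\vec i; \vec \alpha) = (i_1 < \ldots < i_n; \alpha_1, \ldots, \alpha_n)$ has length $n \geq 1$ and $\partial_{\vec i}^{\vec \alpha}(x) = y$ is in $R_k$, then there exists a full singleton decomposition $(j_1; \beta_1) \star \ldots \star (j_n; \beta_n) = (\vec i; \vec \alpha)$ such that the iterated composition $\partial_{j_n}^{\beta_n} \circ \ldots \circ \partial_{j_1}^{\beta_1}(x)$ is defined and equals $y$. The base case $k=0$ is immediate, since the triple is itself a singleton. For the inductive step, the triple in $R_{k+1}\setminus R_k$ arises via (iii) from two earlier triples $\partial_{\vec i'}^{\vec \alpha'}(x) = z$ and $\partial_{\vec j'}^{\vec \beta'}(z) = y$ in $R_k$ with $(\vec i; \vec \alpha) = (\vec i'; \vec \alpha') \star (\vec j'; \vec \beta')$; the induction hypothesis yields singleton decompositions of the two factors, and concatenating them produces the desired length-$n$ decomposition of $(\vec i; \vec \alpha)$.

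The only real technical obstacle is the combinatorics of the $\star$ operation on sequences: the concatenation argument above tacitly uses that $\star$ is associative, that is, that $((a_1) \star \ldots \star (a_p)) \star ((b_1) \star \ldots \star (b_q))$ equals $(a_1) \star \ldots \star (a_p) \star (b_1) \star \ldots \star (b_q)$ regardless of how the parentheses are arranged. This is a direct but slightly tedious check, reducible to the fact that $\star$ encodes composition of coface maps in $\Box$, which is associative; alternatively, it follows by a straightforward induction on $p+q$ using the clause-by-clause definition of $\star$. Once this associativity bookkeeping is in hand, the induction goes through cleanly.

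Putting everything together, assume a tree $X$ has a face $\partial_{\vec i}^{\vec \alpha}(x)$ defined. Write $X = \col{D}$ for some diagram $D$ valued in path shapes; the element $x$ is of the form $\eqcl{u,k}$, and the above induction provides a singleton decomposition $(j_1;\beta_1) \star \ldots \star (j_n;\beta_n) = (\vec i; \vec \alpha)$ with $\partial_{j_n}^{\beta_n} \circ \ldots \circ \partial_{j_1}^{\beta_1}(x)$ defined. This is precisely the negation of the existence of a shortcut at $x$, so $X$ has no shortcuts, completing the proof.
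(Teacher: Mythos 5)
Your proof is correct and follows essentially the same route as the paper: the paper dispatches this lemma in one line by appealing to the fact that the face maps of $\col{D}$ are the smallest relations generated by the singleton clauses and closed under $\star$-composition, which is precisely the minimality argument you make explicit. Your stratified least-fixed-point induction and the remark on associativity of $\star$ are a correct filling-in of the details the paper leaves implicit.
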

	
\vspace{2mm}
	
\noindent\textbf{Trees.} We say that a pHDA \textbf{has the unique path property modulo confluent homotopy} if it has no shortcut, and there is exactly one class of paths modulo confluent homotopy from the initial state to any state. Given such a pHDA $X$ and an element $x$ of $X$, by \textbf{depth of $x$} we mean the length of a path from the initial state to $x$ in $X$. Since homotopic paths have the same length, this is uniquely defined. We deduce from the previous discussions that:

\begin{proposition}
\label{prop:coltree}
Trees have unique path property modulo confluent homotopy.
\end{proposition}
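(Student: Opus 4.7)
The plan is short: Proposition~\ref{prop:coltree} is a direct combination of the two preceding lemmas. By definition, a pHDA has the unique path property modulo confluent homotopy precisely when two clauses hold: (i) it has no shortcut, and (ii) for every element (of any dimension), there is exactly one $\sim_{ch}$-class of paths from the initial state to that element. Lemma~\ref{lem:colloc} supplies clause (i) for trees, and Lemma~\ref{lem:colunique} supplies clause (ii), since "exactly one path modulo confluent homotopy" is, verbatim, "exactly one $\sim_{ch}$-class".

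Concretely, the proof would proceed as follows. Fix a tree $X$ and an element $x$ of $X$. By Lemma~\ref{lem:colunique}, there is a path from the initial state to $x$ that is unique up to $\sim_{ch}$, which is clause (ii). By Lemma~\ref{lem:colloc}, $X$ has no shortcuts, which is clause (i). Assembling these two facts matches the definition of the unique path property modulo confluent homotopy and finishes the argument.

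There is no real obstacle at the level of this proposition; the mathematical content lives entirely in the two lemmas (whose proofs rely on the explicit construction of $\col{D}$ via the quotients $\sim$ and $\approx$ in Section~\ref{sub:colimits}, and on the "smallest relations for inclusion" clause that defines the face maps of $\col{D}$). The only care required at this step is bookkeeping: Lemma~\ref{lem:colunique} is quantified over elements of arbitrary dimension, which is exactly the quantification in the definition of the unique path property modulo confluent homotopy, so no extra argument is needed to bridge the two.
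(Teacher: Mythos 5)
Your proposal is correct and matches the paper's own presentation: the proposition is introduced with ``we deduce from the previous discussions that,'' i.e.\ it is exactly the conjunction of Lemma~\ref{lem:colloc} (no shortcuts) and Lemma~\ref{lem:colunique} (exactly one $\sim_{ch}$-class of paths to every element, existence included in ``exactly one''), and your bookkeeping remark about the quantification over elements of arbitrary dimension is the only bridge needed. The paper's appendix proof looks longer only because it inlines the proofs of the two lemmas (via the paths $\pi_{u,k}$, the accessibility-modulo-confluent-homotopy lemma, and the inclusion condition defining the face maps of $\col{D}$) rather than citing them as black boxes.
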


In the following, we will prove the converse: trees, defined as colimits of path shapes are exactly those pHDA that have the unique path property modulo confluent homotopy. This will be done by proving that such a pHDA $X$ is isomorphic to its unfolding. A question that occurs now is the following. Much as the general framework of \cite{dubut16}, trees are colimits of paths. Everything tends to work well when those trees have a nice property, which we called \textbf{accessibility}, intuitively, that the colimit process do not `create' paths. This property is actually deeply related to the unicity of paths. Since this unicity fails in the case of pHDA, accessibility fails too. However, an accessibility modulo confluent homotopy holds: the colimit process in pHDA do not create confluent homotopy classes of paths.

	\subsection{Trees are unfoldings}
	\label{sub:unfolding}
	
	We are now constructing our unfolding $U(X)$ of a pHDA $X$ by giving an explicit definition, similar to \cite{vanglabbeek91,fahrenberg13}, and proving that this is a tree. We will prove that there is a covering $\map{\unf_X}{U(X)}{X}$, which in particular means that the unfolding $U(X)$ is $\ps$-bisimilar (in the general sense of \cite{joyal96}) to $X$, and that this covering is actually an isomorphism when $X$ has the unique path property modulo confluent homotopy.\\

\noindent\textbf{Unfolding of a pHDA.} Let us start with a few notations. Given a path $\pi = x_0 \xrightarrow{j_1,\alpha_1} x_1 \xrightarrow{j_2,\alpha_2} \ldots \xrightarrow{j_n,\alpha_n} x_n$ we note $e(\pi) = x_n$, $l(\pi) = n$ and $\pi_{-k} = x_0 \xrightarrow{j_1,\alpha_1} x_1 \xrightarrow{j_2,\alpha_2} \ldots \xrightarrow{j_{n-k},\alpha_{n-k}} x_{n-k}$. Given a pHDA $X$, its \textbf{unfolding} is the following pHDA:\\
\indent -- $U(X)_n$ is the set of equivalence classes $[\pi]$ of paths modulo confluent homotopy, such that $e(\pi)$ is of dimension $n$,\\
\indent -- the face maps are the smallest relations for inclusion such that:\\
\indent\indent $\bullet$  $\partial_i^1(\alpha) = [\pi \xrightarrow{i,1} \partial_i^1(e(\pi))]$, for any $\pi \in \alpha$ such that $\partial_i^1(e(\pi))$ is defined,\\
\indent\indent $\bullet$ $\partial_i^0(\alpha) = [\pi_{-1}]$ for any $\pi \in \alpha$ such that $\pi = \pi_{-1} \xrightarrow{i,0} e(\pi)$,\\
\indent\indent $\bullet$ $\partial_{j_1 < \ldots < j_m}^{\beta_1, \ldots, \beta_m}\circ\partial_{i_1 < \ldots < i_n}^{\alpha_1, \ldots, \alpha_n} \subseteq \partial_{k_1 < \ldots < k_{n+m}}^{\gamma_1, \ldots, \gamma_{n+m}}$, for $(k_1, \ldots, k_{n+m} ; \gamma_1, \ldots, \gamma_{n+m}) = (i_1,\ldots, i_n ; \alpha_1, \ldots, \alpha_n) \star (j_1, \ldots, j_m ; \beta_1, \ldots, \beta_m)$.\\
\indent -- the initial state is $[i]$,\\
\indent -- the labelling is given by $\lambda(\alpha) = \lambda(e(\pi))$ for $\pi \in \alpha$.\\
Following ideas from \cite{dubut16} again, the unfolding can be seen as the glueing of all possible executions of a system, but with care needed to handle confluent homotopy. Concretely:

\begin{proposition}
\label{prop:unftree}
The unfolding of a pHDA is a tree.
\end{proposition}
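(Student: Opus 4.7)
The plan is to exhibit $U(X)$ explicitly as the colimit of a diagram of path shapes, so that being a tree will follow directly from the definition.

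First, I would observe that confluent homotopy preserves underlying spines: an elementary step $\pi \leftrightsquigarrow_{ch} \pi'$ only permutes a block of future-face arrows whose joint $\star$-product is preserved, and the dimensions and labels of the intermediate cubes are determined by that $\star$-product. So every class $[\pi] \in U(X)$ carries a well-defined spine $\sigma_{[\pi]}$, and the prefix operation $[\pi] \mapsto [\pi_{-k}]$ descends to classes and is compatible with spine-prefix in $\spine$.

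Next, I would index a diagram $D$ by the paths of $X$ ordered by prefix, set $D(\pi) = B\sigma_\pi \in \ps$, and take $D$ on the prefix arrow $\pi_{-k} \hookrightarrow \pi$ to be the corresponding inclusion of path shapes. A cocone with apex $U(X)$ is then defined by $\iota_\pi(k) = [\pi_{-(l(\pi)-k)}]$. Each $\iota_\pi$ is a morphism of pHDA: it preserves the initial state, respects labels since homotopic paths end in cubes carrying the same label, and respects face maps because the face maps on both sides are the smallest relations (in the sense of inclusion) extending the same local rules $\partial_{j_k}^{\alpha_k}$ read off the spine. The cocone equations hold trivially, yielding a canonical morphism $\Phi \colon \col{D} \to U(X)$.

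Finally, I would verify that $\Phi$ is an isomorphism by comparing the two explicit constructions. Surjectivity on $n$-cubes is immediate, since $[\pi] = \iota_\pi(l(\pi))$. For injectivity, I would identify the two equivalence relations used in Section~\ref{sub:colimits} with equality of confluent-homotopy classes: the relation $\sim$ glues $(u,k)$ to $(v,k)$ whenever the paths $u$ and $v$ share a common prefix up to step $k$, which matches the equality $[u_{-(l(u)-k)}] = [v_{-(l(v)-k)}]$; and the defining condition of $\approx$ — two blocks of future-face arrows with identical $\star$-product — is by inspection exactly one elementary step $\leftrightsquigarrow_{ch}$. Chaining these covers all of $\sim_{ch}$, and face maps and labellings match by construction. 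The main obstacle is this last matching between the combinatorial quotient $\approx$ of Section~\ref{sub:colimits} and the inductively generated relation $\sim_{ch}$: one must extract from an elementary homotopy two sequences $u_0, \ldots, u_l$ and $v_0, \ldots, v_l$ of paths witnessing $\approx$ (essentially by taking nested prefixes of the two homotopic paths), and conversely reconstruct a chain of elementary confluent homotopies from any witnessing data for $\approx$.
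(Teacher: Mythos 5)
Your overall strategy---realising $U(X)$ as the colimit of an explicit diagram of path shapes---is the same as the paper's, but the diagram you choose does not work: the colimit of \emph{all} paths of $X$ ordered by prefix is strictly larger than $U(X)$, so your comparison map $\Phi$ is surjective but not injective. The failure is exactly at the step you flag as ``the main obstacle'': for this diagram, chaining $\sim$ and $\approx$ does \emph{not} cover all of $\sim_{ch}$. The closure rule defining $\approx$ can only conclude an identification $[u_l,k+l+1]\approx[v_l,k+l+1]$ at a position whose incoming transition is a future face ($\alpha=1$); it propagates an elementary homotopy through the permuted block and through any subsequent $1$-transitions, but it stops at the first subsequent $0$-transition, whereas $\leftrightsquigarrow_{ch}$ identifies the entire tails of the two paths. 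Concretely, let $X$ be a full square $c$ with an extra edge $e$ attached to its top corner $\beta$ by $\partial_1^0(e)=\beta$, and let $\pi,\pi'$ be the two length-$5$ paths that enter $c$, reach $\beta$ by the two permuted future faces, and then take $\xrightarrow{1,0}e$. Then $\pi\leftrightsquigarrow_{ch}\pi'$, so $[\pi]=[\pi']$ in $U(X)$; but in your $\col{D}$ the $\sim$-class of $(\pi,5)$ is determined by the length-$5$ prefix (which is $\pi$ itself, and differs from $\pi'$), and no instance of the closure rule has a $0$-transition in its conclusion, so $(\pi,5)\not\approx(\pi',5)$. Hence $\col{D}$ contains two copies of $e$ where $U(X)$ has one, and $U(X)$ is a proper quotient---not a retract---of the tree $\col{D}$, which does not by itself make it a tree.

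Two smaller slips in your setup point to the same difficulty. Confluent homotopy does not preserve the full underlying spine: permuting $\delta_1$ and $\delta_2$ out of a square labelled $ab$ changes the intermediate label from $b$ to $a$, so only the endpoint of a class is well defined; likewise the prefix operation $[\pi]\mapsto[\pi_{-k}]$ does not descend to classes when the truncation cuts inside a permuted block. This is precisely why the paper does not index its diagram by paths: it takes as objects pairs $(u,L)$ of a confluent homotopy \emph{class} $u$ together with a list $L$ recording a trailing block of future faces, and it puts exactly one generating morphism into each non-initial object, making an explicit \emph{choice} of predecessor for the $0$-transitions. That tree-shaped index category is what guarantees the colimit identifies no more and no less than confluent homotopy. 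To repair your argument you would either have to adopt such an indexing, or show directly that your quotient map $\col{D}\to U(X)$ admits a section so that Lemma \ref{lem:retract} applies; neither is immediate.
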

	
We can also define $\map{\unf_X}{U(X)}{X}$ as the function that maps $[\pi]$ to $e(\pi)$.

\begin{proposition}
\label{prop:unf}
$\unf_X$ is a covering, and so, $U(X)$ is $\ps$-bisimilar to $X$.
\end{proposition}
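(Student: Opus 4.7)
The plan is to check that $\unf_X$ is a well-defined morphism of pHDA, establish the unique lifting property against path-shape inclusions through Proposition~\ref{prop:bij}, and then read off the bisimilarity from an appropriate span.

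First I would verify that $\unf_X$ is a morphism of pHDA. It preserves the initial state (the trivial path at $i_X$ is mapped to $i_X$) and the labelling (by definition $\lambda([\pi]) = \lambda(e(\pi))$). For the face-map inclusion, the two direct cases follow immediately: $\unf_X(\partial_i^1([\pi])) = \partial_i^1(e(\pi))$, and when $\partial_i^0([\pi]) = [\pi_{-1}]$ via a representative $\pi = \pi_{-1} \xrightarrow{i,0} e(\pi)$, we get $\unf_X(\partial_i^0([\pi])) = e(\pi_{-1}) = \partial_i^0(e(\pi))$; compound face cases then propagate by induction using the third clause in the definition of $U(X)$.

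To show $\unf_X$ is a covering, I translate the lifting problem through Proposition~\ref{prop:bij}: a commutative square with vertical maps a path-shape inclusion $g : B\sigma \hookrightarrow B\sigma'$ and $\unf_X$ corresponds to a path $\tilde{\pi} = [\rho_0] \xrightarrow{j_1,\alpha_1} \ldots \xrightarrow{j_l,\alpha_l} [\rho_l]$ in $U(X)$ of shape $\sigma$ together with an extending path $\pi_y = x_0 \xrightarrow{} \ldots \xrightarrow{} x_{l'}$ in $X$ of shape $\sigma'$ with $e(\rho_k) = x_k$ for $k \leq l$. A lift is an extension of $\tilde{\pi}$ in $U(X)$ of shape $\sigma'$ projecting to $\pi_y$. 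I construct it step by step, defining for $k > l$ the path $\rho_k = \rho_{k-1} \xrightarrow{j_k, \alpha_k} x_k$; this is a valid path in $X$ in both cases $\alpha_k \in \{0,1\}$ since $\pi_y$ itself is a path, and by the definition of face maps in $U(X)$ we get $[\rho_k] = \partial_{j_k}^1([\rho_{k-1}])$ when $\alpha_k = 1$, and $\partial_{j_k}^0([\rho_k]) = [\rho_{k-1}]$ when $\alpha_k = 0$.

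Uniqueness of the lift is proved by induction on $k$. The future-face case is automatic because $\partial_{j_k}^1$ is single-valued in $U(X)$. The past-face case is the delicate one: if $[\rho'_k]$ is any class with $e(\rho'_k) = x_k$ and $\partial_{j_k}^0([\rho'_k]) = [\rho_{k-1}]$, some representative of $[\rho'_k]$ has the form $\tau_{-1} \xrightarrow{j_k, 0} x_k$ with $\tau_{-1} \sim_{ch} \rho_{k-1}$, and an elementary confluent homotopy modifies only an inner interval so it survives appending a shared suffix. Hence $\tau_{-1} \xrightarrow{j_k, 0} x_k \sim_{ch} \rho_k$ and $[\rho'_k] = [\rho_k]$. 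This establishes that $\unf_X$ is a covering. The $\ps$-bisimilarity then follows from the span $U(X) \xleftarrow{\mathrm{id}} U(X) \xrightarrow{\unf_X} X$, whose left leg is trivially open and whose right leg has just been shown to be open. The main technical obstacle is precisely the stability of elementary confluent homotopy under suffix concatenation, which is exactly what justifies using this homotopy notion (rather than that of \cite{vanglabbeek05}) in the construction of $U(X)$.
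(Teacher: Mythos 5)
Your proposal is correct and follows essentially the same route as the paper: verify that $\unf_X$ is a morphism via the two face-map cases, then build the lift step by step along the extension, splitting on $\alpha\in\{0,1\}$ and defining the new class as $[\rho_{k-1}\xrightarrow{j_k,\alpha_k}x_k]$. Your uniqueness argument for the past-face case (stability of confluent homotopy under appending a common suffix) is a welcome elaboration of a step the paper only asserts, but it is the same proof.
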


\noindent\textbf{The unique path property characterises trees.} When $X$ has exactly one class of paths modulo confluent homotopy from the initial state to any state, it is possible to define a function $\map{\eta_X}{X}{U(X)}$ that maps any element $x$ of $X$ to the unique confluent homotopy class to $x$. When furthermore $X$ does not have shortcuts, then $\eta$ is actually a morphism of pHDA. 

\begin{proposition}
\label{prop:natiso}
When $X$ has the unique path property modulo confluent homotopy, then $\eta_X$ is the inverse of $\unf_X$. In particular, $X$ is a tree.
\end{proposition}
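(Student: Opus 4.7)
The plan is to define $\eta_X$ explicitly, verify it is a morphism in $\phda$ (the step that genuinely uses the no-shortcut assumption), and then check by direct computation that $\unf_X \circ \eta_X = \mathrm{id}_X$ and $\eta_X \circ \unf_X = \mathrm{id}_{U(X)}$. The final clause then follows from Proposition \ref{prop:unftree}, since $X$ will be isomorphic to the tree $U(X)$.

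The unique path property gives a function on the underlying sets: send $x \in X$ to the unique class $[\pi_x]$ of paths from the initial state to $x$. Preservation of the initial state and of labels is immediate. For the inclusion axiom of morphisms of pHDA, I would first dispatch the one-step faces. If $\partial_i^1(x)$ is defined, extending $\pi_x$ by $\xrightarrow{i,1} \partial_i^1(x)$ yields a path which, by uniqueness, represents $\eta_X(\partial_i^1(x))$, and which realises the defining clause for $\partial_i^1$ in $U(X)$ applied to $\eta_X(x)$. If $\partial_i^0(x)$ is defined, appending $\xrightarrow{i,0} x$ to any representative of $\eta_X(\partial_i^0(x))$ produces a path to $x$ which, again by uniqueness, lies in $\eta_X(x)$ and realises the defining clause for $\partial_i^0$ in $U(X)$. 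Both subcases establish the equality $\eta_X \circ \partial_i^\alpha = \partial_i^\alpha \circ \eta_X$ wherever the left-hand side is defined.

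The multi-index case is where the no-shortcut hypothesis bites. Given that $\partial_{i_1<\ldots<i_k}^{\alpha_1,\ldots,\alpha_k}(x)$ is defined in $X$, the hypothesis furnishes a decomposition $(j_1;\beta_1) \star \cdots \star (j_k;\beta_k) = (i_1<\ldots<i_k;\alpha_1,\ldots,\alpha_k)$ such that $\partial_{j_k}^{\beta_k} \circ \cdots \circ \partial_{j_1}^{\beta_1}(x)$ is defined in $X$. A straightforward induction on $k$ using the one-step result yields $\partial_{j_k}^{\beta_k} \circ \cdots \circ \partial_{j_1}^{\beta_1}(\eta_X(x)) = \eta_X\bigl(\partial_{j_k}^{\beta_k} \circ \cdots \circ \partial_{j_1}^{\beta_1}(x)\bigr)$. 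The lax functor axiom on $X$ rewrites the argument on the right to $\partial_{i_1<\ldots<i_k}^{\alpha_1,\ldots,\alpha_k}(x)$, while on $U(X)$ it shows that $\partial_{i_1<\ldots<i_k}^{\alpha_1,\ldots,\alpha_k}(\eta_X(x))$ is defined and equals the left-hand side, giving the required identity.

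The compositions are then immediate: $\unf_X(\eta_X(x)) = e(\pi_x) = x$, and $\eta_X(\unf_X([\pi])) = \eta_X(e(\pi)) = [\pi]$ since $\pi$ itself is a path to $e(\pi)$ and the unique-class property forces the result. Both $\eta_X$ and $\unf_X$ (by Proposition \ref{prop:unf}) are morphisms in $\phda$, so they are inverse isomorphisms, and $X \cong U(X)$ is a tree. The main obstacle is precisely the morphism verification for multi-index faces: the face structure of $U(X)$ is generated by the one-step clauses plus the lax axioms, so a multi-index face of $x$ that is defined in $X$ can only be transferred to $U(X)$ by first factoring it through one-step faces, which is exactly what the absence of shortcuts guarantees.
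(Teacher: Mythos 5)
Your proposal is correct and follows essentially the same route as the paper's proof: define $\eta_X$ via the unique homotopy class of paths, verify the morphism property by reducing multi-index faces to iterated one-step faces using the no-shortcut hypothesis and the lax-functor inclusions on both $X$ and $U(X)$, and then check the two composites directly. The only difference is organisational — you isolate the one-step cases as a preliminary lemma before the induction on $k$, whereas the paper folds both subcases into a single induction — and you make explicit the final appeal to Proposition~\ref{prop:unftree}, which the paper leaves implicit.
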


Together with Proposition \ref{prop:coltree}, this implies the following:

\begin{theorem}
Trees are exactly the pHDA that have the unique path property modulo confluent homotopy.
\end{theorem}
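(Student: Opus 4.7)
The theorem is essentially the combination of two results already established in the section, so the plan is to assemble them into a single equivalence rather than prove anything substantially new.

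For the forward implication, the plan is to invoke Proposition \ref{prop:coltree} directly: by definition, a tree is a pHDA obtained as $\col{D}$ for some diagram $\map{D}{\C}{\ps}$, and Proposition \ref{prop:coltree} already asserts that any such colimit has the unique path property modulo confluent homotopy (the two ingredients being Lemma \ref{lem:colunique} for unicity and Lemma \ref{lem:colloc} for the absence of shortcuts).

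For the converse, suppose $X$ is a pHDA with the unique path property modulo confluent homotopy. The plan is to use the unfolding machinery: Proposition \ref{prop:unftree} tells us that $U(X)$ is always a tree, and Proposition \ref{prop:natiso} tells us that under the unique path hypothesis the map $\map{\eta_X}{X}{U(X)}$ is an inverse to $\map{\unf_X}{U(X)}{X}$. Composing, we get an isomorphism $X \cong U(X)$ in $\phda$.

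The last step to spell out is that being a tree is invariant under isomorphism, which is immediate from the universal property: if $U(X) = \col{D}$ realises the colimit of some diagram $\map{D}{\C}{\ps}$ and $X \cong U(X)$, then $X$ together with the composed cocone also satisfies the universal property of $\col{D}$, so $X$ itself is a colimit of the same diagram $D$ and hence a tree.

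I do not expect a real obstacle here, since all the substantive content (existence of colimits, explicit construction of $U(X)$, verification that $\eta_X$ and $\unf_X$ are mutually inverse, and the fact that the unfolding is a tree) has been carried out in the previous propositions. The only subtlety worth a sentence in the write-up is emphasising that both conditions in the unique path property are genuinely needed for $\eta_X$ to be a morphism of pHDA: unicity is what makes $\eta_X$ well-defined as a function, while the absence of shortcuts is what forces $\eta_X$ to commute with \emph{all} partial face maps (not merely the $\partial_j^\alpha$), so that it lands in $\phda$ rather than in some looser category.
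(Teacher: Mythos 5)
Your proposal is correct and follows essentially the same route as the paper, which obtains the theorem by combining Proposition \ref{prop:coltree} for the forward direction with Proposition \ref{prop:natiso} (whose statement already includes ``in particular, $X$ is a tree'') for the converse. Your explicit remark that being a tree is stable under isomorphism, and your observation on which half of the hypothesis makes $\eta_X$ well-defined versus a genuine morphism of pHDA, merely spell out steps the paper leaves implicit.
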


Another consequence is that this isomorphism $\eta_X$ is actually natural (in the categorical sense) and is part of an adjunction, which implies that trees form a coreflective subcategory of pHDA:



\begin{corollary}
\label{theo:coref}
$U$ extends to a functor, which is the right adjoint of the embedding $\map{\iota}{\tr}{\phda}$. Furthermore, this is a coreflection.
\end{corollary}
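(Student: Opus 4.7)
The strategy is to construct $U$ as a functor, exhibit $\unf$ as the counit of the adjunction $\iota \dashv U$, and use Proposition \ref{prop:natiso} to produce the unit as a natural isomorphism, which is precisely what makes this adjunction a coreflection.

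First I extend $U$ to morphisms. For $\map{f}{X}{Y}$ in $\phda$, define $\map{U(f)}{U(X)}{U(Y)}$ by $U(f)([\pi]) = [f \circ \pi]$, where $f \circ \pi$ is the pointwise image, which is a path in $Y$ by Proposition \ref{prop:morph}. Well-definedness is the key technical point: if $\pi \leftrightsquigarrow_{ch} \pi'$ via witnesses $s, t$, then $f \circ \pi \leftrightsquigarrow_{ch} f \circ \pi'$ via the same indices, since the defining conditions of $\leftrightsquigarrow_{ch}$ depend only on the data $(j_k, \alpha_k)$, which are untouched by pointwise application of $f$. Preservation of the initial state and labelling is immediate, and the face map inclusions follow because $f$ preserves both the last element of a path and the ``extension by a future face'' operation used to define $\partial_i^1$ in $U(Y)$. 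Functoriality of $U$ is then routine.

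Naturality of $\unf$ is direct: $f(\unf_X([\pi])) = f(e(\pi)) = e(f \circ \pi) = \unf_Y(U(f)([\pi]))$. It remains to establish the universal property: for every tree $T$ and every morphism $\map{g}{\iota T}{X}$ in $\phda$, there exists a unique $\map{\tilde{g}}{T}{U(X)}$ in $\tr$ such that $\unf_X \circ \iota \tilde{g} = g$. For existence, take $\tilde{g} = U(g) \circ \eta_T$, where $\eta_T$ is the inverse of $\unf_T$ supplied by Proposition \ref{prop:natiso}; the required identity follows from naturality of $\unf$ together with $\unf_T \circ \eta_T = \mathrm{id}_T$.

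The main obstacle is uniqueness. Given another candidate $\map{h}{T}{U(X)}$ with $\unf_X \circ h = g$, I proceed by induction on the depth of an element $t \in T$, which is well-defined because $T$ is a tree and confluently homotopic paths have the same length. For $t = i_T$, preservation of the initial state forces $h(i_T) = [i_X] = \tilde{g}(i_T)$. For the inductive step, the unique path $\pi_t$ in $T$ to $t$ ends with a face step $\xrightarrow{j,\alpha}$ from some $t'$ of strictly smaller depth, and absence of shortcuts in $T$ (Lemma \ref{lem:colloc}) ensures that $t$ and $t'$ are actually related by the single face map $\partial_j^\alpha$. Applying $h$ and using that it preserves face maps as relations, together with the explicit description of $\partial_j^\alpha$ in $U(X)$, yields $h(t) = \tilde{g}(t)$. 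This completes the adjunction $\iota \dashv U$; since $\iota$ is the inclusion of a full subcategory it is fully faithful, and the unit $\eta$ being a natural isomorphism (Proposition \ref{prop:natiso}) confirms that the adjunction is a coreflection.
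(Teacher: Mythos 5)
Your proof is correct, but it establishes the adjunction by a different route than the paper. The paper exhibits $\eta$ as unit and $\unf$ as counit and checks the two triangle identities directly; the only non-trivial one, $U(\unf_X)\circ\eta_{U(X)} = \text{id}_{U(X)}$, is reduced to the identity $U(\unf_X) = \unf_{U(X)}$, which is proved by applying the accessibility lemma (Lemma \ref{lem:acc}) to the tree $U(X)$ to see that every path in $U(X)$ to $[\pi]$ has the canonical form $[\pi_{-n}] \to \cdots \to [\pi_{-1}] \to [\pi]$. You instead verify the couniversal property of $\unf_X$: existence of the factorisation $\tilde{g} = U(g)\circ\eta_T$ via naturality of $\unf$ and Proposition \ref{prop:natiso}, and uniqueness by induction on depth in the tree $T$, using that a morphism into $U(X)$ lying over $g$ is forced step by step along the (essentially unique) path to each element. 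Both arguments are sound and produce the same unit and counit; yours avoids the explicit computation $U(\unf_X) = \unf_{U(X)}$ at the cost of the depth induction, which is essentially a small instance of the argument the paper uses for Proposition \ref{prop:cofi} (uniqueness of lifts against coverings, applied to the covering $\unf_X$). One small point worth making explicit in your inductive step: in the case where the last step of the path to $t$ is $t' \xrightarrow{j,0} t$, preservation of face maps only gives $\partial_j^0(h(t)) = h(t')$, which constrains $h(t)$ from below; to pin $h(t)$ down you must also invoke the hypothesis $\unf_X\circ h = g$ (so that the class $h(t)$ ends at $g(t)$) together with the explicit description of $\partial_j^0$ in $U(X)$, forcing $h(t) = [\rho \xrightarrow{j,0} g(t)]$ for $\rho \in h(t')$. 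You gesture at this but do not say where $\unf_X\circ h = g$ enters; with that spelled out the argument is complete.
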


\section{Cofibrant objects}
\label{sec:cofi}

Cofibrant objects are another type of `simple objects', coming from homotopy theory, more particularly the language of model categories from \cite{quillen67}. Those cofibrant objects are those whose unique morphism from the initial object is a cofibration. Intuitively (intuition which holds at least in cofibrantly generated model structures \cite{hirschhorn03}), this means that cofibrant objects are those objects constructed from `nothing', using only very basic constructions (generators of cofibrations). In the case of the classical model structure on topological spaces (Kan-Quillen), those spaces are those constructed from the empty space by adding `cells', which produces what is called CW-complexes. In this section, we want to mimic this idea with trees: trees are those pHDA constructed from an initial state by only extending paths. We also want to emphasize that much as CW-complexes gives a kind of homotopy type of a space, trees gives a concurrency type of a pHDA, in the sense that there is a canonical way to produce an equivalent cofibrant object out of any object, which is called the \textbf{cofibrant replacement} in homotopy theory. In concurrency theory, this is the unfolding.

\subsection{Cofibrant objects in $\phda$}

\begin{wrapfigure}{r}{0.21\textwidth}
		\vspace{-9mm}
	\begin{tikzpicture}[scale=1.5]
		\node (A) at (0,0.9) {$\ast$};
		\node (B) at (1.2,0.9) {$Y$};
		\node (C) at (0,0) {$X$};
		\node (D) at (1.2,0) {$Z$};
		\path[->,font=\scriptsize]
		(A) edge node[above]{$!$} (B)
		(B) edge node[right]{$f$} (D)
		(C) edge node[below]{$g$} (D)
		(A) edge node[left]{$!$} (C);
		\path[->,font=\scriptsize,dotted]
		(C) edge node[above]{$h$} (B);
	\end{tikzpicture}
\vspace{-15mm}
\end{wrapfigure}
Following the language of model structures from \cite{quillen67}, we say that a pHDA $X$ is \textbf{cofibrant} if for every $\ps$-open morphism $\map{f}{Y}{Z}$ and every morphism $\map{g}{X}{Z}$, there is a morphism $\map{h}{X}{Y}$, such that $f\circ h = g$. That is, a partial HDA $X$ is cofibrant if and only if every $\ps$-open morphism has the right lifting property with respect to the unique morphism from $\ast$ to $X$. 
	
	\subsection{Cofibrant objects are exactly trees}
	
	In this section, we would like to prove the following:
	
	\begin{theorem}
	The cofibrant objects are exactly trees.
	\end{theorem}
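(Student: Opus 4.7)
Assume $X$ is cofibrant. By Proposition \ref{prop:unf}, $\unf_X : U(X) \to X$ is a $\ps$-covering and in particular $\ps$-open. Applying the cofibrancy lifting property with $f = \unf_X$ and $g = \text{id}_X$ yields $\map{h}{X}{U(X)}$ with $\unf_X \circ h = \text{id}_X$. To establish $X \cong U(X)$ (so that $X$ is a tree), it remains to check $h \circ \unf_X = \text{id}_{U(X)}$. We have $\unf_X \circ (h \circ \unf_X) = \unf_X = \unf_X \circ \text{id}_{U(X)}$, and the adjunction $\iota \dashv U$ of Corollary \ref{theo:coref}, evaluated at the tree $U(X)$, gives a bijection $\text{Hom}_\phda(U(X), U(X)) \cong \text{Hom}_\phda(U(X), X)$ whose backward leg sends $m \mapsto \unf_X \circ m$; its injectivity forces the two candidates to agree.

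\textbf{Direction 2: tree implies cofibrant.} Let $X$ be a tree, $\map{f}{Y}{Z}$ a $\ps$-open morphism, and $\map{g}{X}{Z}$. The plan is to build $\map{h}{X}{Y}$ by induction on the depth of $x \in X$. Set $h(i_X) := i_Y$. For $x$ of depth $n > 0$, pick any path $\pi = \pi_{-1} \xrightarrow{j,\alpha} x$ of length $n$ with $y = e(\pi_{-1})$; the already-defined partial lift of $g \circ p_{\pi_{-1}}$ ending at $h(y)$ (using the bijection from Proposition \ref{prop:bij}), together with $g \circ p_\pi$, forms a lifting problem along $f$, and $\ps$-openness applied to the path-shape inclusion $B\sigma_{\pi_{-1}} \hookrightarrow B\sigma_\pi$ extends the lift by one step; take $h(x)$ to be its new endpoint. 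Then $h(x) = \partial_j^1(h(y))$ is forced when $\alpha = 1$; when $\alpha = 0$, openness merely supplies some $h(x)$ with $\partial_j^0(h(x)) = h(y)$, a choice we lock in. In both cases $f(h(x)) = g(x)$ by construction. Preservation of the initial state and labelling is automatic, and compatibility with general composite face maps reduces, via Lemma \ref{lem:colloc} (no shortcuts in trees), to the single-face case already handled, combined with the lax functor inclusions in $Y$.

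\textbf{Well-definedness and main obstacle.} The critical step is to show that $h(x)$ does not depend on the chosen path, which by Proposition \ref{prop:coltree} reduces to invariance under an elementary confluent homotopy with swap section $[s,t]$ consisting of $\alpha = 1$ steps satisfying $(j_s;1) \star \cdots \star (j_t;1) = (j'_s;1) \star \cdots \star (j'_t;1)$. By the inductive hypothesis at depth $s - 1$, both paths produce the same $\tilde{x}_{s-1} = h(x_{s-1})$; the two iterated future-face compositions $\partial_{j_t}^1 \cdots \partial_{j_s}^1(\tilde{x}_{s-1})$ and $\partial_{j'_t}^1 \cdots \partial_{j'_s}^1(\tilde{x}_{s-1})$ are both contained, by the lax functor inclusions in $Y$, in the single composite $\partial_{k_1 < \cdots < k_l}^{1,\ldots,1}(\tilde{x}_{s-1})$, so they coincide because the face maps of $Y$ are partial \emph{functions}. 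Positions outside $[s,t]$ are identical in both paths, so any prior $\alpha = 0$ choices are reused consistently. The main obstacle is precisely this well-definedness argument: confluent homotopy is tailored to permute only future-face steps exactly so that partial-function uniqueness can be leveraged, while past-face lifts are free but occur at identical positions across homotopic representatives. Carefully unwinding the $\star$-product combinatorics, where the tree hypothesis is used in full, is the technical heart.
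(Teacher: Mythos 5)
Your proof is correct, and the second direction (trees are cofibrant) follows essentially the same route as the paper: induction on depth, lifting one path-shape extension at a time via $\ps$-openness, a case split on whether the final step of the chosen path is a past or a future face, and well-definedness in the future-face case secured exactly as in the paper by observing that the two iterated future-face compositions are sub-relations of the same composite partial face map of $Y$, hence agree wherever both are defined. The first direction is where you genuinely diverge. The paper stops at the retraction $\unf_X\circ h = \mathrm{id}_X$ and invokes Lemma \ref{lem:retract} (a retract of a tree is a tree), whose proof re-runs the unique-path-property characterisation through the section and the retraction. You instead upgrade the retraction to an isomorphism via the universal property of the counit of the coreflection $\iota\dashv U$ from Corollary \ref{theo:coref}: since $m\mapsto \unf_X\circ m$ is a bijection from $\mathrm{Hom}(U(X),U(X))$ to $\mathrm{Hom}(U(X),X)$ and both $h\circ\unf_X$ and $\mathrm{id}_{U(X)}$ are sent to $\unf_X$, they coincide. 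This is a clean, purely formal argument that proves slightly more (every cofibrant object is isomorphic to its unfolding, not merely a retract of a tree), at the price of relying on the full adjunction of Corollary \ref{theo:coref} rather than on the elementary Lemma \ref{lem:retract}; there is no circularity, since that corollary is established independently in Section \ref{sec:unfo}.
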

	
	\begin{wrapfigure}{r}{0.21\textwidth}
		\vspace{-9mm}
	\begin{tikzpicture}[scale=1.5]
		\node (A) at (0,0.9) {$\ast$};
		\node (B) at (1.2,0.9) {$U(X)$};
		\node (C) at (0,0) {$X$};
		\node (D) at (1.2,0) {$X$};
		\path[->,font=\scriptsize]
		(A) edge node[above]{$!$} (B)
		(B) edge node[right]{unf} (D)
		(C) edge node[below]{$\text{id}_X$} (D)
		(A) edge node[left]{$!$} (C);
		\path[->,font=\scriptsize,dotted]
		(C) edge node[above]{$h$} (B);
	\end{tikzpicture}
\vspace{-15mm}
\end{wrapfigure}
	 Let us start by giving the idea of the proof of the fact that cofibrant objects are trees. By Proposition \ref{prop:unf}, $\unf_X$ is a covering, so is open. This means that for every cofibrant object $X$, there is a morphism $\map{h}{X}{U(X)}$ such that $\unf_X\circ h = \text{id}_X$, that is, $X$ is a retract of its unfolding. Since we know that the unfolding is a tree by Proposition \ref{prop:unftree}, it is enough to observe the following:

	\begin{lemma}
	\label{lem:retract}
	A retract of a tree is a tree.
	\end{lemma}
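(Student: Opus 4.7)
The plan is to use the characterisation proved just above: it suffices to show that if $X$ is a retract of a tree $Y$, with $\map{s}{X}{Y}$, $\map{r}{Y}{X}$ and $r \circ s = \text{id}_X$, then $X$ has the unique path property modulo confluent homotopy (no shortcuts, exactly one class of paths modulo $\sim_{ch}$ from the initial state to any element). The whole argument then consists of lifting every obstruction in $X$ to $Y$ via $s$, applying the tree property of $Y$, and pushing the result back down via $r$.

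For the absence of shortcuts: suppose $\partial_{i_1 < \ldots < i_n}^{\alpha_1, \ldots, \alpha_n}(x)$ is defined in $X$. Since $s$ is a morphism, the inclusion $s_n \circ \partial \subseteq \partial \circ s_{n+k}$ forces $\partial_{i_1 < \ldots < i_n}^{\alpha_1, \ldots, \alpha_n}(s(x))$ to be defined in $Y$. Because $Y$ is a tree and so has no shortcuts (Lemma \ref{lem:colloc}), there exists a factorisation $(j_1;\beta_1)\star\ldots\star(j_n;\beta_n) = (i_1 < \ldots < i_n;\alpha_1, \ldots, \alpha_n)$ such that $\partial_{j_n}^{\beta_n}\circ\ldots\circ\partial_{j_1}^{\beta_1}(s(x))$ is defined in $Y$. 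Applying $r$ and using the same inclusion $r \circ \partial \subseteq \partial \circ r$ step by step, $\partial_{j_n}^{\beta_n}\circ\ldots\circ\partial_{j_1}^{\beta_1}(x)$ is defined in $X$, which rules out the shortcut.

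For the unique path property: existence is immediate because $Y$, being a colimit of path shapes, has every element reachable by a path from $i_Y$, and path shapes inject into $Y$ via the colimit cone; if $x \in X$, take the path class from $i_Y$ to $s(x)$ in $Y$ and apply $r$, obtaining (by Proposition \ref{prop:morph} and $r(i_Y) = i_X$, $r(s(x)) = x$) a path from $i_X$ to $x$ in $X$. For uniqueness, given two paths $\pi_1, \pi_2$ from $i_X$ to $x$ in $X$, their images $s(\pi_1), s(\pi_2)$ are paths from $i_Y$ to $s(x)$ in $Y$, hence confluently homotopic there. The key observation is that any morphism of pHDA preserves $\leftrightsquigarrow_{ch}$: the defining conditions involve only the indices $j_k$, the signs $\alpha_k$ and equalities $x_k = x'_k$ outside the permuted block, all of which transfer verbatim along any function-valued op-lax transformation. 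Thus $s(\pi_1) \sim_{ch} s(\pi_2)$ implies $r(s(\pi_1)) \sim_{ch} r(s(\pi_2))$, i.e., $\pi_1 \sim_{ch} \pi_2$.

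I do not anticipate a genuine obstacle here: everything reduces to the formal properties of function-valued op-lax transformations ($f \circ \partial \subseteq \partial \circ f$ and preservation of the initial state and labelling), together with the colimit-based reachability of trees. The only step that deserves explicit verification is that morphisms preserve elementary confluent homotopies, but this is immediate from the definition of $\leftrightsquigarrow_{ch}$, since none of its four clauses refers to any structure not preserved by a morphism.
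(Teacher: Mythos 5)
Your proof is correct and follows essentially the same route as the paper: reduce to the characterisation of trees by the unique path property modulo confluent homotopy, transfer absence of shortcuts and accessibility through the section $s$ and retraction $r$, and use the fact that morphisms of pHDA preserve confluent homotopy (the paper isolates this as Lemma~\ref{lem:mocoho}) to get uniqueness via $\pi = r(s(\pi)) \sim_{ch} r(s(\pi')) = \pi'$. No gaps.
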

	
	Intuitively, a pHDA is the retract of a tree only when it is obtain by retracting branches. This can only produce a tree. For the converse:
	
	\begin{proposition}
	\label{prop:cofi}
	A tree is a cofibrant object. Furthermore, if $\map{f}{Y}{Z}$ is a covering, then the lift $\map{h}{X}{Y}$ is unique.
	\end{proposition}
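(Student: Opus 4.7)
The plan is to define the lift $\map{h}{X}{Y}$ by induction on the depth of elements of $X$, lifting paths through $f$ one step at a time; a single induction will handle both existence (trees are cofibrant) and uniqueness for coverings. By Propositions~\ref{prop:coltree} and~\ref{prop:natiso}, every element $x$ of the tree $X$ has a well-defined depth $d(x)$, equal to the common length of any path $i_X \rightsquigarrow x$. I set $h(i_X) := i_Y$. At depth $n+1$, given $x$ and a chosen path $\pi: i_X \to \cdots \to x_n \xrightarrow{j_{n+1},\alpha_{n+1}} x$, Proposition~\ref{prop:bij} identifies $\pi$ and its prefix $\pi_{-1}$ with morphisms $B\sigma_\pi \to X$ and $B\sigma_{\pi_{-1}} \to X$, and the prefix inclusion $B\sigma_{\pi_{-1}} \hookrightarrow B\sigma_\pi$ is a morphism in $\ps$. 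Composing the prefix with the already-defined $h$ yields a morphism $B\sigma_{\pi_{-1}} \to Y$ which, together with $g \circ \pi$, $f$, and this inclusion, forms a commutative square. The $\ps$-openness of $f$ provides a diagonal $B\sigma_\pi \to Y$, and I define $h(x)$ as the image of the endpoint of $B\sigma_\pi$ under this diagonal.

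The main obstacle is well-definedness: showing $h(x)$ is independent of the chosen path $\pi$ and the chosen diagonal, and that \emph{all} face maps (not only those realised along the chosen paths) are preserved. Path independence reduces to the case of an elementary confluent homotopy $\pi \leftrightsquigarrow_{ch} \pi'$; there, the defining equality $(j_s;1)\star\cdots\star(j_t;1) = (j'_s;1)\star\cdots\star(j'_t;1)$ combined with the lax functor inclusion $Fj \circ Fi \subseteq F(j \circ i)$ in $Y$ forces both candidate endpoints to equal the unique value of the common $\star$-composition of future-face operators applied at the shared starting point of the homotopy region. Preservation of arbitrary face relations reduces to preservation of single-step faces via Lemma~\ref{lem:colloc}: in a tree there are no shortcuts, so every face of $x$ unfolds as a composition of single-step face operators along some path; the single-step faces are preserved by construction, and the lax inclusion propagates this to arbitrary compositions. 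Labelling and initial-state preservation are immediate from the construction.

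For the uniqueness half, when $f$ is a covering the diagonal in the right lifting property is unique at each inductive step, so every $h(x)$ is forced and $h$ is the unique lift of $g$ along $f$. Notably, the whole argument relies crucially on the lax (rather than strict) formulation of pHDA from Section~\ref{sub:phda}: the inclusion $Fj \circ Fi \subseteq F(j \circ i)$ is exactly what reconciles distinct face-operator compositions into a single value, and is the mechanism by which the inductive construction interacts smoothly with confluent homotopy.
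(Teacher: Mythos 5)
Your proposal is correct and follows essentially the same route as the paper's proof: a lift built by induction on depth using the right lifting property of the open map $f$ against prefix inclusions of path shapes, with well-definedness handled by reduction to elementary confluent homotopies and the partiality/inclusion structure of the face maps, extension to arbitrary face maps via the absence of shortcuts, and uniqueness forced step by step when $f$ is a covering. The only (cosmetic) difference is that you treat the two kinds of final transition uniformly through the lifting square, whereas the paper separates the $\xrightarrow{i,0}$ and $\xrightarrow{i,1}$ cases explicitly.
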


	The lift $h$ is constructed by induction as follows. We define $X_n$ as the restriction of $X$ to elements whose depth is smaller than $n$, and the face maps $\partial_{j_1 < \ldots < j_m}^{\alpha_1, \ldots, \alpha_m}(x)$ are defined if and only if $\partial_{j_1 < \ldots < j_m}^{\alpha_1, \ldots, \alpha_m}(x)$ is defined in $X$ and belongs to $X_n$. We then construct $\map{h_n}{X_n}{Y}$ using the unique path property modulo confluent homotopy, in a natural way (in the categorical meaning), i.e., such that $h_{n}\circ\kappa_n = h_{n-1}$, where $\map{\kappa_n}{X_{n-1}}{X_{n}}$ is the inclusion. $h$ is then the inductive limit of those $h_n$. This proof can be seen as a small object argument.

	\subsection{The unfolding is universal}
	
	As an application of the previous theorem, we would like to prove that the unfolding is universal. As in the case of covering spaces in algebraic topology, a covering corresponds to a partial unrolling of a system, in the sense that we can unroll some loops or even partially unroll a loop (imagine for example executing a few steps of a while-loop). In this sense, we can describe the fact that a covering unrolls more than another one, and that, an unfolding is a complete unrolling: since the domain is a tree, it is impossible to unroll more. Actually, much as the topological and the groupoidal cases (see \cite{may99} for example), unfoldings are the only such maximal unrollings among coverings: they are initial among coverings, that is why we call them `universal'. In a way, this says that our definition of unfolding is the only reasonable one. Concretely, we say that a $\ps$-covering is \textbf{universal} if its domain is a tree.

	\begin{corollary}
	\label{prop:univ}
	If $\map{f}{Y}{X}$ is a universal covering, then for every covering $\map{g}{Z}{X}$ there is a unique map $\map{h}{Y}{X}$ such that $f = g\circ h$. Furthermore, $h$ is itself a covering. Consequently, the universal covering is unique up-to isomorphism, and is given by the unfolding.
	\end{corollary}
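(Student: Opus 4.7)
The plan is to reduce Corollary \ref{prop:univ} to Proposition \ref{prop:cofi} (trees are cofibrant, and lifts against coverings are unique) plus Propositions \ref{prop:unftree} and \ref{prop:unf} (the unfolding is a tree and a covering). I read the statement with the target of $h$ being $Z$ (so that $g\circ h$ typechecks): since $f : Y \to X$ is a universal covering, $Y$ is a tree; coverings are in particular $\ps$-open, so applying Proposition \ref{prop:cofi} to the cofibrant $Y$, the open map $g : Z \to X$, and the morphism $f : Y \to X$ itself yields a unique $h : Y \to Z$ with $g\circ h = f$.

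The main step to check is that this $h$ is itself a covering. Take a lifting problem for $h$ against an inclusion of path shapes $i : B\sigma' \hookrightarrow B\sigma$, given by maps $\map{u}{B\sigma'}{Y}$ and $\map{v}{B\sigma}{Z}$ with $h\circ u = v\circ i$. Post-composing with $g$ produces a lifting problem for $f$, since
\[
f\circ u \;=\; g\circ h\circ u \;=\; g\circ v\circ i.
\]
Because $f$ is a covering, there exists a unique $\theta : B\sigma\to Y$ with $\theta\circ i = u$ and $f\circ \theta = g\circ v$. This $\theta$ is our candidate lift; it remains to show $h\circ\theta = v$. Both $h\circ\theta$ and $v$ are maps $B\sigma\to Z$ that agree with $u$ after restriction along $i$ (since $h\circ\theta\circ i = h\circ u = v\circ i$) and satisfy $g\circ(h\circ\theta) = f\circ\theta = g\circ v$. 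Because $g$ is itself a covering, lifts for $g$ against $i$ are unique, forcing $h\circ\theta = v$. Uniqueness of $\theta$ as a lift for $h$ is inherited from uniqueness as a lift for $f$: any other lift $\theta'$ for $h$ would yield $f\circ\theta' = g\circ h\circ\theta' = g\circ v$ and $\theta'\circ i = u$, and coincide with $\theta$ by the covering property of $f$. The main subtlety here is exactly this double use of unique right lifting, first to produce $\theta$ via $f$ and then to identify $h\circ\theta$ with $v$ via $g$; this is the only non-routine point of the proof.

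The two consequences follow formally. Given two universal coverings $f_1 : Y_1\to X$ and $f_2 : Y_2\to X$, applying what has just been established produces unique $h_{12} : Y_1\to Y_2$ and $h_{21} : Y_2\to Y_1$ with $f_2\circ h_{12} = f_1$ and $f_1\circ h_{21} = f_2$. Then $h_{21}\circ h_{12}$ and $\mathrm{id}_{Y_1}$ are two endomorphisms of $Y_1$ compatible with $f_1$, which forces $h_{21}\circ h_{12} = \mathrm{id}_{Y_1}$ by the uniqueness part of Proposition \ref{prop:cofi} applied to the tree $Y_1$ and the covering $f_1$; symmetrically $h_{12}\circ h_{21} = \mathrm{id}_{Y_2}$. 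Finally, the unfolding $\unf_X : U(X)\to X$ is a covering by Proposition \ref{prop:unf} whose domain is a tree by Proposition \ref{prop:unftree}, hence a universal covering, and therefore \emph{the} universal covering up to canonical isomorphism by the uniqueness just proved.
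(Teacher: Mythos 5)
Your proposal is correct and follows essentially the same route as the paper: obtain $h$ (with codomain $Z$, as you rightly note) from Proposition \ref{prop:cofi} applied to the tree $Y$ and the covering $g$; solve a lifting problem for $h$ by first lifting against $f$ and then using uniqueness of lifts for the covering $g$ to identify $h\circ\theta$ with $v$; and derive uniqueness of lifts for $h$ from the covering property of $f$. The only cosmetic difference is that the paper invokes openness of $f$ for existence of $\theta$ and treats uniqueness separately, whereas you use the covering property of $f$ for both at once, and you additionally spell out the uniqueness-up-to-isomorphism argument that the paper leaves implicit.
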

	
	This whole story is similar to the universal covering of a topological space: just replace pHDA by spaces and trees by simply-connected spaces \cite{tomdieck08}.

\section{Conclusion and future work}

In this paper, we have given a cleaner definition of partial precubical sets and partial Higher Dimensional Automata, as they really correspond to collections of cubes with missing faces. From this categorical definition, we derived that pHDA can be completed, giving rise to a geometric realisation. We also describe the first premisses of a homotopy theory of the concurrency of pHDA where the cofibrant objects are trees, and replacement is the unfolding.
As a future work, we could look at wider class of paths, typically allowing shortcuts as paths, or introducing general homotopies in the path category, which is possible because we can encode those inside the category of pHDA. Another direction would be to continue the description of this homotopy theory, to see if it corresponds to some kind of Quillen's model structure, or at least to some weaker version (e.g., category of cofibrant objects).

%
%
%
 \bibliographystyle{splncs04}
 \bibliography{partialhda}

\appendix

\section{Omitted proofs of Section \ref{sec:phda}}

\subsection*{Proof of Theorem \ref{the:reflection}}

\begin{lemma}
\label{lem:completion}
$\chi(X)$ is a well defined HDA.
\end{lemma}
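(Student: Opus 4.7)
The plan is to verify four things: (i) the face maps $\partial_i^\alpha$ descend to $\sim$-equivalence classes and take $\chi(X)_n$ to $\chi(X)_{n-1}$; (ii) they are total functions satisfying the strict cubical identities $\partial_i^\alpha \circ \partial_j^\beta = \partial_j^\beta \circ \partial_{i+1}^\alpha$ for $j \leq i$; (iii) the distinguished point $\ll \epsilon, i \gg$ lies in $\chi(X)_0$; and (iv) $\lambda$ is well-defined on $\sim$-classes and is a morphism of precubical sets from $\chi(X)$ to $L$ (equivalently, satisfies the HDA labelling equations on $\chi(X)_1$).

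Point (iii) is immediate. For (i), well-definedness is built into $\sim$ by construction: the second generating clause is exactly the statement that if $(\omega, x) \sim (\omega', x')$ then $(\omega \star (i,\alpha), x) \sim (\omega' \star (i,\alpha), x')$, so $\ll \omega, x \gg \mapsto \ll \omega \star (i,\alpha), x \gg$ is well-defined on quotients. The dimension bookkeeping is routine: if $\omega$ has length $k$ then $\omega \star (i,\alpha)$ has length $k+1$, so the target drops from dimension $n$ to $n-1$.

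The structural fact underlying both (ii) and (iv) is that $\star$ faithfully encodes composition in $\Box$ on canonical normal forms: every morphism in $\Box$ admits a unique decomposition $d_{i_1}^{\alpha_1} \circ \cdots \circ d_{i_k}^{\alpha_k}$ with $i_1 < \cdots < i_k$, and $\star$ precisely computes this normal form for a composite of two such decompositions. Granting this, (ii) follows by a brief computation: both $\partial_i^\alpha \circ \partial_j^\beta$ and $\partial_j^\beta \circ \partial_{i+1}^\alpha$ applied to $\ll \omega, x \gg$ rewrite via the definition of the face maps to classes indexed by $\omega \star (j,\beta) \star (i,\alpha)$ and $\omega \star (i+1,\alpha) \star (j,\beta)$ respectively, which coincide as normal forms thanks to associativity of composition in $\Box$ combined with the cubical identity $d_j^\beta \circ d_i^\alpha = d_{i+1}^\alpha \circ d_j^\beta$ for $j \leq i$.

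For (iv), well-definedness of $\lambda$ on $\sim$-classes is proved by induction on the generators of $\sim$. The first generating clause requires $\lambda(\partial_\omega(x)) = \delta_{i_1}^{\alpha_1} \circ \cdots \circ \delta_{i_k}^{\alpha_k}(\lambda(x))$ whenever $\partial_\omega(x)$ is defined, which is precisely the op-lax inclusion $\lambda \circ \partial_\omega \subseteq \delta_\omega \circ \lambda$ evaluated at a point where the left-hand side is defined. The second clause is preserved because $\delta_{\omega \star (i,\alpha)} = \delta_i^\alpha \circ \delta_\omega$ holds on $L$, another instance of the same structural fact. Commutation with face maps $\lambda \circ \partial_i^\alpha = \delta_i^\alpha \circ \lambda$ and preservation of the initial state are then direct unfoldings of the definitions. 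The main obstacle is the combinatorial verification that $\star$ correctly models composition in $\Box$; once this is in hand, every remaining verification is essentially formal.
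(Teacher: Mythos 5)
Your proposal is correct and follows essentially the same route as the paper: well-definedness of the face maps via the second generating clause of $\sim$, the cubical identities via $(j;\beta)\star(i;\alpha) = (i+1;\alpha)\star(j;\beta)$, and well-definedness of $\lambda$ by induction on the generators of $\sim$, with the base case resting on the op-lax condition $\lambda\circ\partial \subseteq \delta\circ\lambda$. The only difference is presentational: you isolate as a single structural fact that $\star$ computes normal forms of composites in $\Box$, which the paper invokes implicitly case by case.
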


\begin{proof}~
\begin{enumerate}
	\item \textbf{$\chi(X)$ is well defined as a precubical set.} 
		\begin{itemize}
			\item \textbf{The face maps are well defined.} This means that if 
			$$((i_1 < \ldots < i_{n} ; \alpha_1, \ldots, \alpha_n),x) \sim ((j_1 < \ldots < j_l ; \beta_1, \ldots, \beta_l),y)$$ 
			then 
			$$\partial_i^\alpha(\ll(i_1 < \ldots < i_{n} ; \alpha_1, \ldots, \alpha_n),x\gg) = \partial_i^\alpha(\ll(j_1 < \ldots < j_l ; \beta_1, \ldots, \beta_l),y\gg)$$ 
			which is equivalent, by definition of $\partial_i^\alpha$, to 
			$$((i_1 < \ldots < i_{n} ; \alpha_1, \ldots, \alpha_n)\star(i;\alpha),x) \sim ((j_1 < \ldots < j_l ; \beta_1, \ldots, \beta_l)\star(i;\alpha),y).$$ 
			This is given by the second point in the definition of $\sim$.
			\item \textbf{The local equations hold.} We want to prove that 
			$$\partial_{i}^\alpha\circ\partial_{j}^\beta = \partial_{j}^\beta\circ\partial_{i+1}^\alpha$$ 
			for every $j \leq i$. This comes from the fact that 
			$$(j;\beta)\star(i;\alpha) = (j < i+1 ; \beta,\alpha) = (i+1;\alpha)\star(j,\beta)$$ 
			by definition of $\star$.
		\end{itemize}
	\item \textbf{$\chi(X)$ is well defined as a HDA.}
		\begin{itemize}
			\item \textbf{The labelling function is well defined.} We want to prove that if 
			$$((i_1 < \ldots < i_{n} ; \alpha_1, \ldots, \alpha_n),x) \sim ((j_1 < \ldots < j_l ; \beta_1, \ldots, \beta_l),y)$$ 
			then 
			$$\lambda(\ll(i_1 < \ldots < i_{n} ; \alpha_1, \ldots, \alpha_n),x\gg) = \lambda(\ll(j_1 < \ldots < j_l ; \beta_1, \ldots, \beta_l),y\gg).$$ 
			Let us prove it by induction.
				\begin{itemize}
					\item[$\bullet$] \textit{Base case.} We want to prove that if $\partial_{i_1 < \ldots < i_n}^{\alpha_1, \ldots, \alpha_n}(x)$ is well defined in the pHDA $X$, then 
					$$\lambda(\ll(i_1 < \ldots < i_{n} ; \alpha_1, \ldots, \alpha_n),x\gg) = \lambda(\ll\epsilon,\partial_{i_1 < \ldots < i_n}^{\alpha_1, \ldots, \alpha_n}(x)\gg).$$
					This is proved as follows:
\begin{center}					
\begin{tabular}{clcr}
     & $\lambda(\ll(i_1 < \ldots < i_{n} ; \alpha_1, \ldots, \alpha_n),x\gg)$ & ~ & \\
     $=$ & $\delta_{i_1}^{\alpha_1}\circ\ldots\circ\delta_{i_n}^{\alpha_n}(\lambda(x))$ & ~ & definition of $\lambda$ \\
     $=$ & $\delta_{i_1 < \ldots < i_n}^{\alpha_1, \ldots, \alpha_n}(\lambda(x))$ & ~ & definition of $\delta_{i_1 < \ldots < i_n}^{\alpha_1, \ldots, \alpha_n}$\\
     $=$ & $\lambda(\partial_{i_1 < \ldots < i_n}^{\alpha_1, \ldots, \alpha_n}(x))$ & ~ & $\lambda$ morphism of pHDA\\
     $=$ & $\lambda(\ll\epsilon,\partial_{i_1 < \ldots < i_n}^{\alpha_1, \ldots, \alpha_n}(x)\gg)$ & ~ & definition of $\lambda$
\end{tabular}
\end{center}
					\item[$\bullet$] \textit{Induction case.} We want to prove that if 
					$$\lambda(\ll(i_1 < \ldots < i_{n} ; \alpha_1, \ldots, \alpha_n),x\gg) = \lambda(\ll(j_1 < \ldots < j_l ; \beta_1, \ldots, \beta_l),y\gg)$$ 
					then 
					$$\lambda(\ll(i_1 < \ldots < i_{n} ; \alpha_1, \ldots, \alpha_n)\star(i;\alpha),x\gg) = \lambda(\ll(j_1 < \ldots < j_l ; \beta_1, \ldots, \beta_l)\star(i,\alpha),y\gg).$$ 
					This is proved as follows, for 
					$$(i_1 < \ldots < i_{n} ; \alpha_1, \ldots, \alpha_n)\star(i;\alpha) = (i'_1 < \ldots < i'_{n+1} ; \alpha'_1, \ldots, \alpha'_{n+1})$$ 
					and 
					$$(j_1 < \ldots < j_l ; \beta_1, \ldots, \beta_l)\star(i;\alpha) = (j'_1 < \ldots < j'_{l+1} ; \beta'_1, \ldots, \beta'_{l+1})$$
					
					\begin{center}
					\begin{tabular}{clcr}
     & $\lambda(\ll(i_1 < \ldots < i_{n} ; \alpha_1, \ldots, \alpha_n)\star(i;\alpha),x\gg)$ & ~ & \\
     $=$ & $\delta_{i'_1}^{\alpha'_1}\circ\ldots\circ\delta_{i'_{n+1}}^{\alpha'_{n+1}}(\lambda(x))$ & & definition of $\lambda$\\
     $=$ & $\delta_i^\alpha\circ\delta_{i_1}^{\alpha_1}\ldots\circ\delta_{i_n}^{\alpha_n}(\lambda(x))$ & & definition of $\star$ and local equations for $\delta$\\
     $=$ & $\delta_i^\alpha(\lambda(\ll(i_1 < \ldots < i_{n} ; \alpha_1, \ldots, \alpha_n),x\gg))$ & & definition of $\lambda$\\
     $=$ & $\delta_i^\alpha(\lambda(\ll(j_1 < \ldots < j_l ; \beta_1, \ldots, \beta_l),y\gg))$ & & induction hypothesis\\
     $=$ & $\delta_i^\alpha\circ\delta_{j_1}^{\beta_1}\ldots\circ\delta_{j_l}^{\beta_l}(\lambda(x))$ & & definition of $\lambda$\\
     $=$ & $\delta_{j'_1}^{\beta'_1}\circ\ldots\circ\delta_{j'_{l+1}}^{\beta'_{l+1}}(\lambda(x))$ & & definition of $\star$ and local equations for $\delta$\\
     $=$ & $\lambda(\ll(j_1 < \ldots < j_l ; \beta_1, \ldots, \beta_l)\star(i,\alpha),y\gg)$ & & by definition of $\lambda$
\end{tabular}
\end{center}
					
				\end{itemize}
			\item \textbf{The labelling equation holds.} We want to prove that, for $i \in \{1,2\}$: 
			$$\lambda\circ\partial_i^0 = \lambda\circ\partial_i^1.$$ 
			This is proved as follows:
			
			\begin{center}
			\begin{tabular}{clcr}
     & $\lambda(\partial_i^0(\ll(i_1 < \ldots < i_{n} ; \alpha_1, \ldots, \alpha_n),x\gg))$ & ~ & \\
     $=$ & $\lambda(\ll(i_1 < \ldots < i_{n} ; \alpha_1, \ldots, \alpha_n)\star(i;0),x\gg)$ & & definition of $\partial_i^0$\\
     $=$ & $\delta_i^0\circ\delta_{i_1}^{\alpha_1}\ldots\circ\delta_{i_n}^{\alpha_n}(\lambda(x))$ & & definition of $\lambda$\\
     $=$ & $\delta_i^1\circ\delta_{i_1}^{\alpha_1}\ldots\circ\delta_{i_n}^{\alpha_n}(\lambda(x))$ & & $\delta_i^0 = \delta_i^1$\\
     $=$ & $\lambda(\ll(i_1 < \ldots < i_{n} ; \alpha_1, \ldots, \alpha_n)\star(i;1),x\gg)$ & & definition of $\lambda$\\
     $=$ & $\lambda(\partial_i^1(\ll(i_1 < \ldots < i_{n} ; \alpha_1, \ldots, \alpha_n),x\gg))$ & & definition of $\partial_i^1$
			\end{tabular}
			\end{center}
		\end{itemize}
\end{enumerate}
\end{proof}

\begin{lemma} Define $$\chi(f)(\ll(i_1 < \ldots < i_k;\alpha_1, \ldots, \alpha_k),x\gg) = \ll(i_1 < \ldots < i_k;\alpha_1, \ldots, \alpha_k),f(x)\gg.$$
$\chi(f)$ is a well defined morphism of HDA.
\end{lemma}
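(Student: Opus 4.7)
The plan is to verify three things in turn: that $\chi(f)$ is well defined on equivalence classes, that it commutes with the face maps, and that it preserves the initial state and the labelling. I expect well-definedness to be the main work, since it requires pushing the equivalence relation $\sim$ on $Y = Y(X)$ forward along $f$ into $Y(Y)$, and this is where the op-lax naturality of morphisms of pHDA (namely $f_n \circ \partial^{\alpha_1,\ldots,\alpha_k}_{i_1<\ldots<i_k} \subseteq \partial^{\alpha_1,\ldots,\alpha_k}_{i_1<\ldots<i_k} \circ f_{n+k}$) is exactly what we need.

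For well-definedness, I would show by induction on the generation of $\sim$ that if $((i_1<\ldots<i_k;\alpha_1,\ldots,\alpha_k),x) \sim ((j_1<\ldots<j_l;\beta_1,\ldots,\beta_l),y)$ in the construction of $\chi(X)$, then the corresponding pair $((i_1<\ldots<i_k;\alpha_1,\ldots,\alpha_k),f(x)) \sim ((j_1<\ldots<j_l;\beta_1,\ldots,\beta_l),f(y))$ in the construction of $\chi(Y)$. In the base case, assume $\partial_{i_1<\ldots<i_k}^{\alpha_1,\ldots,\alpha_k}(x)$ is defined in $X$ so that $((i_1<\ldots<i_k;\alpha_1,\ldots,\alpha_k),x) \sim (\epsilon, \partial_{i_1<\ldots<i_k}^{\alpha_1,\ldots,\alpha_k}(x))$; by the op-lax inclusion, $\partial_{i_1<\ldots<i_k}^{\alpha_1,\ldots,\alpha_k}(f(x))$ is then defined in $Y$ and equals $f(\partial_{i_1<\ldots<i_k}^{\alpha_1,\ldots,\alpha_k}(x))$, and the corresponding generator of $\sim$ in $Y(Y)$ gives the desired equivalence. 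The inductive step, which closes $\sim$ under precomposition by $(i,\alpha)$, is immediate since the construction only acts on the first coordinate.

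Once well-definedness is secured, the remaining checks are routine. For compatibility with face maps, unfolding both sides gives
\[
\chi(f)\bigl(\partial_i^\alpha \ll (i_1<\ldots<i_k;\alpha_1,\ldots,\alpha_k), x\gg\bigr) = \ll (i_1<\ldots<i_k;\alpha_1,\ldots,\alpha_k)\star(i;\alpha), f(x)\gg,
\]
which coincides by the same definitions with $\partial_i^\alpha \chi(f)\ll(i_1<\ldots<i_k;\alpha_1,\ldots,\alpha_k),x\gg$. Preservation of the initial state is direct from $f_0(i_X) = i_Y$, since $\chi(f)(\ll\epsilon, i_X\gg) = \ll\epsilon, f_0(i_X)\gg = \ll\epsilon, i_Y\gg$. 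For the labelling, unfolding the definitions yields
\[
\lambda_Y(\chi(f)\ll(i_1<\ldots<i_k;\alpha_1,\ldots,\alpha_k), x\gg) = \delta_{i_1}^{\alpha_1}\circ\ldots\circ\delta_{i_k}^{\alpha_k}(\lambda_Y(f(x))),
\]
and since $f$ is a morphism of pHDA we have $\lambda_Y \circ f = \lambda_X$, which matches $\lambda_{\chi(X)}\ll(i_1<\ldots<i_k;\alpha_1,\ldots,\alpha_k),x\gg$. This completes all three checks, and the statement follows.
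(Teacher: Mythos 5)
Your proposal is correct and follows essentially the same route as the paper's proof: well-definedness by induction on the generation of $\sim$, with the base case resting on the op-lax inclusion $f_n\circ\partial \subseteq \partial\circ f_{n+k}$ to get that $\partial(f(x))$ is defined and equals $f(\partial(x))$, followed by the routine checks for face maps and labelling. The only (harmless) difference is that you also verify preservation of the initial state explicitly, which the paper leaves implicit.
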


\begin{proof}~
\begin{enumerate}
	\item \textbf{$\chi(f)$ is well defined.} We want to prove that if 
	$$((i_1 < \ldots < i_{n} ; \alpha_1, \ldots, \alpha_n),x) \sim ((j_1 < \ldots < j_l ; \beta_1, \ldots, \beta_l),y)$$ 
			then 
			$$\chi(f)(\ll(i_1 < \ldots < i_{n} ; \alpha_1, \ldots, \alpha_n),x\gg) = \chi(f)(\ll(j_1 < \ldots < j_l ; \beta_1, \ldots, \beta_l),y\gg).$$ 
			Let us prove it by induction.
			\begin{itemize}
					\item[$\bullet$] \textit{Base case.} We want to prove that if $\partial_{i_1 < \ldots < i_n}^{\alpha_1, \ldots, \alpha_n}(x)$ is well defined in the pHDA $X$, then 
					$$\chi(f)(\ll(i_1 < \ldots < i_{n} ; \alpha_1, \ldots, \alpha_n),x\gg) = \chi(f)(\ll\epsilon,\partial_{i_1 < \ldots < i_n}^{\alpha_1, \ldots, \alpha_n}(x)\gg).$$
					This is proved as follows. First, observe that since $f$ is a morphism of pHDA, $\partial_{i_1 < \ldots < i_n}^{\alpha_1, \ldots, \alpha_n}(f(x))$ is well defined in the pHDA $Y$ and is equal to $f(\partial_{i_1 < \ldots < i_n}^{\alpha_1, \ldots, \alpha_n}(x))$.

\begin{center}				
\begin{tabular}{clcr}
     & $\chi(\ll(i_1 < \ldots < i_{n} ; \alpha_1, \ldots, \alpha_n),x\gg)$ & ~ & \\
     $=$ & $\ll(i_1 < \ldots < i_{n} ; \alpha_1, \ldots, \alpha_n),f(x)\gg$ & & definition of $\chi$ \\
     $=$ & $\ll\epsilon,\partial_{i_1 < \ldots < i_n}^{\alpha_1, \ldots, \alpha_n}(f(x))\gg$ & & definition of $\sim$ and $f$ morphism of pHDA\\
     $=$ & $\ll\epsilon,f(\partial_{i_1 < \ldots < i_n}^{\alpha_1, \ldots, \alpha_n}(x))\gg$ & & $f$ morphism of pHDA\\
     $=$ & $\chi(f)(\ll\epsilon,\partial_{i_1 < \ldots < i_n}^{\alpha_1, \ldots, \alpha_n}(x)\gg)$ & & definition of $\chi$
\end{tabular}
\end{center}
					\item[$\bullet$] \textit{Induction case.} We want to prove that if 
					$$\chi(f)(\ll(i_1 < \ldots < i_{n} ; \alpha_1, \ldots, \alpha_n),x\gg) = \chi(f)(\ll(j_1 < \ldots < j_l ; \beta_1, \ldots, \beta_l),y\gg)$$ 
					then 
					$$\chi(f)(\ll(i_1 < \ldots < i_{n} ; \alpha_1, \ldots, \alpha_n)\star(i;\alpha),x\gg) = \chi(f)(\ll(j_1 < \ldots < j_l ; \beta_1, \ldots, \beta_l)\star(i,\alpha),y\gg).$$ 
					
\begin{center}
\begin{tabular}{clrc}
     & $\chi(f)(\ll(i_1 < \ldots < i_{n} ; \alpha_1, \ldots, \alpha_n)\star(i;\alpha),x\gg)$ & ~ & \\
     $=$ & $\ll(i_1 < \ldots < i_{n} ; \alpha_1, \ldots, \alpha_n)\star(i;\alpha),f(x)\gg$ & & definition of $\chi$\\
     $=$ & $\partial_i^\alpha(\ll(i_1 < \ldots < i_{n} ; \alpha_1, \ldots, \alpha_n),f(x)\gg)$ & & definition of $\partial$\\
     $=$ & $\partial_i^\alpha(\chi(f)(\ll(i_1 < \ldots < i_{n} ; \alpha_1, \ldots, \alpha_n),x\gg))$ & & definition of $\chi$\\
     $=$ & $\partial_i^\alpha(\chi(f)(\ll(j_1 < \ldots < j_l ; \beta_1, \ldots, \beta_l),y\gg))$ & & induction hypothesis\\
     $=$ & $\chi(f)(\ll(j_1 < \ldots < j_l ; \beta_1, \ldots, \beta_l)\star(i,\alpha),y\gg)$ & & similarly
\end{tabular}
\end{center}
					
				\end{itemize}
	\item \textbf{$\chi(f)$ is a morphism of precubical sets.} We want to prove that 
	$$\chi(f)\circ\partial_i^\alpha = \partial_i^\alpha\circ\chi(f).$$
	
					\begin{tabular}{rcl}
    $\chi(f)(\partial_i^\alpha(\ll(i_1 < \ldots < i_{n} ; \alpha_1, \ldots, \alpha_n),x\gg))$ & $=$ & $\ll(i_1 < \ldots < i_{n} ; \alpha_1, \ldots, \alpha_n)\star(i;\alpha),f(x)\gg$\\
     & $=$ & $\partial_i^\alpha(\chi(f)(\ll(i_1 < \ldots < i_{n} ; \alpha_1, \ldots, \alpha_n),x\gg))$
\end{tabular}
	
	\item \textbf{$\chi(f)$ is a morphism of HDA.} We want to prove that 
	$$\lambda\circ\chi(f) = \lambda.$$

\begin{center}
\begin{tabular}{clcr}
     & $\lambda(\chi(f)(\ll(i_1 < \ldots < i_{n} ; \alpha_1, \ldots, \alpha_n),x\gg))$ & ~ & \\
     $=$ & $\lambda(\ll(i_1 < \ldots < i_{n} ; \alpha_1, \ldots, \alpha_n),f(x)\gg)$ & & definition of $\chi$\\
     $=$ & $\delta_{i_1 < \ldots < i_n}^{\alpha_1, \ldots, \alpha_n}(\lambda(f(x)))$ & & definition of $\lambda$\\
     $=$ & $\delta_{i_1 < \ldots < i_n}^{\alpha_1, \ldots, \alpha_n}(\lambda(x))$ & & $f$ morphism of pHDA\\
     $=$ & $\lambda(\ll(i_1 < \ldots < i_{n} ; \alpha_1, \ldots, \alpha_n),x\gg)$ & & definition of $\lambda$
\end{tabular}
\end{center}
	
\end{enumerate}
\end{proof}

\begin{lemma}
$\map{\mu_X}{\chi X}{X}$, which maps $\ll(i_1 < \ldots < i_k; \alpha_1, \ldots, \alpha_k), x \gg$ to $\partial_{i_k}^{\alpha_k}\circ\ldots\circ\partial_{i_1}^{\alpha_1}(x)$ forms a natural isomorphism $\natr{\mu}{\chi\circ\tau}{\text{id}_{\hda}}$, whose inverse is $\map{\theta_X}{X}{\chi X}$, which maps $x$ to $\ll\epsilon, x\gg$.
\end{lemma}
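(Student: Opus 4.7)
The plan is to exhibit $\theta_X$ and $\mu_X$ as mutually inverse morphisms of HDA, then derive naturality. The single fact that drives the whole argument is that, since $X \in \hda$, every compound face $\partial_{i_1 < \ldots < i_k}^{\alpha_1, \ldots, \alpha_k}$ is total and, by the local equations of precubical sets, coincides with every iterated composition $\partial_{i_k}^{\alpha_k} \circ \ldots \circ \partial_{i_1}^{\alpha_1}$ of single face maps corresponding to the same comap in $\Box$. Everything below reduces to this observation applied via the star operation.

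First, I would check that $\mu_X$ is well defined on $\sim$-classes by induction on the two generating clauses of $\sim$. For the base clause $((i_1 < \ldots < i_k;\alpha_1, \ldots, \alpha_k),x) \sim (\epsilon, \partial_{i_1 < \ldots < i_k}^{\alpha_1, \ldots, \alpha_k}(x))$, both sides of $\mu_X$ evaluate to $\partial_{i_1 < \ldots < i_k}^{\alpha_1, \ldots, \alpha_k}(x)$ by the fact above. For the inductive clause closing under star with $(i;\alpha)$, I would use the same identity to show that $\mu_X$ of the star closure equals $\partial_i^\alpha$ applied to $\mu_X$ of the original pair, so the inductive hypothesis transports through. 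Second, $\mu_X$ is a morphism of HDA: face-map preservation $\mu_X \circ \partial_i^\alpha = \partial_i^\alpha \circ \mu_X$ is again the same star computation; the initial state is preserved since $\mu_X(\ll\epsilon, i_X\gg) = i_X$; and the labelling equation $\lambda_{\chi X} = \lambda_X \circ \mu_X$ follows from $\lambda_X$ being a morphism of pHDA from $X$ to $L$, so that $\lambda_X \circ \partial_{i_k}^{\alpha_k} \circ \ldots \circ \partial_{i_1}^{\alpha_1}(x) = \delta_{i_1}^{\alpha_1} \circ \ldots \circ \delta_{i_k}^{\alpha_k}(\lambda_X(x))$, which matches the definition of $\lambda$ on $\chi(X)$.

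Third, for $\theta_X(x) = \ll\epsilon, x\gg$: well-definedness is trivial, and the morphism conditions amount to the identities $\partial_i^\alpha(\ll\epsilon, x\gg) = \ll(i;\alpha), x\gg = \ll\epsilon, \partial_i^\alpha(x)\gg$, the second equality being the base clause of $\sim$ applied with a single index. Labelling and initial state are preserved on the nose. The inverse property is then immediate: $\mu_X \circ \theta_X = \text{id}_X$ because the empty composition is the identity, and $\theta_X \circ \mu_X = \text{id}_{\chi X}$ because $\ll\epsilon, \partial_{i_k}^{\alpha_k} \circ \ldots \circ \partial_{i_1}^{\alpha_1}(x)\gg = \ll(i_1 < \ldots < i_k;\alpha_1, \ldots, \alpha_k), x\gg$ is exactly the base clause of $\sim$ (available since every face map involved is defined).

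For naturality, given $\map{f}{X}{Y}$ in $\hda$, the square $\chi(\tau f) \circ \theta_X = \theta_Y \circ f$ reduces on elements $x \in X$ to $\ll\epsilon, f(x)\gg = \ll\epsilon, f(x)\gg$, so it holds by definition; then naturality of $\mu = \theta^{-1}$ follows formally. The main obstacle is really just keeping the star bookkeeping straight: one must be confident that, in the totally defined setting of $X \in \hda$, the relation $\sim$ collapses the set of abstract faces of $\chi(\tau X)$ precisely down to $X$ — no more, no less — so that $\theta_X$ becomes bijective. Once that is clear, the remaining verifications are all routine diagram chases analogous to those already used to prove Theorem \ref{the:reflection}.
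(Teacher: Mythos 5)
Your proposal is correct and follows essentially the same route as the paper: well-definedness of $\mu_X$ by induction on the two generating clauses of $\sim$, the same star-based verification that both maps are morphisms of HDA, and the same use of the base clause $((i_1 < \ldots < i_k;\alpha_1,\ldots,\alpha_k),x) \sim (\epsilon, \partial_{i_1 < \ldots < i_k}^{\alpha_1,\ldots,\alpha_k}(x))$ to get $\theta_X\circ\mu_X = \text{id}$. The only (harmless) variation is that you establish naturality on the $\theta$ side and transfer it to $\mu$ formally, whereas the paper computes $f\circ\mu_X = \mu_Y\circ\chi(f)$ directly.
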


\begin{proof}~
\begin{enumerate}
	\item \textbf{$\mu_X$ is well defined for every HDA $X$.} We want to prove that if 
	$$((i_1 < \ldots < i_{n} ; \alpha_1, \ldots, \alpha_n),x) \sim ((j_1 < \ldots < j_l ; \beta_1, \ldots, \beta_l),y)$$ 
			then 
			$$\mu_X(\ll(i_1 < \ldots < i_{n} ; \alpha_1, \ldots, \alpha_n),x\gg) = \epsilon_X(\ll(j_1 < \ldots < j_l ; \beta_1, \ldots, \beta_l),y\gg).$$ 
			Let us prove it by induction.
			\begin{itemize}
					\item[$\bullet$] \textit{Base case.} We want to prove that if $\partial_{i_1 < \ldots < i_n}^{\alpha_1, \ldots, \alpha_n}(x)$ is well defined in the HDA $X$ (which is always the case in a HDA), then 
					$$\mu_X(\ll(i_1 < \ldots < i_{n} ; \alpha_1, \ldots, \alpha_n),x\gg) = \mu_X(\ll\epsilon,\partial_{i_1 < \ldots < i_n}^{\alpha_1, \ldots, \alpha_n}(x)\gg).$$
			Both are equal to $\partial_{i_1 < \ldots < i_n}^{\alpha_1, \ldots, \alpha_n}(x)$.
					\item[$\bullet$] \textit{Induction case.} We want to prove that if 
					$$\mu_X(\ll(i_1 < \ldots < i_{n} ; \alpha_1, \ldots, \alpha_n),x\gg) = \mu_X(\ll(j_1 < \ldots < j_l ; \beta_1, \ldots, \beta_l),y\gg)$$ 
					then 
					$$\mu_X(\ll(i_1 < \ldots < i_{n} ; \alpha_1, \ldots, \alpha_n)\star(i;\alpha),x\gg) = \epsilon_X(\ll(j_1 < \ldots < j_l ; \beta_1, \ldots, \beta_l)\star(i,\alpha),y\gg).$$ 
					This is proved as follows:
					
\begin{center}
\begin{tabular}{clcr}
	& $\mu_X(\ll(i_1 < \ldots < i_{n} ; \alpha_1, \ldots, \alpha_n)\star(i;\alpha),x\gg)$ & ~ & \\
	$=$ & $\partial_i^\alpha\circ\partial_{i_1 < \ldots < i_n}^{\alpha_1, \ldots, \alpha_n}(x)$ & & definition of $\mu_X$\\
    	$=$ & $\partial_i^\alpha\circ\partial_{j_1 < \ldots < j_l}^{\beta_1, \ldots, \beta_l}(y)$ & & induction hypothesis\\
	$=$ & $\mu_X(\ll(j_1 < \ldots < j_l ; \beta_1, \ldots, \beta_l)\star(i;\alpha),y\gg)$ & & definition of $\mu_X$
\end{tabular}
\end{center}
					
				\end{itemize}
	\item \textbf{$\mu_X$ is a morphism of precubical sets.} We want to prove that 
	$$\mu_X\circ\partial_i^\alpha = \partial_i^\alpha\circ\mu_X$$
	whose proof is the same as the induction case of the previous point.
	\item \textbf{$\mu_X$ is a morphism of HDA.} We want to prove that
	$$\lambda\circ\mu_X = \lambda$$
	which is equivalent to the fact that the labelling function of $X$ is a morphism of precubical sets.
	\item \textbf{$\theta_X$ is a morphism of precubical sets.} This comes from the fact that $$(\epsilon,\partial_i^\alpha(x)) \sim ((i;\alpha),x).$$
	\item \textbf{$\theta_X$ is a morphism of HDA.} This is by definition of the labelling function of $\chi X$.
	\item \textbf{$\mu_X$ and $\theta_X$ are inverse of each other.}
		\begin{itemize} 
			\item $\theta_X\circ\mu_X = \text{id}_{\chi X}$. This is a consequence of 
			$$((i_1 < \ldots < i_n;\alpha_1, \ldots, \alpha_n),x) \sim (\epsilon, \partial_{i_1 < \ldots < i_n}^{\alpha_1, \ldots, \alpha_n}(x)).$$
			\item $\mu_X\circ\theta_X = \text{id}_X$. Obvious.
		\end{itemize}
	\item \textbf{$\mu$ is natural.} Let $\map{f}{X}{Y}$ be a morphism of HDA. We want to prove that
	$$f \circ \mu_X = \mu_Y\circ\chi(f).$$

\begin{center}
\begin{tabular}{clcr}
	& $f(\mu_X(\ll(i_1 < \ldots < i_{n} ; \alpha_1, \ldots, \alpha_n),x\gg))$ & ~ & \\
	$=$ & $f\circ\partial_{i_1 < \ldots < i_n}^{\alpha_1, \ldots, \alpha_n}(x)$ & & definition of $\mu_X$\\
	$=$ & $\partial_{i_1 < \ldots < i_n}^{\alpha_1, \ldots, \alpha_n}(f(x))$ & & $f$ morphism of precubical sets\\
	$=$ & $\mu_X(\ll(i_1 < \ldots < i_{n} ; \alpha_1, \ldots, \alpha_n),f(x)\gg)$ & & definition of $\mu_X$\\
	$=$ & $\mu_X(\chi(f)(\ll(i_1 < \ldots < i_{n} ; \alpha_1, \ldots, \alpha_n),x\gg))$ & & definition of $\chi(f)$
\end{tabular}
\end{center}
\end{enumerate}
\end{proof}

\begin{lemma}
$\natr{\eta_X}{X}{\chi X}$, which maps $x$ to $\ll\epsilon, x\gg$, forms a natural transformation $\natr{\eta}{\text{id}_{\phda}}{\tau\circ\chi}$. Observe that the definition of $\eta$ is the same as the definition of $\theta$, except that $\eta$ is defined for every pHDA, while $\theta$ is defined only on HDA.
\end{lemma}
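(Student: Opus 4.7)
The plan is to verify two things: (i) for each pHDA $X$, the assignment $\eta_X(x) = \ll \epsilon, x \gg$ defines a morphism in $\phda$, and (ii) the family $(\eta_X)_X$ is natural with respect to morphisms of pHDA. Structurally, this proof is almost entirely bookkeeping on the definitions of $\sim$, of the face maps on $\chi(X)$, and of $\chi(f)$, all of which were set up in the preceding three lemmas.

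First I would check that $(\eta_X)_n \colon X_n \to \chi(X)_n$, $x \mapsto \ll \epsilon, x \gg$, is a total function: the pair $(\epsilon, x)$ with $x \in X_n$ lies in $Y_n$ (the empty sequence trivially satisfies the constraint $i_k \leq n+k$), so its class in $\chi(X)_n = Y_n/\sim$ is defined. Preservation of the initial state is immediate since $i_{\chi(X)} = \ll \epsilon, i_X \gg$ by definition, and preservation of the labelling is also immediate because $\lambda_{\chi(X)}(\ll \epsilon, x \gg)$ is defined to be the image of $\lambda_X(x)$ under the empty sequence of coface maps, hence just $\lambda_X(x)$.

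The only substantive condition is the partial-face-map compatibility
\[
(\eta_X)_n \circ \partial_{i_1 < \ldots < i_k}^{\alpha_1, \ldots, \alpha_k} \subseteq \partial_{i_1 < \ldots < i_k}^{\alpha_1, \ldots, \alpha_k} \circ (\eta_X)_{n+k}.
\]
Given $x \in X_{n+k}$ on which $\partial_{i_1 < \ldots < i_k}^{\alpha_1, \ldots, \alpha_k}(x)$ is defined, the first clause in the definition of $\sim$ yields $\bigl((i_1 < \ldots < i_k;\alpha_1,\ldots,\alpha_k), x\bigr) \sim \bigl(\epsilon, \partial_{i_1 < \ldots < i_k}^{\alpha_1,\ldots,\alpha_k}(x)\bigr)$, so the two sides have the same class. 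On the right-hand side, by the recursive inclusion clause for face maps on $\chi(X)$ combined with $\partial_i^\alpha(\ll \sigma, x \gg) = \ll \sigma \star (i,\alpha), x \gg$, one obtains $\partial_{i_1 < \ldots < i_k}^{\alpha_1, \ldots, \alpha_k}(\ll \epsilon, x \gg) = \ll (i_1 < \ldots < i_k;\alpha_1,\ldots,\alpha_k), x \gg$. So the two are equal whenever the left-hand side is defined, which is precisely the required inclusion of relations.

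For naturality, given $f \colon X \to Y$ in $\phda$, I need $\chi(f) \circ \eta_X = \eta_Y \circ f$. Unfolding both sides on a generic $x \in X_n$, the left-hand side is $\chi(f)(\ll \epsilon, x \gg) = \ll \epsilon, f(x) \gg$ by the very definition of $\chi(f)$ (the previous lemma), and the right-hand side is $\eta_Y(f(x)) = \ll \epsilon, f(x) \gg$. So naturality is immediate once the previous lemma is in hand. The mildest obstacle is really just keeping track of the multi-index face maps in the inclusion condition, but both $\sim$ and the face maps on $\chi(X)$ are engineered so that the compatibility reduces, inductively on the length of the multi-index, to the defining clauses.
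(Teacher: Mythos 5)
Your proof is correct and follows essentially the same route as the paper, which simply says the morphism check is ``similar to $\theta$'' (relying on the clause $(\epsilon,\partial_i^\alpha(x)) \sim ((i;\alpha),x)$ of $\sim$) and gives the same one-line naturality computation $\chi(f)(\ll\epsilon,x\gg) = \ll\epsilon,f(x)\gg = \eta_Y(f(x))$. You merely spell out the one point the paper leaves implicit, namely that in the pHDA setting the face-map condition is the inclusion of relations rather than an equality, which is exactly why the argument still goes through when $\partial_{i_1<\ldots<i_k}^{\alpha_1,\ldots,\alpha_k}$ is only partially defined on $X$.
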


\begin{proof}~
\begin{enumerate}
	\item \textbf{$\eta_X$ is a morphism of HDA.} Similar to $\theta$.
	\item \textbf{$\eta$ is natural.} $\chi(f)\circ\eta_X(x) = \ll\epsilon, f(x)\gg = \eta_Y\circ f(x)$.
\end{enumerate}
\end{proof}

\begin{proof}[of Theorem \ref{the:reflection}] We prove that $\epsilon$ is the counit and that $\eta$ is the unit.
\begin{enumerate}
	\item Let $X$ be a pHDA.
	
\begin{center}
\begin{tabular}{clcr}
	& $\epsilon_{\chi X}\circ\chi(\eta_X)(\ll(i_1 < \ldots< i_k;\alpha_1, \ldots, \alpha_k),x\gg)$ & ~ & \\
	$=$ & $\epsilon_{\chi X}(\ll (i_1 < \ldots < i_k;\alpha_1, \ldots, \alpha_k),\eta_X(x)\gg)$ & & definition of $\chi$\\
	$=$ & $\partial_{i_1 < \ldots < i_k}^{\alpha_1, \ldots, \alpha_k}(\ll \epsilon, x\gg)$ & & by definition of $\mu$ and $\eta$ \\
	$=$ & $\ll(i_1 < \ldots < i_k;\alpha_1, \ldots, \alpha_k), x\gg$ & & by definition of $\partial$ and $\star$
\end{tabular}
\end{center}
	\item Let $X$ be a HDA. $\mu_X\circ\eta_X = \mu_X\circ\theta_X = \text{id}_X$.
\end{enumerate}
\end{proof}

\section{Omitted proofs of Section \ref{sec:path}}

\subsection*{Proof of Proposition \ref{prop:morph}}

If $\alpha_k = 0$, then $\pi$ being a path means that $x_{k-1} = \partial_{j_k}^0(x_k)$. So $$f(x_{k-1}) = f(\partial_{j_k}^0(x_k)) = \partial_{j_k}^0(f(x_k))$$
which is what is needed to prove that $\pi'$ is a path. Remark that the second equality holds for our definition of morphisms, not for the one from \cite{fahrenberg15}, since $\partial_{j_k}^0(f(x_k))$ would not be required to be defined.
The same holds if $\alpha_k = 1$.

\subsection*{Proof of well definedness of $B\sigma$}

Observe that the face maps are defined as the smallest relation such that some conditions hold. We have to check first that this relation is actually functional. It is enough to prove that if $(i_1;\alpha_1)\star\ldots\star(i_n;\alpha_n) = (j_1;\beta_1)\star\ldots\star(j_n,\beta_n)$ and if both $\partial_{i_n}^{\alpha_n}\circ\ldots\circ\partial_{i_1}^{\alpha_1}(k)$ and $\partial_{j_n}^{\beta_n}\circ\ldots\circ\partial_{j_1}^{\beta_1}(k)$, then they are necessarily equal.

Observe that $\partial_j^0\partial_i^1(k)$ and $\partial_j^1\partial_i^0(k)$ are never defined. If, for example, the first were, that would mean that $\partial^1(k)$ is defined to be $k+1$ and we have the transition $(d_k,w_k) \xrightarrow{i,1} (d_{k+1},w_{k+1})$ in $\sigma$. But we also have that $\partial_j^0(k+1)$ is defined, which means that we have the transition $(d_k,w_k) \xrightarrow{j,0} (d_{k+1},w_{k+1})$ in $\sigma$, which is impossible. This means that the $\alpha_i$ and all the $\beta_j$ are either all equal to $0$, or all equal to $1$. Let us assume they are equal to $1$. Then, the sequences $i_1$, \ldots, $i_n$ and $j_1$, \ldots, $j_n$ are both equal to the the sequence $l_1$, \ldots, $l_n$ such that there are transitions $(d_k,w_k) \xrightarrow{l_1,1} (d_{k+1},w_{k+1}) \ldots \xrightarrow{l_n,1} (d_{k+n},w_{k+n})$

The fact that it is a partial precubical set is the third condition of the definition of the face maps. The fact that $\lambda$ is a map of partial precubical sets is by definition of a spine.

\subsection*{Proof of Proposition \ref{prop:bij}}

A bijection is given by the following:
\begin{itemize}
	\item from a path $\pi$ of $X$, construct a morphism from $B\sigma_\pi$ to $X$ by mapping $k$ to $x_k$.
	\item from a morphism $f$ from $B\sigma$ to $X$, construct the path $\pi = f(0) \xrightarrow{j_1,\alpha_1} \ldots \xrightarrow{j_n,\alpha_n} f(n)$.
\end{itemize}

%

\section{Omitted proofs of Section \ref{sec:unfo}}

\subsection*{Proof of Proposition \ref{prop:colimit}}

\begin{lemma}
$\col{D}$ is well-defined and is a pHDA.
\end{lemma}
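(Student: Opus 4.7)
To show that $\col{D}$ is well-defined as a pHDA, I would proceed in four steps. First, verify that $\approx$ is a well-defined equivalence relation on each $Y_n$. Although its generating rule refers to $\approx$-classes at other dimensions, the recursion is controlled by the length $l$ of the zigzag of identifications, so $\approx$ arises as the union of an ascending chain of equivalence relations indexed by $l \in \nat$. Reflexivity and transitivity are then automatic, and symmetry follows from the symmetric role of $u$ and $v$ in the generating rule. The initial state $\eqcl{\epsilon}$ is well-defined because every $(u,0)$ is identified with $\epsilon$ under $\sim$, consistent with the convention $(d^u_0, w^u_0) = (0,\epsilon)$ on all spines.

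The crux is to show that the face maps $\partial_{i_1 < \ldots < i_k}^{\alpha_1, \ldots, \alpha_k}$, defined as the smallest relations on $Z_n = Y_n/\!\approx$ satisfying the basic past/future rules and the composition inclusions, are in fact partial functions. I would argue representative-independence in two stages. For $\sim$-invariance, observe that if $\map{i}{u}{v}$ is a morphism in $\C$, then $Di$ is a morphism of path shapes; by Proposition~\ref{prop:morph} it preserves paths, so the basic generating assignments of the face maps are compatible on $(u,k)$ and $(v,k)$. For $\approx$-invariance, the generating rule was tailored so that two chains $(u_s)$ and $(v_s)$ of future ($\alpha = 1$) identifications with the same $\star$-composite factorisation have identified final endpoints; this is precisely what forces the face map computed along two different factorisations through the generating future rules to land in the same $\approx$-class. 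Past face maps are determined by the spine predecessor and then propagated by the composition inclusions.

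Once functionality is established, the lax functor inclusions $\partial_{j_1 < \ldots < j_m}^{\beta_1, \ldots, \beta_m} \circ \partial_{i_1 < \ldots < i_n}^{\alpha_1, \ldots, \alpha_n} \subseteq \partial_{k_1 < \ldots < k_{n+m}}^{\gamma_1, \ldots, \gamma_{n+m}}$ hold directly by the third closure condition imposed in the definition, so $\col{D}$ is a partial precubical set. For the labelling, note that if $(u,k) \sim (v,k)$ via a morphism $\map{i}{u}{v}$, then since $Di$ preserves labels we obtain $w^u_k = w^v_k$, so $\lambda(\eqcl{u,k}) := w^u_k$ is well-defined on $Y_n$. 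That $\lambda$ further respects $\approx$ and is a morphism of partial precubical sets into $L$ is a direct calculation mirroring the one for path shapes, using the $\star$-composite condition in the generating rule for $\approx$ to match induced subwords.

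The main obstacle is the second step: verifying that the smallest relational closure produces partial functions rather than genuinely multivalued relations. The subtlety is that a given sequence $(i_1 < \ldots < i_k ; \alpha_1, \ldots, \alpha_k)$ may admit many factorisations as a $\star$-product of basic future and past transitions, and all must produce the same image in $Z_n$. This is exactly what the quotient $\approx$ is engineered to enforce, but a clean verification requires an induction on the length of such factorisations that mirrors the recursive structure of $\approx$.
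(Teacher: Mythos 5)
Your plan follows the same overall decomposition as the paper's proof ($\sim$/$\approx$-invariance of the basic face maps, then functionality of the general $\partial_{i_1<\ldots<i_k}^{\alpha_1,\ldots,\alpha_k}$, then the labelling), and you correctly identify the crux: a multi-index $(i_1<\ldots<i_n;\alpha_1,\ldots,\alpha_n)$ admits many $\star$-factorisations into basic faces, and all of them must yield the same element of $Z_n$. But you stop exactly there, conceding that ``a clean verification requires an induction on the length of such factorisations,'' without supplying the two structural facts that make this verification go through. The first is that in $\col{D}$ a composite $\partial_j^0\circ\partial_i^1$ is \emph{never} defined: if it were, some representative $v$ of the class would have to satisfy both $\alpha_{k+1}^v=1$ (to define the future face) and $\alpha_{k+1}^v=0$ (to define the subsequent past face on the resulting class), which is contradictory. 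This forces any two defined factorisations of the same multi-index to consist of a block of past faces followed by a block of future faces, with the \emph{same} split point, so the comparison decomposes into a pure-past problem and a pure-future problem. The second is a rigidity property of past faces that must be proved as a strengthened induction hypothesis: if $\alpha_k^u=0$, then for \emph{every} $v$ with $\eqcl{u,k}=\eqcl{v,k}$ one has $\alpha_k^v=0$, $j_k^v=j_k^u$, and $\eqcl{u,k-1}=\eqcl{v,k-1}$ (the $\approx$-generated identifications are ruled out in this induction precisely because they would force some $\alpha_k^{u'}=1$). This rigidity settles the pure-past block termwise; your phrase ``determined by the spine predecessor and propagated by the composition inclusions'' gestures at it but does not establish it, and without it the past block of one factorisation could a priori disagree with that of another.

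Only after these two reductions does the quotient $\approx$ do the work you attribute to it: the remaining comparison is between two pure-future chains with equal $\star$-composites, which is exactly the generating rule of $\approx$ (and the $l=0$ case of that rule is what makes $\partial_i^1$ itself single-valued). So the gap is not in the architecture of your argument but in the two omitted lemmas on which the whole functionality claim rests; as written, the proposal asserts that $\approx$ ``is engineered to enforce'' functionality without showing that the general case actually reduces to the case $\approx$ covers. A complete proof must also note, as the paper does, that the labelling is $\approx$-invariant by chasing $w_{k+l+1}^{u_l}=\delta^1_{j^{u_l}_{k+l+1}}\circ\ldots\circ\delta^1_{j^{u_0}_{k+1}}(w_k^{u_0})$ through the equality of $\star$-composites, rather than only through the $\sim$-generators as your sketch does.
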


\begin{proof}
\begin{enumerate}
	\item \textbf{$\partial_i^0(\eqcl{u,k})$ is well-defined.} Assume that $\eqcl{u,k} = \eqcl{v,k}$, $\alpha_k^u = \alpha_k^v = 0$ and that $i = j_k^u = j_k^v$. We want to prove that $\eqcl{u,k-1} = \eqcl{v,k-1}$. We actually prove something stronger: if $\alpha_k^u = 0$, then for every $\eqcl{u,k} = \eqcl{v,k}$, $\alpha_k^v = 0$, $j_k^v = j_k^u$ and $\eqcl{u,k-1} = \eqcl{v,k-1}$, by induction of the proof of $\eqcl{u,k-1} = \eqcl{v,k-1}$.
		\begin{itemize}
			\item the case where $k = 0$ is impossible, because $\alpha_0^u$ is not defined,
			\item if there is a morphism $\map{i}{u}{v}$ in $\C$, then this is a consequence of the fact that $Di$ is a morphism of pHDA,
			\item the last case is from the definition of $\approx$. Let us use the same notations. In this case, we necessarily have that $\alpha_k^{u'}$ is equal to $1$ for some $u'$ such that $(u,k) \sim (u',k)$. By the previous cases, $\alpha_k^{u'} = 0$. So this case is impossible.
		\end{itemize}
	\item \textbf{$\partial_i^1(\eqcl{u,k})$ is well-defined.} Assume that $\eqcl{u,k} = \eqcl{v,k}$, $\alpha_k^u = \alpha_k^v = 1$ and that $i = j_k^u = j_k^v$. We want to prove that $\eqcl{u,k+1} = \eqcl{v,k+1}$. This a particular case of the definition of $\approx$ with $l = 0$.
	\item \textbf{$\partial_{i_1 < \ldots < i_n}^{\alpha_1, \ldots, \alpha_n}$ is well-defined.} We have to prove that if $(j_1;\beta_1)\star\ldots\star(j_n;\beta_n) = (i_1 < \ldots < i_n ; \alpha_1, \ldots, \alpha_n) = (k_1;\gamma_1)\star\ldots\star(k_n;\gamma_n)$ and both $\partial_{j_n}^{\beta_n}\circ\ldots\circ\partial_{j_1}^{\beta_1}(\eqcl{u,k})$ and $\partial_{k_n}^{\gamma_n}\circ\ldots\circ\partial_{k_1}^{\gamma_1}(\eqcl{u,k})$ are well-defined, then they are equal. 
	
	First, observe that $\partial^0_j\circ\partial_i^1$ is never defined. Indeed, if $\partial^0_j\circ\partial_i^1(\eqcl{u,k})$ were defined, then that would mean there is $v$ with $\eqcl{v,k} = \eqcl{u,k}$ such that $\alpha_{k+1}^v = 1$ and $\partial_i^1(\eqcl{u,k})$ would be equal to $\eqcl{v,k+1}$. That would also mean that $\partial^0_j(\eqcl{v,k+1})$ is defined, that is, $\alpha_{k+1}^v = 0$, which is impossible.
	
	This means that, since $(j_1;\beta_1)\star\ldots\star(j_n;\beta_n) = (k_1;\gamma_1)\star\ldots\star(k_n;\gamma_n)$, there is $s$ such that for all $i \leq s$, $\gamma_i = \beta_i = 0$ and for all $i > s$, $\gamma_i = \beta_i = 1$. By what we proved in the previous point, necessarily, for every $i \leq s$, $j_s = k_s$. Consequently, we have $(j_s;1)\star\ldots\star(j_n;1) = (k_s;1)\star\ldots\star(k_n;1)$. It is enough to prove if $(j_1;1)\star\ldots\star(j_n;1) = (i_1 < \ldots < i_n ; 1, \ldots, 1) = (k_1;1)\star\ldots\star(k_n;1)$ and both $\partial_{j_n}^{1}\circ\ldots\circ\partial_{j_1}^{1}(\eqcl{u,k})$ and $\partial_{k_n}^{1}\circ\ldots\circ\partial_{k_1}^{1}(\eqcl{u,k})$ are well-defined, then they are equal. This is precisely what the definition of $\approx$ allows you to prove.

	\item \textbf{$\lambda$ is well defined.} If $\eqcl{u,k} = \eqcl{v,k}$, we need to prove that $w_k^u = w_k^v$, by induction on the proof that $\eqcl{u,k} = \eqcl{v,k}$:
		\begin{itemize}
			\item if $k = 0$, then $w_k^u = \epsilon = w_k^v$.
			\item if $\map{i}{u}{v} \in \C$, and if $k \leq l_u, l_v$, then $w_k^u = w_k^v$ because $Di$ is a morphism of pHDA.
			\item if we are in the case of the definition of $\approx$, then by induction hypothesis, $w_k^{u_0} = w_k^{v_0}$ and for every $0 \leq s \leq l$, $w_{k+s+1}^{u_s} = w_{k+s+1}^{u_{s+1}}$. Furthermore, by definition of a spine, $w_{k+s+1}^{u_s} = \delta_{j_k^{u_s}}^1(w_{k+s}^{u_s})$. Combining all those equalities, we obtain:
			$$w_{k+l+1}^{u_l} = \delta_{j_{k+l+1}^{u_l}}^{1}\circ\ldots\circ\delta_{j_{k+1}^{u_0}}^1(w_k^{u_0}) = \delta_{j_{k+l+1}^{v_l}}^{1}\circ\ldots\circ\delta_{j_{k+1}^{v_0}}^1(w_k^{u_0}) = \delta_{j_{k+l+1}^{v_l}}^{1}\circ\ldots\circ\delta_{j_{k+1}^{v_0}}^1(w_k^{v_0}) = w_{k+l+1}^{v_l}$$
		\end{itemize}
	\item \textbf{$\col{D}$ is a partial precubical set.} Given by the third condition of the definition of the face maps.
	\item \textbf{$\col{D}$ is a pHDA.} By definition of a spine.
\end{enumerate}
\end{proof}

Now, given an object $u$ of $\C$, we want to construct a map $\map{\iota_u}{Du}{\col{D}}$. This maps $k \in B\sigma_u$ to $\eqcl{u,k}$ (the dimension matches by definition of $B\sigma_u$).

\begin{lemma}
$\iota_u$ is a morphism of pHDA.
\end{lemma}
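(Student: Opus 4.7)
The plan is a direct verification of the three conditions defining a morphism of pHDA (preservation of initial state, of labelling, and the inclusion $f \circ \partial \subseteq \partial \circ f$ for partial face maps), following the parallel inductive shape of the two face-map definitions.

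First I would dispatch the easy conditions. The initial state of $Du = B\sigma_u$ is $0$, and $\iota_u(0) = \eqcl{u,0}$; by the first clause in the definition of $\sim$, $(u,0) \sim \epsilon$, so $\eqcl{u,0} = \eqcl{\epsilon}$, which is the initial state of $\col{D}$. For the labelling, the labelling of $B\sigma_u$ sends $k$ to $w_k^u$, while by definition $\lambda(\eqcl{u,k}) = w_k^u$ in $\col{D}$; hence $\lambda \circ \iota_u = \lambda_{B\sigma_u}$.

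The core of the proof is the face map preservation: for every composite index $(i_1 < \ldots < i_n;\alpha_1,\ldots,\alpha_n)$, whenever $\partial_{i_1 < \ldots < i_n}^{\alpha_1,\ldots,\alpha_n}(k)$ is defined in $B\sigma_u$, the face $\partial_{i_1 < \ldots < i_n}^{\alpha_1,\ldots,\alpha_n}(\eqcl{u,k})$ is defined in $\col{D}$ and equals $\iota_u$ of the value. Both face maps are constructed as the smallest functional relations satisfying three clauses of identical shape: a base clause for $\alpha = 0$, a base clause for $\alpha = 1$, and the compositional inclusion $\partial \circ \partial \subseteq \partial$. The base clauses correspond exactly under $\iota_u$: if $\alpha_k^u = 0$, then $\partial_{j_k^u}^{0}(k) = k-1$ in $B\sigma_u$ while $\partial_{j_k^u}^{0}(\eqcl{u,k}) = \eqcl{u,k-1} = \iota_u(k-1)$ in $\col{D}$, and symmetrically for $\alpha_k^u = 1$. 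Induction on the derivation that $\partial_{i_1 < \ldots < i_n}^{\alpha_1,\ldots,\alpha_n}(k)$ is defined then propagates this matching through the compositional clause: the inductive hypothesis provides the value of the inner partial face under $\iota_u$, and applying the compositional clause on the $\col{D}$ side produces the image of the outer face, with the same equivalence class as $\iota_u$ of the original composite face in $B\sigma_u$.

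I do not foresee a genuine obstacle here; the step that deserves the most care is just making sure the two inductive characterizations really are in lock-step, which they are by construction (same base clauses up to $\iota_u$, same compositional closure). The lemma then follows by combining the three verifications above.
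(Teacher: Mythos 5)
Your proposal is correct and follows essentially the same route as the paper: verify the labelling condition directly from $\lambda(\eqcl{u,k}) = w_k^u$, and check the face-map condition on the generating clauses $\partial_{j_k^u}^{0}(k)=k-1$ and $\partial_{j_k^u}^{1}(k-1)=k$. The paper leaves the propagation to composite face maps (and the initial-state check) implicit, whereas you spell out the induction on the derivation through the compositional closure clause; this is the same argument, just made explicit.
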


\begin{proof}~
\begin{enumerate}
	\item \textbf{$\iota_u$ is a morphism of partial precubical sets.}
\begin{itemize}
	\item if $\alpha_k^u = 0$, then $\partial_{j_k^u}^{\alpha_k^u}(k) = k-1$, then $$\iota_u(\partial_{j_k^u}^{\alpha_k^u}(k)) = \iota_u(k-1) = \eqcl{u,k-1} = \partial_{j_k^u}^{\alpha_k^u}(\eqcl{u,k}) = \partial_{j_k^u}^{\alpha_k^u}(\iota_u(k))$$
	\item similarly if $\alpha_k^u = 1$.
\end{itemize}
	\item \textbf{$\iota_u$ is a morphism of pHDA.} $\lambda\circ\iota_u(k) = \lambda(\eqcl{u,k}) = w_k^u = \lambda_{Du}(k)$.
\end{enumerate}
\end{proof}

\begin{proposition}
$(\col{D},\iota_u)$ is the colimit of D.
\end{proposition}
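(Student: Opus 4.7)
The plan is to verify the universal property of a colimit. First I would check that $(\col{D},(\iota_u))$ is indeed a cocone, i.e., that for every morphism $\map{i}{u}{v}$ of $\C$ we have $\iota_v\circ Di = \iota_u$. This follows immediately from the second clause in the definition of $\sim$: for $k \leq l_u$ with $Di(k) = k$ (since morphisms in $\ps$ respect the underlying spine indexing), we get $\eqcl{v,k} = \eqcl{u,k}$.

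Next, given any competing cocone $(\map{f_u}{Du}{X})_{u \in \C}$ in $\phda$, I would define the mediating morphism $\map{g}{\col{D}}{X}$ by $g(\eqcl{u,k}) = f_u(k)$ and $g(\eqcl{\epsilon}) = i_X$. The bulk of the work is showing this is well-defined, which I would prove by induction on the generating clauses of $\sim$ followed by a second induction handling $\approx$. For $\sim$: the base case $(u,0) \sim \epsilon$ holds because $f_u$ preserves initial states; the case involving $\map{i}{u}{v}$ uses the cocone equation $f_v\circ Di = f_u$. For $\approx$: assume $\eqcl{u_0,k} = \eqcl{v_0,k}$ gives $f_{u_0}(k) = f_{v_0}(k)$; using that all $\alpha^{u_s}_{k+1+s}$ and $\alpha^{v_s}_{k+1+s}$ are $1$, and that $f_{u_s}$, $f_{v_s}$ are morphisms of pHDA (so they commute with the defined future face maps), the condition $(j^{u_0}_{k+1};1)\star\ldots\star(j^{u_l}_{k+l+1};1) = (j^{v_0}_{k+1};1)\star\ldots\star(j^{v_l}_{k+l+1};1)$ forces both iterated applications of future face maps in $X$ to agree (because face maps of $X$ form a lax functor, and hence two sequences of face maps associated to equal compositions in $\Box^{op}$ coincide whenever both are defined, the latter being guaranteed by the fact that the path in $Du_s$ hitting $k+s+1$ is carried through $X$).

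With well-definedness in hand, checking that $g$ is a morphism of pHDA is routine: the face-map inclusion $g\circ\partial_{j_1<\ldots<j_m}^{\beta_1,\ldots,\beta_m} \subseteq \partial_{j_1<\ldots<j_m}^{\beta_1,\ldots,\beta_m}\circ g$ reduces, by the minimality-for-inclusion definition of the face maps of $\col{D}$, to the two atomic cases $\alpha^u_k = 0$ and $\alpha^u_{k+1} = 1$, both of which follow directly from the fact that each $f_u$ is a morphism of pHDA. Preservation of labelling and initial state are immediate from the definitions.

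Uniqueness of $g$ is forced by the cocone equations: any mediating morphism must send $\eqcl{u,k}$ to $f_u(k)$ since $\iota_u(k) = \eqcl{u,k}$ and $g\circ\iota_u = f_u$. The main obstacle is the well-definedness step for $\approx$; specifically, justifying that the common shuffle product $(j_{k+1}^{u_0};1)\star\ldots\star(j_{k+l+1}^{u_l};1)$ really does identify the two face-map compositions applied in $X$, despite the face maps in $X$ being only partial. This is where the lax-functor structure of partial precubical sets pays off: whenever two strings of face-map applications correspond to equal morphisms in $\Box^{op}$ and both are defined on a given element, they must yield the same value, which is precisely $f \equiv g$ restricted to defined values — and definedness on both sides is exactly what the inductive hypothesis on $\approx$ supplies.
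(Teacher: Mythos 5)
Your proposal is correct and follows essentially the same route as the paper: verify the cocone equations via the second clause of $\sim$, define the mediating morphism by $\eqcl{u,k}\mapsto f_u(k)$, prove well-definedness by induction on the generation of $\sim$ and $\approx$ (with the $\approx$ case resolved exactly as you describe, by pushing the equal shuffle products through the cocone legs and invoking lax functoriality of the target's face maps on elements where both compositions are defined), and conclude with the routine morphism checks and uniqueness. The only material the paper treats that you leave implicit is the preliminary verification that $\col{D}$ itself and the legs $\iota_u$ are well-defined pHDA (morphisms), which the paper factors out into separate lemmas before this proposition.
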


\begin{proof}~
\begin{enumerate}
	\item \textbf{$(\col{D},\iota_u)$ is a cocone.} Given a morphism $\map{i}{u}{v}$ in $\C$, we must prove that $\iota_v\circ Di = \iota_u$. But $Di(k) = k$, so $\iota_v\circ Di(k) = \eqcl{v,k}$. By definition of $\sim$, $[v,k] = [u,k]$ and so $\eqcl{v,k} = \eqcl{u,k}$.
	\item \textbf{$(\col{D},\iota_u)$ has the universal property.} Let $(Z,\kappa_u)$ be another cocone. 
		\begin{enumerate}
			\item \textbf{Construction of a morphism $\map{\Phi}{\col{D}}{Z}$.} Define $\Phi(\eqcl{u,k}) = \kappa_u(k)$ and $\Phi(\eqcl{\epsilon})$ being the initial state of $Z$.
			\item \textbf{$\Phi$ is well defined.} We want to prove that if $\eqcl{u,k} = \eqcl{v,k}$, then $\Phi(\eqcl{u,k}) = \Phi(\eqcl{v,k})$. We prove it by induction, and there are three cases to consider:
				\begin{itemize}
					\item $\Phi(\eqcl{u,0}) = \kappa_u(0)$ is the initial state of $Z$ since $\kappa_u$ is a morphism of pHDA. So $\Phi(\eqcl{u,0})$ coincide with $\Phi(\eqcl{\epsilon})$.
					\item If $\map{i}{u}{v} \in \C$, and if $k \leq l_u, l_v$, then since $(Z, \kappa_u)$ is a cocone, $\kappa_u(k) = \kappa_v\circ Di(k) = \kappa_v(k)$.
					\item If we are in the case of the definition of $\approx$, we have to prove that $\kappa_{u_l}(k+l+1) = \kappa_{v_l}(k+l+1)$. By induction hypothesis, $\kappa_{u_0}(k) = \kappa_{v_0}(k)$, and for every $0 \leq s < l$, $\kappa_{u_s}(k+s+1) = \kappa_{u_{s+1}}(k+s+1)$ and $\kappa_{v_s}(k+s+1) = \kappa_{v_{s+1}}(k+s+1)$. Furthermore, since $\kappa_x$ are morphisms of partial precubical sets, for every $0 \leq s \leq l$, 
					$$\kappa_{u_s}(k+s+1) = \kappa_{u_s}(\partial_{j_{k+s+1}^{u_s}}^1(k+s)) = \partial_{j_{k+s+1}^{u_s}}^1(\kappa_{u_s}(k+s))$$
					$$\kappa_{v_s}(k+s+1) = \kappa_{v_s}(\partial_{j_{k+s+1}^{v_s}}^1(k+s)) = \partial_{j_{k+s+1}^{v_s}}^1(\kappa_{v_s}(k+s))$$
					
					Combining all those equalities, we obtain:
					
					\begin{tabular}{rcl}
    $\kappa_{u_l}(k+s+1)$ & $=$ & $\partial_{j_{k+l+1}^{u_{l-1}}}^1\circ\ldots\circ\partial_{j_{k+1}^{u_0}}^1(\kappa_{u_0}(k))$\\
     & $=$ & $\partial_{j_{k+l+1}^{u_{l-1}}}^1\circ\ldots\circ\partial_{j_{k+1}^{u_0}}^1(\kappa_{v_0}(k))$\\
     & $=$ & $\partial_{j_{k+l+1}^{v_{l-1}}}^1\circ\ldots\circ\partial_{j_{k+1}^{v_0}}^1(\kappa_{v_0}(k))$\\
     & $=$ & $\kappa_{v_l}(k+s+1)$
\end{tabular}
				\end{itemize}
			\item \textbf{$\Phi$ is a morphism of pHDA.} Because the $\kappa_u$ are.
			\item \textbf{$\Phi$ is a morphism of cocones.} $\Phi(\iota_u(k)) = \Phi(\eqcl{u,k}) = \kappa_u(k)$.
			\item \textbf{$\Phi$ is unique.} Obvious.
		\end{enumerate}
\end{enumerate}
\end{proof}

\subsection*{Proof of Proposition \ref{prop:coltree}}

\begin{lemma}
\label{lem:pathcol}
Given $u$ and $k$, the following is a path in $\col{D}$:
$$\pi_{u,k} = \eqcl{u,0} \xrightarrow{j_1^u,\alpha_1^u} \ldots \xrightarrow{j_k^u,\alpha_k^u} \eqcl{u,k}$$
\end{lemma}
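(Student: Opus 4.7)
The plan is to verify directly that the sequence $\pi_{u,k}$ satisfies the two defining conditions of a path in a pHDA, using only the definition of the face maps of $\col{D}$ given at the end of Section \ref{sub:colimits}. There is no deep content here; it is essentially an unfolding of definitions together with two sanity checks (that the endpoints lie in the right dimensions, and that $\eqcl{u,0}$ is the initial state).

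First I would check that $\pi_{u,k}$ starts at the initial state of $\col{D}$. Since $\sigma_u$ is a spine we have $(d_0^u, w_0^u) = (0, \epsilon)$, so by the first clause of $\sim$ we have $(u,0) \sim \epsilon$, hence $\eqcl{u,0} = \eqcl{\epsilon}$, which is the declared initial state of $\col{D}$. Next I would verify that each $\eqcl{u,s}$ lies in dimension $d_s^u$ (this is built into the definition of $Z_n$) and that the dimensions are compatible with the face maps appearing in $\pi_{u,k}$: if $\alpha_s^u = 0$ then $d_{s-1}^u = d_s^u - 1$ and $j_s^u \leq d_s^u$, so $\partial_{j_s^u}^0$ is a face map of the correct arity; symmetrically if $\alpha_s^u = 1$. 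Both facts are immediate from the definition of a spine.

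The heart of the argument is then a case split on $\alpha_s^u$ for each $1 \leq s \leq k$. If $\alpha_s^u = 0$, the first bullet in the definition of the face maps of $\col{D}$ (instantiated at index $s$) says that $\partial_{j_s^u}^0(\eqcl{u,s})$ is defined and equals $\eqcl{u,s-1}$, which is exactly the condition for the $s$-th step of $\pi_{u,k}$. If $\alpha_s^u = 1$, the second bullet (instantiated at index $s-1$, noting $\alpha_{(s-1)+1}^u = \alpha_s^u = 1$) says that $\partial_{j_s^u}^1(\eqcl{u,s-1})$ is defined and equals $\eqcl{u,s}$, again as required.

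No step is really an obstacle; if anything, the only subtle point is keeping the indexing convention of the second face-map clause consistent (it is phrased in terms of $\alpha_{k+1}^u = 1$ rather than $\alpha_k^u = 1$), so I would be careful to match indices when invoking it. Once both cases are handled, $\pi_{u,k}$ satisfies every condition in the definition of a path, completing the proof.
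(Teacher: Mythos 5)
Your proof is correct. The paper itself disposes of this lemma in one line, by observing that $\pi_{u,k}$ is the image under $\iota_u$ of the canonical path $0 \to 1 \to \cdots \to k$ in the path shape $B\sigma_u$, and that $\iota_u$ is a morphism of pHDA (so Proposition \ref{prop:morph} applies); you instead verify the path conditions directly from the definition of the face maps of $\col{D}$. These are really the same computation in different packaging: the paper's auxiliary lemma that $\iota_u$ is a morphism of partial precubical sets is proved by exactly the two case checks ($\alpha_s^u = 0$ gives $\partial_{j_s^u}^0(\eqcl{u,s}) = \eqcl{u,s-1}$, $\alpha_s^u = 1$ gives $\partial_{j_s^u}^1(\eqcl{u,s-1}) = \eqcl{u,s}$) that you carry out inline. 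The paper's route buys reusability ($\iota_u$ being a morphism is needed anyway to show $(\col{D},\iota_u)$ is a cocone), while yours is self-contained and makes the initial-state and dimension bookkeeping explicit, which the paper leaves implicit. Your remark about the index mismatch in the second face-map clause is well taken: read literally it would give $\partial_{j_k^u}^1$ where the path condition requires $\partial_{j_{k+1}^u}^1$, and your instantiation (taking the subscript on $j$ to be $s$, matching the clause for $B\sigma$) is the intended reading.
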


\begin{proof}
This is a rephrasing of the fact that $\iota_u$ is a morphism of pHDA.
\end{proof}

\begin{lemma}
\label{lem:eqch}
If $\eqcl{u,k} = \eqcl{v,k}$, then $\pi_{u,k} \sim_{ch} \pi_{v,k}$.
\end{lemma}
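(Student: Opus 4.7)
My plan is to proceed by induction on the derivation of $\eqcl{u,k} = \eqcl{v,k}$, viewed as a derivation of $[u,k] \approx [v,k]$ in $Y_n$. Since $\sim_{ch}$ is already an equivalence relation, the reflexivity, symmetry and transitivity closure steps of $\approx$ reduce to the corresponding properties of $\sim_{ch}$, so I only need to treat the two kinds of generators: (i) the case where $[u,k] = [v,k]$ already holds in $Y_n$, i.e.\ $(u,k) \sim (v,k)$ in $X_n$; and (ii) the generator of $\approx$ involving sequences $u_0, \dots, u_l$ and $v_0, \dots, v_l$.

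For case (i), I would run a sub-induction on the derivation of $\sim$. The base $(u,0) \sim \epsilon$ yields only the length-zero path on either side, which are trivially ch-homotopic. The interesting sub-case is a morphism $\map{i}{u}{v} \in \C$ with $k \leq l_u, l_v$. The key structural fact here is that $Di \colon B\sigma_u \to B\sigma_v$ is a morphism of pHDA between path shapes: preservation of the initial state, of labels and of the partial face structure force, by an induction on $m \leq l_u$, that $Di(m) = m$ and $(j_m^u, \alpha_m^u) = (j_m^v, \alpha_m^v)$. Combined with $\eqcl{u,m} = \eqcl{v,m}$ for every $m \leq k$, this gives $\pi_{u,k} = \pi_{v,k}$ identically as paths in $\col{D}$.

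Case (ii) is the heart of the lemma. The outer induction hypothesis supplies $\pi_{u_0,k} \sim_{ch} \pi_{v_0,k}$ and, for $0 \leq s < l$, $\pi_{u_s, k+s+1} \sim_{ch} \pi_{u_{s+1}, k+s+1}$, together with the symmetric $v$-side chain. My goal is to glue these into paths $\pi'$ and $\pi''$ such that $\pi_{u_l, k+l+1} \sim_{ch} \pi'$, $\pi'' \sim_{ch} \pi_{v_l, k+l+1}$, and $\pi'$ and $\pi''$ agree on their first $k$ transitions while on the middle block from position $k+1$ to $k+l+1$ both consist entirely of future-face transitions with equal $\star$-products. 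This is exactly the third and fourth clauses of the definition of $\leftrightsquigarrow_{ch}$ applied with indices $k+1$ and $k+l+1$; in particular, the fourth clause is a verbatim restatement of the $\star$-product condition in the $\approx$-generator. A single elementary confluent homotopy then connects $\pi'$ and $\pi''$, and composing the chains yields $\pi_{u_l, k+l+1} \sim_{ch} \pi_{v_l, k+l+1}$.

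I expect the main obstacle to be the gluing step. The IH-paths $\pi_{u_s, k+s+1}$ live in spines $\sigma_{u_s}$ which may differ from $\sigma_{u_l}$ both in length and in which transitions appear between positions $k$ and $k+s+1$, so the ch-homotopies supplied by IH act on paths of different lengths and with different underlying transition sequences. I would handle this by an auxiliary observation: any ch-homotopy can be extended on the right by a common suffix of transitions, and the $\approx$-equivalences $[u_s, k+s+1] \approx [u_{s+1}, k+s+1]$, combined with case (i) applied at each such step, allow one to transport transition labels from one $\sigma_{u_s}$ to the next while remaining inside $\sim_{ch}$. Once this bookkeeping is in place, the main case collapses to a single elementary confluent homotopy and the lemma follows.
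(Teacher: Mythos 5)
Your proposal is correct and follows essentially the same route as the paper: induction on the derivation of $\eqcl{u,k}=\eqcl{v,k}$ with the base/morphism cases giving equality of paths, and the $\approx$-generator case handled by gluing the future-face transitions onto the common prefix (the paper's auxiliary paths $\pi_u^s$, proved $\sim_{ch}$-equivalent to $\pi_{u_s,k+s+1}$ by an inner induction that appends one transition at a time — exactly your ``extend a ch-homotopy by a common suffix'' observation), followed by one elementary confluent homotopy on the middle block justified by the $\star$-product equality. The gluing step you flag as the main obstacle is resolved in the paper precisely as you sketch, so no gap remains.
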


\begin{proof}
Let us prove it by induction on the proof of $\eqcl{u,k} = \eqcl{v,k}$. Three cases:
\begin{itemize}
	\item If $k = 0$, both paths are the empty path.
	\item If there is $\map{i}{u}{v}$, then $\pi_{u,k} = \pi_{v,k}$.
	\item If we are in the case of the definition of $\approx$, we have to prove that $\pi_{u_l,k+l+1} \sim_{ch} \pi_{v_l,k+l+1}$. By induction hypothesis, we have:
	$$\pi_{u_0,k} \sim_{ch} \pi_{v_0,k}$$
	and for every $0\leq s < l$, 
	$$\pi_{u_s,k+s+1} \sim_{ch} \pi_{u_{s+1},k+s+1}$$
	Let us denote by $\pi_u^s$ the following path in $\col{D}$:
	$$\pi_{u_0,k} \xrightarrow{j_{k+1}^{u_0}, 1} \eqcl{u_0,k+1} \xrightarrow{j_{k+2}^{u_1}, 1} \ldots \xrightarrow{j_{k+l+1}^{u_s}, 1} \eqcl{u_s,k+s+1}$$
	Idem for $\pi_v^s$. Let us prove by induction on $s$ that $\pi_u^s \sim_{ch} \pi_{u_s, k+s+1}$.
		\begin{itemize}
			\item[$\bullet$] \textit{Base case $s = 0$.} $\pi_u^0 = \pi_{u_0, k+1}$, by construction.
			\item[$\bullet$] \textit{Induction case.} 
			
\begin{tabular}{rclcr}
     $\pi_u^s$ & $=$ & $\pi_u^{s-1} \rightarrow{j_{k+s+1}^{u_s}} \eqcl{u_s,k+s+1}$ & ~ & by construction\\
     & $~~\sim_{ch}~~$ & $\pi_{u_{s-1},k+s} \rightarrow{j_{k+s+1}^{u_s}} \eqcl{u_s,k+s+1}$ & & induction hypothesis\\
     & $~~\sim_{ch}~~$ & $\pi_{u_{s},k+s} \rightarrow{j_{k+s+1}^{u_s}} \eqcl{u_s,k+s+1}$ & & see above\\
     & $=$ & $\pi_{u_s, k+s+1}$ & & by definition
\end{tabular}		
		\end{itemize}
So, we have $\pi_u^l \sim_{ch} \pi_{u_l, k+l+1}$, and similarly, $\pi_v^l \sim_{ch} \pi_{v_l, k+l+1}$. Define $\tilde{\pi}_u^l$ as the same path as $\pi_u^l$, except that $\pi_{u_0,k}$ is replaced by $\pi_{v_0,k}$, which are confluently homotopic, as seen above. Then $\pi_u^l \sim_{ch} \tilde{\pi}_u^l$. Finally, $ \tilde{\pi}_u^l \leftrightsquigarrow_{ch} \pi_v^l$, by construction.
\end{itemize}
\end{proof}

\begin{lemma}[Accessibility modulo confluent homotopy] 
\label{lem:acc}
Let $D$ be non-empty.
In $\col{D}$, for every path $\pi$, there is an object $u$ such that $\pi$ is confluently homotopic to a subpath of $\iota_u$, that is there is $u$ and $k$ such that $\pi \sim_{ch} \pi_{u,k}$.
\end{lemma}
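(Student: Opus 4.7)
The plan is to proceed by induction on the length $n$ of the path $\pi$ in $\col{D}$, using Lemma \ref{lem:pathcol} (to ensure the candidate paths $\pi_{u,k}$ exist) and Lemma \ref{lem:eqch} (to move between representatives of the same equivalence class up to confluent homotopy).

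For the base case $n=0$, a path of length $0$ is just the initial state $\eqcl{\epsilon}$. Since $D$ is non-empty, we can pick any object $u$ of $\C$; then $\eqcl{u,0} = \eqcl{\epsilon}$ by the first defining clause of $\sim$, and the empty path is exactly $\pi_{u,0}$, proving the claim with $k=0$.

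For the inductive step, write $\pi = \pi' \xrightarrow{i,\alpha} x$, where $\pi'$ ends at some $\eqcl{v,k}$. By the induction hypothesis there is an object $u$ with $\pi' \sim_{ch} \pi_{u,k}$, and in particular $\eqcl{u,k} = \eqcl{v,k}$. The point is that the atomic face maps $\partial_i^\alpha$ of $\col{D}$ are defined only by the first two clauses in the construction (the composition clause only produces composite face maps of strictly higher arity), so we can extract a witness in $\C$ from the last transition of $\pi$. I would split into two sub-cases.
\begin{itemize}
\item If $\alpha = 1$, then $\partial_i^1(\eqcl{v,k}) = x$, so by the second defining clause of the face maps there is some object $v'$ with $\eqcl{v',k} = \eqcl{v,k}$, $\alpha_{k+1}^{v'} = 1$, $j_{k+1}^{v'} = i$, and $x = \eqcl{v',k+1}$. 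Lemma \ref{lem:eqch} gives $\pi_{u,k} \sim_{ch} \pi_{v',k}$, hence $\pi \sim_{ch} \pi_{v',k} \xrightarrow{i,1} \eqcl{v',k+1} = \pi_{v',k+1}$, and we take the witness to be $v'$.
\item If $\alpha = 0$, then $\partial_i^0(x) = \eqcl{v,k}$. By the first defining clause applied to $x$, there is some object $w$ such that $x = \eqcl{w,k+1}$, $\alpha_{k+1}^{w} = 0$, $j_{k+1}^{w} = i$, and $\eqcl{w,k} = \eqcl{v,k}$. Lemma \ref{lem:eqch} gives $\pi_{u,k} \sim_{ch} \pi_{w,k}$, so $\pi \sim_{ch} \pi_{w,k} \xrightarrow{i,0} \eqcl{w,k+1} = \pi_{w,k+1}$, and we take the witness to be $w$.
\end{itemize}

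The main obstacle will be the step where we extract a witness object from an atomic face-map relation. Since the face maps of $\col{D}$ are defined as the smallest relations closed under the three clauses, one must argue that a single $\partial_i^\alpha$ being defined on an equivalence class necessarily comes from one of the two base clauses (the composition clause only produces faces of dimension $\geq 2$). Once this is clear, the rest is a mechanical application of the induction hypothesis together with Lemma \ref{lem:eqch} to realign representatives, and the chaining of transitions to reconstitute a $\pi_{u',k+1}$.
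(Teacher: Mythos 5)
Your proposal is correct and follows essentially the same route as the paper's proof: induction on the length of $\pi$, extracting from the last transition a witness object $v$ of $\C$ realising that face map, and then using the induction hypothesis together with Lemma~\ref{lem:eqch} to replace $\pi_{u,k}$ by $\pi_{v,k}$ before appending the final step. The only differences are presentational — you spell out the base case and split the last transition into the $\alpha=0$ and $\alpha=1$ sub-cases, which the paper handles in one uniform statement.
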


\begin{proof}
Let us prove it by induction on the length of $\pi$. $\pi$ is of the form
$$\eqcl{u_0,0} \xrightarrow{i_1,\beta_1} \ldots \xrightarrow{i_k,\beta_k} \eqcl{u_k,k}$$
The $\eqcl{u_{k-1},k-1} \xrightarrow{i_k,\beta_k} \eqcl{u_k,k}$ part means that there is a $v$ such that:
\begin{itemize}
	\item $\eqcl{u_{k-1},k-1} = \eqcl{v,k-1}$ and $\eqcl{u_k,k} = \eqcl{v,k}$,
	\item $j_k^v = i_k$ and $\alpha_k^v = \beta_k$.
\end{itemize}
Let us denote by $\pi_{-1}$, the path $\pi$ from which we have removed the last transition. By induction hypothesis, $\pi_{-1} \sim_{ch} \pi_{u,k-1}$ for some $u$ and so $$\pi = \pi_{-1} \xrightarrow{i_k,\beta_k} \eqcl{v,k} \sim_{ch} \pi_{u,k-1} \xrightarrow{i_k,\beta_k} \eqcl{v,k}$$ In particular, this means that $\eqcl{v,k-1} = \eqcl{u,k-1}$. So, by Lemma \ref{lem:eqch}, $\pi_{u,k-1} \sim_{ch} \pi_{v,k-1}$. Consequently,
$$\pi = \pi_{-1} \xrightarrow{j_k^v,\alpha_k^v} \eqcl{v,k} \sim_{ch} \pi_{v,k-1} \xrightarrow{j_k^v,\alpha_k^v} \eqcl{v,k} = \pi_{v,k}$$
\end{proof}

\begin{proof}[of Proposition \ref{prop:coltree}]
\begin{enumerate}
	\item \textbf{Without shortcuts.} Given by the condition $$\partial_{j_1 < \ldots < j_m}^{\beta_1, \ldots, \beta_m}\circ\partial_{i_1 < \ldots < i_n}^{\alpha_1, \ldots, \alpha_n} \subseteq \partial_{k_1 < \ldots < k_{n+m}}^{\gamma_1, \ldots, \gamma_{n+m}}$$ of the definition of the face maps.
	\item \textbf{Existence of a path.} $\pi_{u,k}$ is a path to $\eqcl{u,k}$, by Lemma \ref{lem:pathcol}.
	\item \textbf{Unicity of the path.} Assume that there are two paths $\pi_1$ and $\pi_2$ to $\eqcl{u,k}$. By Lemma \ref{lem:acc}, there are $u_1$ and $u_2$ such that $\pi_i \sim_{ch} \pi_{u_i,k}$. This means in particular that $\eqcl{u_1,k} = \eqcl{u,k} = \eqcl{u_2,k}$, so by Lemma \ref{lem:eqch},
	$$\pi_1 \sim_{ch} \pi_{u_1,k} \sim_{ch} \pi_{u_2,k} \sim_{ch} \pi_2$$
\end{enumerate}
\end{proof}

\subsection*{Proof of Proposition \ref{prop:unftree}}

We need to construct a diagram $\map{D_X}{\C_X}{\ps}$, for which $U(X)$ is the colimit. Following ideas from \cite{dubut16}, it will essentially be given by the set of confluent homotopy classes of paths, but some extra care must be taken to handle confluent homotopy fully.

Define first $\C_X$ as the category whose:
\begin{itemize}
	\item objects are pairs $(u,L)$, where $L$ is a list $j_1, \ldots, j_n$ such that there is a path $\pi$ of $X$ that does not terminate with a transition of the form $\xrightarrow{~ \_, 1 ~}$ with 
	$$\pi \xrightarrow{~ j_1, 1 ~} \ldots \xrightarrow{~ j_n, 1 ~} x \in u$$
	\item morphisms are generated by the following. For every object $(u,L) \neq ([i], \epsilon)$, we have exactly one generator with codomain $(u,L)$ given by:
		\begin{itemize}
			\item either $L = L', j$, and we add a generator $\map{cat_{u,L',j}}{(v,L')}{(u,L)}$ for $v = [\pi_{-1}]$ with $\pi_{-1} \xrightarrow{~ j, 1 ~} x \in u$.
			\item either $L = \epsilon$. In this case, there is a $j$ such that every path of $u$ necessarily ends with a transition of the form $\xrightarrow{~j,0~}$. We \emph{choose} a unique object $(v,L')$ such that for every $\pi \in v$, $\pi \xrightarrow{~j,0~} e(u) \in u$, and add a generator $\map{cho_{u}}{(v,L')}{(u,\epsilon)}$.
		\end{itemize}
\end{itemize}

Next, define the diagram $D_X$ on objects $(u,L)$ by induction on the length of $u$ (we construct it with image in $\spine$ instead of $\ps$ for simplicity, identifying a spine with the path shape it induces), maintaining the fact that $e(D_X(u,L)) = (d,w)$, where $d$ is the dimension of $e(u)$, and $w$ is its labelling:
\begin{itemize}
	\item $D_X([i], \epsilon) = (0,\epsilon)$,
	\item if $u$ is of length $n+1$, and $L = L', j$, let $(v,L')$ as in the definition of $cat_{u,L',j}$. By induction hypothesis, $D_X(v,L')$ is constructed, and let $(d,w)$ be its ending. Define $$D_X(u,L) = D_X(v,L') \xrightarrow{~ j,1 ~} (d-1, \delta_j(w))$$
	\item if $u$ is of length $n+1$, and $L = \epsilon$, let $(v,L')$ and $j$ as in the definition of $cho_u$. By induction hypothesis, $D_X(v,L')$ is constructed, and let $(d,w)$ be its ending. Define $$D_X(u,L) = D_X(v,L') \xrightarrow{~ j,0 ~} (d+1, w')$$
with $w' = \lambda(e(u))$, which satisfies by construction that $\delta_j(w') = w$.
\end{itemize}
On morphisms, $D_X$ is defined as the obvious inclusions of spines.

We want to prove that $U(X)$ is the colimit of $D_X$. More precisely, we have to provide a morphism $\map{\mu_{u,L}}{D_X(u,L)}{U(X)}$ for every object $(u,L)$, such that $(U(X), (\mu_{u,L}))$ is the universal cocone of $D_X$. Again those morphisms are defined by induction on the length of $u$:
\begin{itemize}
	\item $\mu_{[i],\epsilon}(0,\epsilon) = [i]$,
	\item $\mu_{u,L',j}(y)$ is $\mu_{v,L'}(y)$ if $y \in D_X(v,L')$, or $u$, if $y = e(D_X(u,L))$,
	\item $\mu_{u,\epsilon}(y)$ is $\mu_{v,L'}(y)$ if $y \in D_X(v,L')$, or $u$, if $y = e(D_X(u,L))$.
\end{itemize}

\begin{lemma}
The $\mu_{u,L}$ are morphisms of pHDA.
\end{lemma}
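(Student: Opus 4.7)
The proof proceeds by induction on the length of (any representative path of) $u$, checking the three requirements for each $\mu_{u,L}$: (a) it sends the initial state $0$ of $D_X(u,L)$ (the bottom of the spine) to the initial state $[i]$ of $U(X)$; (b) it preserves labels, i.e., if $y$ sits at position $(d,w)$ of the spine then $\lambda_{U(X)}(\mu_{u,L}(y)) = w$; (c) it satisfies the inclusion $\mu_{u,L}\circ\partial_{i_1 < \ldots < i_k}^{\alpha_1,\ldots,\alpha_k} \subseteq \partial_{i_1 < \ldots < i_k}^{\alpha_1,\ldots,\alpha_k}\circ\mu_{u,L}$.

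The base case $(u,L)=([i],\epsilon)$ is immediate since $D_X([i],\epsilon)$ is a single point mapped to $[i]$, carrying the empty word, with no nontrivial face maps. For the inductive step in either of the two generator cases ($L=L',j$ or $L=\epsilon$), the map $\mu_{u,L}$ agrees with $\mu_{v,L'}$ on $D_X(v,L')$, so (a) is inherited and (b) holds there by the induction hypothesis. It remains to check (b) and the new face map on the single new element $e(D_X(u,L))$. For (b), in the $\xrightarrow{j,1}$ case the construction of $D_X$ gives the new label as $\delta_j(w)$; but $\mu_{u,L}(e(D_X(u,L)))=u$ and for $\pi\in u$ of the form $\pi_{-1}\xrightarrow{j,1}x$ one has $\lambda_{U(X)}(u)=\lambda_X(x)=\delta_j(\lambda_X(e(\pi_{-1})))$, which equals $\delta_j(w)$ by induction. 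The $\xrightarrow{j,0}$ case is analogous, using $\delta_j(\lambda_X(e(u)))=w$ by the choice of $v$ in the definition of $cho_u$.

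For (c), I would first reduce the multi-step inclusion to the single-step case. Since the face maps of both $D_X(u,L)$ (a path shape) and of $U(X)$ are defined as the smallest relations generated by their basic single-step cases together with the composition inclusion $\partial^{\vec{\beta}}_{\vec{j}}\circ\partial^{\vec{\alpha}}_{\vec{i}}\subseteq \partial^{\vec{\gamma}}_{\vec{k}}$ (for $(\vec{i};\vec{\alpha})\star(\vec{j};\vec{\beta})=(\vec{k};\vec{\gamma})$), an induction on the derivation of $\partial^{\vec{\alpha}}_{\vec{i}}(y)$ in $D_X(u,L)$ reduces everything to checking that $\mu_{u,L}$ commutes with the basic single-step face maps of the spine. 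For the new face map in the $\xrightarrow{j,1}$ step, we need $\partial_j^1(v)=u$ in $U(X)$; but $v=[\pi_{-1}]$ with $\pi_{-1}\xrightarrow{j,1}x\in u$, and the definition of $\partial_j^1$ in $U(X)$ yields exactly $\partial_j^1([\pi_{-1}])=[\pi_{-1}\xrightarrow{j,1}\partial_j^1(e(\pi_{-1}))]=u$. For the $\xrightarrow{j,0}$ step, the choice of $(v,L')$ in $cho_u$ guarantees a path $\pi\in v$ with $\pi\xrightarrow{j,0}e(u)\in u$, so $\partial_j^0(u)$ is defined in $U(X)$ and equals $[\pi]=v$.

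The main obstacle is the bookkeeping around the composition inclusion, in particular ensuring that when a multi-step face map $\partial^{\vec{\alpha}}_{\vec{i}}(y)$ is defined in $D_X(u,L)$ purely by composition, the corresponding composition is defined in $U(X)$ and produces the same equivalence class. Here the previously proved well-definedness of face maps in $U(X)$ and the fact that $U(X)$ satisfies the same composition inclusion are what make the induction go through; otherwise, one would risk the image composition being defined in $U(X)$ via some other witness, which would still be fine thanks to the uniqueness of values of partial functions. An added sanity check is required in the $\xrightarrow{j,0}$ case to confirm that $\partial_j^0$ is unambiguously defined on $u$ (so that the value $v$ does not depend on the chosen representative), which follows because confluent homotopy only permutes future ($\xrightarrow{\_,1}$) arrows, so all representatives of $u$ ending in $\xrightarrow{j,0}$ have confluently homotopic prefixes.
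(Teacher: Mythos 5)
Your proof is correct and follows essentially the same route as the paper: induction on the length of $u$, with the key observation in each inductive step being that the only new face of the extended spine is the single-step one, which corresponds by construction to $\partial_j^1(v)=u$ (resp.\ $\partial_j^0(u)=v$) in $U(X)$. You simply spell out the label, initial-state, and multi-step-face checks that the paper's terser proof leaves implicit.
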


\begin{proof}
This is done by induction on the length of $u$:
\begin{itemize}
	\item if $u = [i]$, obvious.
	\item if $L = L',j$, using the induction hypothesis, the only additional face defined in $D_X(u,L)$ which is not defined in $D_X(v, L')$ is $\partial_j^1(e(D_X(v,L')))$, which is equal to $e(D_X(u,L))$. Furthermore, in $U(X)$, $\partial_j^1(v) = u$ by construction.
	\item if $L = \epsilon$, using the induction hypothesis, the only additional face defined in $D_X(u,L)$ which is not defined in $D_X(v, L')$ is $\partial_j^0(e(D_X(u,L)))$, which is equal to $e(D_X(v,L))$. Furthermore, in $U(X)$, $\partial_j^0(u) = v$ by construction.
\end{itemize}
\end{proof}

\begin{proposition}
$(U(X), (\mu_{u,L}))$ is the colimit of $D_X$.
\end{proposition}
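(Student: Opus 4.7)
My plan is to establish directly the universal cocone property in two stages: first verify that $(U(X), (\mu_{u,L}))$ is a cocone over $D_X$, then given any other cocone $(Z, (\kappa_{u,L}))$ construct the unique mediating morphism $\Phi \colon U(X) \to Z$.

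The cocone verification is a direct induction following the construction of $\mathcal{C}_X$, $D_X$ and $\mu$: each generator $f \colon (v, L') \to (u, L)$ of $\mathcal{C}_X$ is sent by $D_X$ to the canonical subspine inclusion that appends a single transition, and the case split in the definition of $\mu_{u, L}$ immediately yields $\mu_{u, L} \circ D_X(f) = \mu_{v, L'}$. Compatibility extends to composites, and agreement of labelling and initial state is read off from the construction.

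For the universal property, the key observation I would establish first is that every element $u$ of $U(X)$ arises as $\mu_{u, L}(e(D_X(u, L)))$ for some object $(u, L)$: decompose any representative of $u$ as $\pi_0 \xrightarrow{j_1, 1} \ldots \xrightarrow{j_n, 1} x$ with $\pi_0$ not terminating with a future-face transition, so that $(u, (j_1, \ldots, j_n))$ is an object of $\mathcal{C}_X$ and the inductive definition gives $\mu_{u, (j_1, \ldots, j_n)}(e(D_X(u, (j_1, \ldots, j_n)))) = u$. This identity forces any mediator to satisfy $\Phi(u) = \kappa_{u, L}(e(D_X(u, L)))$, yielding uniqueness; for existence I would take this as definition, picking one such $(u, L)$ per class $u$, and then verify that $\Phi$ is a morphism of pHDA.

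The main obstacle I anticipate is the well-definedness of $\Phi$ together with its compatibility with the partial face maps of $U(X)$. For well-definedness: two objects $(u, L_1)$ and $(u, L_2)$ sharing the same first component both have parent chains in $\mathcal{C}_X$ that trace back via iterated $cat$ generators to a common ancestor $(u_{\mathrm{base}}, \epsilon)$, where $u_{\mathrm{base}}$ is the class of the maximal no-future-face prefix of representatives of $u$; this is confluent-homotopy unique because elementary confluent homotopies only permute trailing future-face blocks without altering this prefix. Iterating the cocone equation along these $cat$ generators and using the inclusion property $f_n \circ \partial \subseteq \partial \circ f_{n+k}$ for $\kappa$ expresses each of $\kappa_{u, L_1}(e(D_X(u, L_1)))$ and $\kappa_{u, L_2}(e(D_X(u, L_2)))$ as the same iterated partial face $\partial^{1, \ldots, 1}_I(\kappa_{u_{\mathrm{base}}, \epsilon}(e(D_X(u_{\mathrm{base}}, \epsilon))))$, since the $\star$-products of the entries of $L_1$ and $L_2$ coincide. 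Compatibility of $\Phi$ with face maps then splits into the primitive cases: for $\partial_i^1(u) = w$ the relevant $cat$ generator from $(u, L)$ to $(w, L, i)$ together with the inclusion property of $\kappa_{w, L, i}$ supplies $\partial_i^1(\Phi(u)) = \Phi(w)$ in $Z$, and for $\partial_i^0(u) = v$ the $cho$ generator into $(u, \epsilon)$ plays the same role (independence from the arbitrary choice made in defining $cho$ being absorbed by well-definedness); composite partial face maps $\partial^{\vec{\alpha}}_{\vec{i}}$ are then handled by the defining inclusion property of partial precubical sets applied within $Z$.
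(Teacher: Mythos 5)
Your proposal is correct and follows essentially the same route as the paper's proof: the mediating morphism is defined by $\Phi(u)=\kappa_{u,L}(e(D_X(u,L)))$, its independence of $L$ is reduced to tracing back to a common class with empty list and using that the two $\star$-products of the trailing future-face lists coincide, and compatibility with the face maps is checked on the generating cases $\partial_i^1$ (via $cat$) and $\partial_i^0$ (via $cho$). The only cosmetic difference is that you derive uniqueness up front from the observation that every class $u$ is hit by some $\mu_{u,L}$, whereas the paper verifies $\Phi\circ\mu_{u,L}=\tau_{u,L}$ and uniqueness as separate final steps.
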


\begin{proof}
Let $(K, (\tau_{u,L}))$ be another cocone. Define $\map{\Phi}{U(X)}{K}$ as $$u \mapsto \tau_{u,L}(e(D_X(u,L)))$$
for any $L$ such that $(u,L)$ is an object of $\C_X$.
\begin{itemize}
	\item \textbf{This definition does not depends on the choice of $L$.} Two cases to consider:
		\begin{itemize}
			\item if $u = [i]$ or if any path of $u$ ends with a transition of the form $\xrightarrow{~\_,0 ~}$, then $L = \epsilon$ is the only choice possible.
			\item otherwise, assume $L = j_1, \ldots, j_n$ and $L' = k_1, \ldots, k_n$ are two such lists. Then necessarily, by definition of confluent homotopy, $$(j_1,1)\star \ldots \star (j_n,1) = (k_1,1)\star\ldots\star (k_n,1).$$
			and we can choose a confluent homotopy class of paths $v$ such that for every $\pi \in v$, $$\pi \xrightarrow{~j_1,1~} \ldots \xrightarrow{~j_n,1~} e(D_X(u,L)) \in u$$ and $$\pi \xrightarrow{~k_1,1~} \ldots \xrightarrow{~k_n,1~} e(D_X(u,L')) \in u$$
			The first point implies that $$\partial_{j_n}^1\circ\ldots\circ\partial_{j_1}^1 \equiv \partial_{k_n}^1\circ\ldots\circ\partial_{k_1}^1$$
			The second means that, by definition of $D_X(u,L)$, $$e(D_X(u,L)) = \partial_{j_n}^1\circ\ldots\circ\partial_{j_1}^1(D_X(v,\epsilon)) \text{~in~} D_X(u,L)$$ and $$e(D_X(u,L')) = \partial_{k_n}^1\circ\ldots\circ\partial_{k_1}^1(D_X(v,\epsilon))\text{~in~} D_X(u,L')$$
			So then, since $(K,(\tau_{u,L}))$ is a cocone made of morphisms of pHDA
\begin{tabular}{rclcr}
     $\tau_{u,L}(e(D_X(u,L)))$ & $=$ & $\partial_{j_n}^1\circ\ldots\circ\partial_{j_1}^1(\tau_{v,\epsilon}(D_X(v,\epsilon)))$ & ~ & morphism of pHDA\\
     & $=$ & $\partial_{k_n}^1\circ\ldots\circ\partial_{k_1}^1(\tau_{v,\epsilon}(D_X(v,\epsilon)))$ & & first point\\
     & $=$ & $\tau_{u,L'}(e(D_X(u,L')))$ & & morphism of pHDA
\end{tabular}	
		\end{itemize}
	
	\item \textbf{$\Phi\circ\mu_{u,L} = \tau_{u,L}$.} By induction on the length of $u$:
		\begin{itemize}
			\item if $u = [i]$, $L = \epsilon$, $$\Phi\circ\mu_{[i],\epsilon}(0,\epsilon) = \Phi([i]) = ]tau_{[i],\epsilon}(e(D_X([i],\epsilon))) = \tau_{[i],\epsilon}(0,\epsilon)$$
			\item if $u$ is of length $n+1$, and $L = L',j$, two cases:
				\begin{itemize}
					\item $y \in D_X(v,L')$, use the induction hypothesis.
					\item $y = e(D_X(u,L))$, $$\Phi\circ\mu_{u,L}(y) = \Phi(u) = \tau_{u,L}(e(D_X(u,L)))$$
				\end{itemize} 
			\item if $u$ is of length $n+1$, and $L = \epsilon$, the proof is the same. For the first case, use the $L'$ from the definition of $cho_u$.
		\end{itemize}
	
	\item \textbf{$\Phi$ is a morphism of pHDA.} Two cases to consider:
		\begin{itemize}
			\item if we have a path $\pi$ of the form $\pi' \xrightarrow{~j,1~} x$, that is $\partial_j^1([\pi']) = [\pi]$, we have to prove that $\partial_j^1(\Phi([\pi'])) = \Phi([\pi])$. Then we can choose $L'$ such that $([\pi],L',j)$ and $([\pi'],L')$ are objects of $\C_X$. Then:\\

\begin{center}
\begin{tabular}{rclcr}
     $\partial_j^1(\Phi([\pi']))$& $=$ & $\partial_j^1(\tau_{[\pi'],L'}(e(D_X([\pi'],L'))))$ & & definition of $\Phi$\\
     & $=$ & $\partial_j^1(\tau_{[\pi],L}(e(D_X([\pi'],L'))))$ & & $\tau$ is a cocone\\
     & $=$ & $\tau_{[\pi],L}(\partial_j^1(e(D_X([\pi'],L'))))$ & & $\tau_{[\pi],L}$ morphism of pHDA\\
     & $=$ & $\tau_{[\pi],L}(e(D_X([\pi],L)))$ & & definition of $D_X([\pi],L)$\\
     & $=$ & $\Phi([\pi])$ & & definition of $\Phi$
\end{tabular}
\end{center}
			\item if we have a path $\pi$ of the form $\pi' \xrightarrow{~j,0~} x$, that is $\partial_j^0([\pi]) = [\pi']$, we have to prove that $\partial_j^0(\Phi([\pi])) = \Phi([\pi'])$. Then we can choose $L'$ from the definition of $cho_{[\pi]}$, and we have that $([\pi'], L')$ and $([\pi],\epsilon)$ are objects of $\C_X$. Then:\\

\begin{center}
\begin{tabular}{rclcr}
     $\partial_j^0(\Phi([\pi]))$& $=$ & $\partial_j^0(\tau_{[\pi],\epsilon}(e(D_X([\pi],L))))$ & & definition of $\Phi$\\
     & $=$ & $\tau_{[\pi],\epsilon}(\partial_j^0(e(D_X([\pi],\epsilon))))$ & & $\tau_{[\pi],\epsilon}$ morphism of pHDA\\
     & $=$ & $\tau_{[\pi'],L'}(\partial_j^0(e(D_X([\pi],\epsilon))))$ & & $\tau$ cocone\\
     & $=$ & $\tau_{[\pi'],L'}(e(D_X([\pi'],L')))$ & & definition of $D_X([\pi],\epsilon)$\\
     & $=$ & $\Phi([\pi'])$ & & definition of $\Phi$
\end{tabular}
\end{center}
		\end{itemize}
		
	\item \textbf{$\Phi$ is unique.} Assume there is another $\Phi'$, then $\Phi'(u) = \Phi'\circ\mu_{u,L}(e(D_X(u,L)))$ for any $L$ such that $(u,L)$ is an object of $\C_X$. Then $\Phi'(u) = \tau_{u,L}(e(D_X(u,L))) = \Phi(u)$.
\end{itemize}
\end{proof}

\subsection*{Proof of Proposition \ref{prop:unf}}

~

\begin{enumerate}
	\item \textbf{$\unf_X$ is a morphism of partial precubical sets.} Two cases to consider:
		\begin{itemize}
			\item If $\partial_i^1([\pi])$ is defined then $$\unf_X(\partial_i^1([\pi])) = \unf_X([\pi \xrightarrow{i,1} \partial_i^1(e(\pi))]) =  \partial_i^1(e(\pi)) = \partial_i^1(\unf_X([\pi]))$$
			\item If $\partial_i^0(\alpha)$ is defined then $\alpha = [\pi = \pi_{-1} \xrightarrow{i,0} e(\alpha)]$, and so $$\unf_X(\partial_i^0(\alpha)) = \unf_X([\pi_{-1}]) = e(\pi_{-1}) = \partial_i^0(e(\alpha)) = \partial_i^0(\unf_X(\alpha))$$
		\end{itemize}
	\item \textbf{$\unf_X$ is a morphism of pHDA.} By construction of the labelling function of the unfolding.
	\item \textbf{$\unf_X$ is a covering.} Assume given a path 
	$$\Gamma = H_0 \xrightarrow{j_1,\alpha_1} H_1 \xrightarrow{j_2,\alpha_2} \ldots \xrightarrow{j_n,\alpha_n} H_n$$
	in $U(X)$. Now consider a path $\pi'$ of $X$, of the form
	$$\unf_X(\Gamma) \xrightarrow{i,\alpha} x$$
	The goal is to prove that there is a unique $H$, such that
	$$\Gamma' = H_0 \xrightarrow{j_1,\alpha_1} H_1 \xrightarrow{j_2,\alpha_2} \ldots \xrightarrow{j_n,\alpha_n} H_n \xrightarrow{i,\alpha} H$$
	is a path of $U(X)$, and
	$$\unf_X(\Gamma') = \pi'$$
	Two cases two consider:
	\begin{itemize}
		\item If $\alpha = 0$, $\partial_i^0(x) = e(H_n)$. Define $H = [\pi \xrightarrow{i,0} x]$ for $\pi \in H_n$. Then $\Gamma' = \Gamma \xrightarrow{i,0} H$ is a path in $U(X)$ such that $\unf_X(\Gamma') = \pi'$. Actually, this is the unique such $\Gamma'$ since $H$ is the unique possible such that $\Gamma \xrightarrow{i,0} H$ is a path.
		\item If $\alpha = 1$, $x = \partial_i^1(e(H_n))$. Defining $H = [\pi \xrightarrow{i,1} x]$ works too.
	\end{itemize}
	\item \textbf{$\unf_X$ is universal.} By Proposition \ref{prop:coltree}.
\end{enumerate}

\subsection*{Proof of Proposition \ref{prop:natiso}}

\begin{enumerate}
	\item \textbf{Definition of $\map{\eta_X}{X}{U(X)}$.} $\eta_X(x) = \xi$, where $\xi$ is the unique class of paths modulo confluent homotopy from the initial state to $x$.
	\item \textbf{$\eta_X$ is a morphism of partial precubical sets.} Let $x$ in $X$ such that $\partial_{i_1 < \ldots < i_k}^{\alpha_1, \ldots, \alpha_k}(x)$ is defined. Since $X$ is without shortcuts, there is a sequence $$(j_1;\beta_1)\star\ldots\star(j_n;\beta_n) = (i_1 < \ldots < i_n;\alpha_1, \ldots, \alpha_n)$$ such that $\partial_{j_n}^{\alpha_n}\circ\ldots\circ\partial_{j_1}^{\alpha_1}(x)$ is defined, and so is equal to $\partial_{i_1 < \ldots < i_k}^{\alpha_1, \ldots, \alpha_k}(x)$. Let us prove by induction that $\partial_{j_i}^{\alpha_i}\circ\ldots\circ\partial_{j_1}^{\alpha_1}(\eta_X(x))$ is defined and is equal to $\eta_X(\partial_{j_i}^{\alpha_i}\circ\ldots\circ\partial_{j_1}^{\alpha_1}(x))$.
		\begin{itemize}
			\item[$\bullet$] \textit{Base case $i = 0$.} $\eta_X(x) = \eta_X(x)$.
			\item[$\bullet$] \textit{Induction case.} Assume that $\partial_{j_i}^{\alpha_i}\circ\ldots\circ\partial_{j_1}^{\alpha_1}(\eta_X(x))$ is defined and is equal to $\eta_X(\partial_{j_i}^{\alpha_i}\circ\ldots\circ\partial_{j_1}^{\alpha_1}(x))$. In particular, $$e(\partial_{j_i}^{\alpha_i}\circ\ldots\circ\partial_{j_1}^{\alpha_1}(\eta_X(x))) = \partial_{j_i}^{\alpha_i}\circ\ldots\circ\partial_{j_1}^{\alpha_1}(x).$$ Two cases:
				\begin{itemize}
					\item $\alpha_{i+1} = 1$: Then $\partial_{j_{i+1}}^{\alpha_{i+1}}\circ\partial_{j_i}^{\alpha_i}\circ\ldots\circ\partial_{j_1}^{\alpha_1}(\eta_X(x))$ is defined as $$[\pi \xrightarrow{j_{i+1},1} \partial_{j_{i+1}}^{\alpha_{i+1}}\circ\partial_{j_i}^{\alpha_i}\circ\ldots\circ\partial_{j_1}^{\alpha_1}(x)]$$
					for any $\pi \in \eta_X(\partial_{j_i}^{\alpha_i}\circ\ldots\circ\partial_{j_1}^{\alpha_1}(x))$.
					\item $\alpha_{i+1} = 0$: Since $\partial_{j_{i+1}}^0(\partial_{j_i}^{\alpha_i}\circ\ldots\circ\partial_{j_1}^{\alpha_1}(x))$ is defined, there is a path $\pi$ from the initial state to it. So, $$\pi \xrightarrow{j_{i+1},0} \partial_{j_i}^{\alpha_i}\circ\ldots\circ\partial_{j_1}^{\alpha_1}(x)$$
					is a path from the initial state to $\partial_{j_i}^{\alpha_i}\circ\ldots\circ\partial_{j_1}^{\alpha_1}(x)$. By unicity of path modulo confluent homotopy, this path belongs to $\partial_{j_i}^{\alpha_i}\circ\ldots\circ\partial_{j_1}^{\alpha_1}(\eta_X(x))$, and so $\partial_{j_{i+1}}^{\alpha_{i+1}}\circ\partial_{j_i}^{\alpha_i}\circ\ldots\circ\partial_{j_1}^{\alpha_1}(\eta_X(x))$ is defined as $[\pi]$.
				\end{itemize}
		\end{itemize}
Consequently, $\eta_X(\partial_{i_1 < \ldots < i_k}^{\alpha_1, \ldots, \alpha_k}(x))$ is defined and is equal to $\partial_{i_1 < \ldots < i_k}^{\alpha_1, \ldots, \alpha_k}(\eta_X(x))$
	\item \textbf{$\eta_X$ is a morphism of pHDA.} $\lambda(\eta_X(x)) = \lambda(\xi) = \lambda(e(\xi))$. But by definition of $\xi$, $e(\xi) = x$.
	\item \textbf{$\eta_X$ is the inverse of $\unf_X$.}
		\begin{itemize}
			\item $\eta_X\circ\unf_X([\pi]) = \eta_X(e(\pi))$. But $[\pi]$ and $\eta_X(e(\pi))$ are both confluent homotopy classes of paths from the initial state to $e(\pi)$. By unicity, we get that they are equal and $\eta_X\circ\unf_X([\pi]) = [\pi]$.
			\item $\unf_X\circ\eta_X(x) = e(\xi) = x$ by definition of $\xi$.
		\end{itemize}
	\item \textbf{$\eta$ is natural.} Let $\map{f}{X}{Y}$. $U(f)\circ\eta_X$ maps $x$ to the image by $f$ of the unique homotopy class of paths from the initial state to $x$. $\eta_Y\circ f$ maps $x$ to the unique homotopy class of paths from the initial state to $f(x)$. By unicity, they both coincide.
\end{enumerate}

\subsection*{Proof of Corollary \ref{theo:coref}}

We prove that $\unf$ is the counit and $\eta$ is the unit.
\begin{enumerate}
	\item \textbf{$\unf$ is natural.} $f\circ\unf_X$ and $\unf_Y\circ U(f)$ both maps an homotopy class of paths to their common end point.
	\item \textbf{First unit-counit equation.} $\unf_X\circ\eta_X = \text{id}_X$ by the previous Proposition.
	\item \textbf{Second unit-counit equation.} To prove that $U(\unf_X)\circ\eta_{U(X)} = \text{id}_{U(X)}$, it is enough to prove that $U(\unf_X) = \unf_{U(X)}$. To this end, let us observe that, since the unfolding is a tree, then by Lemma \ref{lem:acc}, any path in $U(X)$ from the initial state to $[\pi]$, with $$\pi = x_0 \xrightarrow{j_1,\alpha_1} x_1 \xrightarrow{j_2,\alpha_2} \ldots \xrightarrow{j_n,\alpha_n} x_n$$ is of the form $$\Gamma = [\pi_{-n}] \xrightarrow{j_1,\alpha_1} \ldots \xrightarrow{j_{n-1},\alpha_{n-1}} [\pi_{-1}] \xrightarrow{j_n,\alpha_n} [\pi]$$ where $$\pi_{-k} = x_0 \xrightarrow{j_1,\alpha_1} x_1 \xrightarrow{j_2,\alpha_2} \ldots \xrightarrow{j_{n-k},\alpha_{n-k}} x_{n-k}$$ So $$U(\unf_X)([\Gamma]) = [x_0 \xrightarrow{j_1,\alpha_1} x_1 \xrightarrow{j_2,\alpha_2} \ldots \xrightarrow{j_n,\alpha_n} x_n] = [\pi] = \unf_{U(X)}([\Gamma])$$
\end{enumerate}

\section{Omitted proofs of Section \ref{sec:cofi}}

\subsection*{Proof of Lemma \ref{lem:retract}}

\begin{lemma}
\label{lem:mocoho}
If $\map{f}{X}{Y}$ is a morphism of pHDA, and if $\pi \sim_{ch} \pi'$, then $f(\pi) \sim_{ch} f(\pi')$.
\end{lemma}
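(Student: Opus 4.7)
The plan is to reduce to the elementary case and then verify the four conditions are preserved by post-composition with $f$. Concretely, since $\sim_{ch}$ is the reflexive transitive closure of $\leftrightsquigarrow_{ch}$, I would proceed by induction on the length of a chain $\pi = \pi^0 \leftrightsquigarrow_{ch} \pi^1 \leftrightsquigarrow_{ch} \ldots \leftrightsquigarrow_{ch} \pi^m = \pi'$. The base case $m=0$ is immediate, and the induction step reduces by transitivity to showing that a single elementary step $\pi \leftrightsquigarrow_{ch} \pi'$ is preserved, namely $f(\pi) \leftrightsquigarrow_{ch} f(\pi')$.

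For the elementary case, first invoke Proposition \ref{prop:morph} to get that $f(\pi) = f(x_0) \xrightarrow{j_1,\alpha_1} \ldots \xrightarrow{j_n,\alpha_n} f(x_n)$ and $f(\pi') = f(x'_0) \xrightarrow{j'_1,\alpha'_1} \ldots \xrightarrow{j'_n,\alpha'_n} f(x'_n)$ are genuine paths in $Y$ with exactly the same index/direction labels as $\pi$ and $\pi'$. I would then use the very same witnesses $0 < s < t \leq n$ provided by $\pi \leftrightsquigarrow_{ch} \pi'$. The first condition, $x_k = x'_k$ for $k < s$ or $k \geq t$, transfers to $f(x_k) = f(x'_k)$ because $f$ is a (total) function. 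The second and third conditions involve only the labels $j_k, j'_k, \alpha_k, \alpha'_k$, which are unchanged under $f$. The fourth condition is a purely combinatorial equality in $\star$ on the label sequences, which does not involve the states at all and therefore holds verbatim for $f(\pi)$ and $f(\pi')$.

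Hence all four clauses of the definition of $\leftrightsquigarrow_{ch}$ are verified for $f(\pi)$ and $f(\pi')$ with the same $s,t$, which closes the elementary step and therefore the induction. There is no serious obstacle: the lemma is essentially a functoriality statement, and the only input beyond Proposition \ref{prop:morph} is the observation that the definition of elementary confluent homotopy factors into conditions on states (preserved by any function) and conditions on labels (left untouched by morphisms of pHDA, which preserve the index/direction data of transitions).
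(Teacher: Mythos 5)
Your proof is correct and follows essentially the same route as the paper, which disposes of this lemma with a one-line ``easy induction on the proof that $\pi \sim_{ch} \pi'$''; you have simply made explicit the reduction to the elementary case via Proposition~\ref{prop:morph} and the observation that the four clauses of $\leftrightsquigarrow_{ch}$ split into state conditions (preserved by applying $f$) and label conditions (untouched, since morphisms preserve the index/direction data). Nothing further is needed.
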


\begin{proof}
Easy induction on the proof that $\pi \sim_{ch} \pi'$.
\end{proof}

\begin{proof}[of Lemma \ref{lem:retract}]
Consider a section $\map{s}{X}{Y}$ and a retract $\map{r}{Y}{X}$, that is, $r\circ s = \text{id}_X$.
\begin{enumerate}
	\item \textbf{If $Y$ is without shortcuts then $X$ is without shortcuts.} Let $x \in X$ such that $\partial_{i_1 < \ldots < i_n}^{\alpha_1, \ldots, \alpha_n}(x)$ is defined. Then, since $s$ is a morphism, $\partial_{i_1 < \ldots < i_n}^{\alpha_1, \ldots, \alpha_n}(s(x))$ is defined and is equal to $s(\partial_{i_1 < \ldots < i_n}{\alpha_1, \ldots, \alpha_n}(x))$. Since $Y$ is without shortcuts, then there is a sequence $$(j_1;\beta_1)\star\ldots\star(j_n;\beta_n) = (i_1 < \ldots < i_n;\alpha_1, \ldots, \alpha_n)$$ such that $\partial_{j_n}^{\alpha_n}\circ\ldots\circ\partial_{j_1}^{\alpha_1}(s(x))$ is defined. Since $r$ is a morphism, $$\partial_{j_n}^{\alpha_n}\circ\ldots\circ\partial_{j_1}^{\alpha_1}(x) = \partial_{j_n}^{\alpha_n}\circ\ldots\circ\partial_{j_1}^{\alpha_1}(r(s(x)))$$ is defined.
	\item \textbf{If every element of $Y$ is accessible, then every element of $X$ is accessible.} Let $x$ be such an element. Then since $Y$ is a tree and $s(x)$ is accessible, that is, there is a path $\pi$ to $s(x)$. Then $r(\pi)$ is path to $x$.
	\item \textbf{If $Y$ is has the unique path property modulo confluent homotopy, then $X$ has the unique path property modulo confluent homotopy.} Let $\pi$ and $\pi'$ be two paths to $x$. Then $s(\pi) \sim_{ch} s(\pi')$ since $Y$ is a tree. Then $\pi = r(s(\pi)) \sim_{ch} r(s(\pi')) = \pi'$ by Lemma \ref{lem:mocoho}.
\end{enumerate}
\end{proof}

\subsection*{Proof of Proposition \ref{prop:cofi}}

Let $X$ be a tree and such a diagram
		\begin{center}
	\begin{tikzpicture}[scale=1.5]
		\node (A) at (0,0.9) {$\ast$};
		\node (B) at (1.2,0.9) {$Y$};
		\node (C) at (0,0) {$X$};
		\node (D) at (1.2,0) {$Z$};
		\path[->,font=\scriptsize]
		(A) edge node[above]{$!$} (B)
		(B) edge node[right]{$f$} (D)
		(C) edge node[below]{$g$} (D)
		(A) edge node[left]{$!$} (C);
	\end{tikzpicture}
	\end{center}
	Let $X_n$ be the restriction of $X$ to elements of depth smaller than $n$ and such that the face maps are such that $\partial_{k_1 < \ldots < k_{n}}^{\gamma_1, \ldots, \gamma_{n}}(x)$ is define in $X_n$ iff there is a sequence $(i_1;\alpha_1)\star\ldots\star(i_n;\alpha_n) = (k_1 < \ldots < k_n ; \gamma_1, \ldots, \gamma_n)$ such that for every $s$, $\partial_{i_s}^{\alpha_s}\circ\ldots\circ\partial_{i_1}^{\alpha_1}(x)$ is defined in $X$ and belongs to $X_n$.
		We will construct by induction on $n$ map $\map{h_n}{X_n}{Y}$ such that 
			\begin{center}
	\begin{tikzpicture}[scale=1.5]
		\node (A) at (0,0.9) {$\ast$};
		\node (B) at (1.2,0.9) {$Y$};
		\node (C) at (0,0) {$X_n$};
		\node (D) at (1.2,0) {$Z$};
		\path[->,font=\scriptsize]
		(A) edge node[above]{$!$} (B)
		(B) edge node[right]{$f$} (D)
		(C) edge node[below]{$g$} (D)
		(A) edge node[left]{$!$} (C);
		\path[->,font=\scriptsize,dotted]
		(C) edge node[above]{$h_n$} (B);
	\end{tikzpicture}
	\end{center}
	and for all $i < n$, $h_{n}$ and $h_i$ coincide on $X_i$. This is enough to construct the lift that is needed, because the inductive limit of the $X_n$ is $X$, since $X$ is without shortcuts.
	
\begin{itemize}
	\item \textit{Base case.} $\map{h_0}{X_0}{Y}$ is the unique map that sends the initial state of $X$ to the initial state of $Y$.
	\item \textit{Induction case.} Assume that $h_0$, \ldots, $h_n$ are constructed.
	
	\begin{enumerate}
		\item \textbf{Construction of $h_{n+1}$.} Let $x \in X_{n+1}\setminus X_n$. Let $\alpha$ be the unique confluent homotopy class of paths to $x$. Two cases:
	\begin{itemize}
		\item either there is a unique $i$ such that every path $\pi$ of $\alpha$ ends with $\partial_i^0(x) \xrightarrow{i,0} x$. In particular, $\partial_i^0(x) \in X_n$. Define $h_{n+1}(x)$ as follows.
		There is a commutative diagram of the form:
		\begin{center}
	\begin{tikzpicture}[scale=1.5]
		\node (A) at (0,0.9) {$B\sigma_\pi$};
		\node (B) at (1.2,0.9) {$Y$};
		\node (C) at (0,0) {$B\sigma_\pi'$};
		\node (D) at (1.2,0) {$Z$};
		\path[->,font=\scriptsize]
		(A) edge node[above]{$h_n\circ\pi$} (B)
		(B) edge node[right]{f} (D)
		(C) edge node[below]{$g\circ\pi'$} (D)
		(A) edge node[left]{$!$} (C);
		\path[->,font=\scriptsize, dotted]
		(C) edge node[above]{$\theta$} (B);
	\end{tikzpicture}
	\end{center}
	where $\pi$ is a path to $\partial_i^0(x)$ and $\pi' = \pi \xrightarrow{i,0} x$. Since $f$ is open, there is a map $\map{\theta}{B\sigma_\pi'}{Y}$ that makes the two triangles commute. Then define $h_{n+1}(x) = \theta(n+1)$. We can check that $$g(x) = f\circ h_{n+1}(x) = f(\theta(n+1))$$ by the lower triangle.
		\item or every path $\pi$ of $\alpha$ ends with $x' \xrightarrow{i,1} \partial_i^1(x') = x$ for some $x'$ and $i$, not necessarily unique. In particular, all such $x'$ belong to $X_n$. Define $h_{n+1}(x)$ as $\partial_i^1(h_n(x'))$ for any such $x'$. We must check that $\partial_i^1(h_n(x'))$ is defined. There is a commutative diagram:
		\begin{center}
	\begin{tikzpicture}[scale=1.5]
		\node (A) at (0,0.9) {$B\sigma_{\pi_{-1}}$};
		\node (B) at (1.2,0.9) {$Y$};
		\node (C) at (0,0) {$B\sigma_\pi$};
		\node (D) at (1.2,0) {$Z$};
		\path[->,font=\scriptsize]
		(A) edge node[above]{$h_n\circ\pi_{-1}$} (B)
		(B) edge node[right]{f} (D)
		(C) edge node[below]{$g\circ\pi$} (D)
		(A) edge node[left]{$!$} (C);
		\path[->,font=\scriptsize, dotted]
		(C) edge node[above]{$\theta$} (B);
	\end{tikzpicture}
	\end{center}
	and since $f$ is open, there is a map $\map{\theta}{B\sigma_\pi}{Y}$ that makes the two triangles commute. Then $\theta(n+1)$ is necessarily equal to $\partial_i^1(h_n(x'))$, and the latter is defined. We must then check that the value of $h_{n+1}$ does not depend on the choice of $x'$. Let $\pi_1$ and $\pi_2$ be two paths to $x$, $\pi_s$ ending with $x_s' \xrightarrow{i_s,1} x$. We want to prove that $\partial_i^1(h_n(x_1')) = \partial_i^1(h_n(x_2'))$ (we already know they exist). If $\pi_1 \leftrightsquigarrow_{ch} \pi_2$ (the general case $\pi_1 \sim_{ch} \pi_2$ follows easily), there is $y$ in $X$ and two sequences $(j_{1,1},1)\star\ldots\star(j_{k,1},1)\star(i_1,1) = (j_{1,2},1)\star\ldots\star(j_{k,2},1)\star(i_2,1)$ such that $\pi_s$ ends with $y \xrightarrow{j_{1,s},1} \ldots \xrightarrow{j_{k,s},1} x_s' \xrightarrow{i_s,1} x$. Since $h_n$ is a map of pHDA, $\partial_{j_{k,s}}^1\circ\ldots\circ\partial_{j_{1,s}}^1(h_n(y)) = h_n(x_s')$. So $$\partial_{i_1}^1(x_1') = \partial_{i_1}^1\circ\partial_{j_{k,1}}^1\circ\ldots\circ\partial_{j_{1,1}}^1(h_n(y)) = \partial_{i_2}^1\circ\partial_{j_{k,2}}^1\circ\ldots\circ\partial_{j_{1,2}}^1(h_n(y)) = \partial_{i_2}^1(x_2')$$
	and $h_{n+1}(x)$ is uniquely defined. Next, we must check that $$f(h_{n+1}(x)) = f(\partial_i^1(h_n(x'))) = \partial_i^1(f(h_n(x'))) = \partial_i^1(g(x')) = g(\partial_i^1(x')) = g(x).$$
	\end{itemize}
	
	\item \textbf{$h_{n+1}$ is a morphism of pHDA.} We need to prove that if $x \in X_{n+1}$ and if $\partial_{i_1 < \ldots < i_n}^{\alpha_1, \ldots, \alpha_n}(x)$ is defined in $X_{n+1}$ then $\partial_{i_1 < \ldots < i_n}^{\alpha_1, \ldots, \alpha_n}(h_{n+1}) = h_{n+1}(\partial_{i_1 < \ldots < i_n}^{\alpha_1, \ldots, \alpha_n}(x))$. By definition of a face map being defined in $X_{n+1}$, it is enough to prove that if $x$ is of depth $\leq n+1$, and if $\partial_i^\alpha(x)$ is defined in $X$ and of depth $\leq n+1$, then $h_{n+1}(\partial_i^\alpha(x)) = \partial_i^\alpha(h_{n+1}(x))$. Three cases to consider:
\begin{itemize}
	\item $x$ and $\partial_i^\alpha(x)$ are of depth $\leq n$: then it comes from the fact that $h_n$ is a map of pHDA.
	\item if $x$ is of depth $n$, $\partial_i^\alpha(x)$ of depth $n+1$ and so $\alpha = 0$: in this case, $h_{n+1}(x) = h_n(x)$ and $h_{n+1}(\partial_i^\alpha(x))$ is defined using the first construction. The equation is a consequence of the upper commutative triangle.
	\item if $x$ is of depth $n+1$, $\partial_i^\alpha(x)$ of depth $n$ and so $\alpha = 1$: the equation is exactly the definition of $h_{n+1}(x)$ in the second construction.
\end{itemize}
		
	\end{enumerate}

\end{itemize}

If furthermore we assume that $f$ is a $\ps$-covering, and if we call $h'$ another lift, then let us prove by induction on $n$ that, for all $n$, $h'_n$, the restriction of $h'$ on $X_n$ coincides with $h_n$. 
\begin{itemize}
	\item \textit{Base case.} $h_0$ and $h'_0$ are both the morphisms that maps the initial state of $X$ to the initial state of $Y$.
	\item \textit{Induction case.} It depends on the case used in the construction of $h_{n+1}$:
		\begin{itemize}
			\item In the first case, with the same notations, since $\partial_i^0(x)$ is in $X_n$, $h'_n(\partial_i^0(x)) = h_n(\partial_i^0(x))$ by induction hypothesis. Since $h'_{n+1}$ is a morphism of pHDA, then $h'_n(\partial_i^0(x)) = \partial_i^0(h'_{n+1}(x))$. So $\theta' = \theta_{-1} \xrightarrow{i,0} h'_{n+1}(x)$ is another lift. Since $f$ is a covering, then $\theta = \theta'$, and in particular, $h_{n+1}(x) = h'_{n+1}(x)$.
			\item In the second case, 

\begin{center}			
\begin{tabular}{rclcr}
    $h'_{n+1}(x)$ & $=$ & $h'_{n+1}(\partial_i^1(x'))$ & ~ & by assumption\\
     & $=$ & $\partial_i^1(h'_n(x'))$ & & $h'$ is a morphism of pHDA\\
     & $=$ & $\partial_i^1(h_n(x'))$ & & by induction hypothesis\\
     & $=$ & $h_{n+1}(x)$ & & by construction of $h_{n+1}$
\end{tabular}
\end{center}
		\end{itemize}
\end{itemize}

\subsection*{Proof of Corollary \ref{prop:univ}}

\begin{enumerate}
	\item \textbf{Existence and unicity of $h$.} Since $Y$ is tree and since $g$ is a covering, then by Proposition \ref{prop:cofi}, there is a unique such $h$.
	\item \textbf{$h$ is open.} Let a diagram of the form
\begin{center}
	\begin{tikzpicture}[scale=1.5]
		\node (A) at (0,0.9) {$B\sigma$};
		\node (B) at (1.2,0.9) {$Y$};
		\node (C) at (0,0) {$B\sigma'$};
		\node (D) at (1.2,0) {$Z$};
		\path[->,font=\scriptsize]
		(A) edge node[above]{$x$} (B)
		(B) edge node[right]{$h$} (D)
		(C) edge node[below]{$y$} (D)
		(A) edge node[left]{$p$} (C);
	\end{tikzpicture}
	\end{center}
	
Then we have the following diagram (in plain)
	\begin{center}
	\begin{tikzpicture}[scale=1.5]
		\node (A) at (0,0.9) {$B\sigma$};
		\node (B) at (1.2,0.9) {$Y$};
		\node (C) at (0,0) {$B\sigma'$};
		\node (D) at (1.2,0) {$X$};
		\path[->,font=\scriptsize]
		(A) edge node[above]{$x$} (B)
		(B) edge node[right]{$f = g\circ h$} (D)
		(C) edge node[below]{$g\circ y$} (D)
		(A) edge node[left]{$p$} (C);
		\path[->,font=\scriptsize,dotted]
		(C) edge node[above]{$\theta$} (B);
	\end{tikzpicture}
	\end{center}
	Since $f$ is open, there is $\map{\theta}{B\sigma'}{U(X)}$ with $\theta\circ p = x$ and $g\circ y = g\circ h\circ\theta$. Since $g$ is a covering, $y = h\circ\theta$ and $\theta$ is the lifting we were looking for.
	\item \textbf{$h$ is a covering.} Assume that we have two lifts.
	\begin{center}
		\begin{tikzpicture}[scale=1.5]
		\node (A) at (0,0.9) {$B\sigma$};
		\node (B) at (1.2,0.9) {$Y$};
		\node (C) at (0,0) {$B\sigma'$};
		\node (D) at (1.2,0) {$Z$};
		\path[->,font=\scriptsize]
		(A) edge node[above]{$x$} (B)
		(B) edge node[right]{$h$} (D)
		(C) edge node[below]{$y$} (D)
		(A) edge node[left]{$p$} (C);
		\path[->,font=\scriptsize,dotted]
		(C) edge node[above]{$\theta_1, \theta_2$} (B);
	\end{tikzpicture}
	\end{center}
	Since $h\circ\theta_1 = h\circ\theta_2$, then $f\circ\theta_1 = f\circ\theta_2$. Since $f$ is a covering, then $\theta_1= \theta_2$.
\end{enumerate}

\end{document}